\theoremstyle{plain}
\newtheorem{propcycle}{Proposition}[section]
\newtheorem{bischeraduept}[propcycle]{Proposition}
\newtheorem{pgone}[propcycle]{Proposition}
\newtheorem{lemma}[propcycle]{Lemma}
\newtheorem{pgoneth}[propcycle]{Theorem}
\newtheorem{plzeroth}[propcycle]{Theorem}
\newtheorem{corollary}[propcycle]{Corollary}
\newtheorem{corollary2}[propcycle]{Corollary}
\newtheorem{donskerth}[propcycle]{Theorem}
\theoremstyle{definition}
\newtheorem{cyclic}{Definition}[section]
\newtheorem{cooriented}[cyclic]{Definition}
\newtheorem{bischera}[cyclic]{Definition}
\newtheorem{crossing}[cyclic]{Definition}
\newcommand{\remp}{\textsc{rEmp}\xspace} 
\newcommand{\remps}{\textsc{rEmp}s\xspace} 
\newcommand{\reap}{\textsc{rEap}\xspace} 
\newcommand{\reaps}{\textsc{rEap}s\xspace}
\newcommand{\mG}{\mathcal{G}}
\newcommand{\mV}{\mathcal{V}}
\newcommand{\mE}{\mathcal{E}}
\newcommand{\mM}{\mathcal{M}}
\newcommand{\R}{\mathds{R}}
\DeclareMathOperator{\dd}{d}
\DeclareMathOperator{\e}{e}
\newcommand{\mC}[1]{{E^{(#1)}_N}}
\newcommand{\mc}[1]{{\varepsilon^{(#1)}_N}}
\newcommand{\Mod}[1]{\ (\mathrm{mod}\ #1)}
\begin{document}
\title{Random Euclidean matching problems in one dimension}
\author{Sergio Caracciolo}\email{sergio.caracciolo@mi.infn.it}
\affiliation{Dipartimento di Fisica, University of Milan and INFN, via Celoria 16, I-20133 Milan, Italy}
\author{Matteo D'Achille}\email{matteopietro.dachille@studenti.unimi.it}
\affiliation{Dipartimento di Fisica, University of Milan and INFN, via Celoria 16, I-20133 Milan, Italy}
\author{Gabriele Sicuro}\email{gabriele.sicuro@roma1.infn.it}
\affiliation{Dipartimento di Fisica, Sapienza Universit\`a di Roma, P.le A. Moro 2, I-00185, Rome, Italy}

\date{\today}
\begin{abstract}We discuss the optimal matching solution for both the assignment problem and the matching problem in one dimension for a large class of convex cost functions. We consider the problem in a compact set with the topology both of the interval and of the circumference. Afterwards, we assume the points' positions to be random variables identically and independently distributed on the considered domain. We analytically obtain the average optimal cost in the asymptotic regime of very large number of points $N$ and some correlation functions for a power-law type cost function in the form $c(z)=z^p$, both in the $p>1$ case and in the $p<0$ case. The scaling of the optimal mean cost with the number of points is $N^{-\sfrac{p}{2}}$ for the assignment and $N^{-p}$ for the matching when $p>1$, whereas in both cases it is a constant when $p<0$. 
Finally, our predictions are compared with the results of numerical simulations.
\end{abstract}
\maketitle

\section{Introduction}

After the seminal works of \textcite{Kirkpatrick1983}, \textcite{Orland1985}, and \textcite{Mezard1985}, random optimization problems have been successfully studied using statistical physics techniques, such as the replica trick or the cavity method \cite{Mezard1986a,Krauth1989a}. In a combinatorial optimization problem, we consider a finite set $\mathcal M$ of possible configurations $\mu$, and we associate a cost $E(\mu)\in\R$ to each configuration in $\mathcal M$. The goal is to find the optimal configuration $\mu$ such that $E(\mu)$ is minimized over $\mathcal M$. In a statistical physics approach, an ``inverse temperature'' $\beta$ is introduced, and we can write a partition function
\begin{equation}
Z(\beta)=\sum_{\mu\in\mM}\e^{-\beta E(\mu)},
\end{equation}
in such a way that the optimal cost is recovered as
\begin{equation}
\min_{\mu\in\mathcal M}E(\mu)=-\lim_{\mathclap{\beta\to+\infty}}\frac{\partial\ln Z(\beta)}{\partial \beta}.
\end{equation}
This statistical physics reformulation is particularly powerful when random combinatorial optimization problems are considered. In a random optimization problem, the set $\mathcal M$ depends on some parameters that are supposed to be random. In this case, $\mathcal M$ is therefore an instance of the problem in the space of parameters, and the average properties of the optimal solution are of a certain interest. In particular, denoting by $\overline{\bullet}$ the average over all instances $\mathcal M$,
\begin{equation}
\overline{\min_{\mu\in\mathcal M}E(\mu)}=-\lim_{\mathclap{\beta\to+\infty}}\frac{\partial\overline{\ln Z(\beta)}}{\partial \beta}.
\end{equation}
The average appearing in the previous equation can be tackled using the celebrated \textit{replica trick}, which allowed the derivation of fundamental results for many relevant random combinatorial optimization problems, like random matching problems \cite{Orland1985,Mezard1985,Mezard1987,Mezard1988} or the traveling salesman problem in its random formulation \cite{Orland1985,Mezard1986}.

In this paper, we will study a particular class of random optimization problems, namely random Euclidean matching problems (\remps). In the \remp, a set of $2N$ random points $\Xi\coloneqq \{\mathbf x_{i}\}_{i=1,\dots,2N}$ is given on a certain $d$-dimensional Euclidean domain. We associate a weight $w_{ij}$ to the couple $(\mathbf x_i,\mathbf x_j)$, typically in the form $w_{ij}=c\left(\|\mathbf x_i-\mathbf x_j\|\right)$ for some given function $c$. In the following, we will refer to the 
function $c$ as the cost function of the problem. We search therefore for the partition $\mu$ of $\Xi$ in $N$ sets of two elements such that
\begin{equation}
 E(\mu)=\sum_{\mathclap{(\mathbf x_i,\mathbf x_j)\in\mu}}w_{ij}
\end{equation}
is minimized. The object of interest is the optimal cost averaged over the points' positions,
\begin{equation}
\overline{\min_{\mu}E(\mu)}=\overline{\min_\mu\sum_{\mathclap{(\mathbf x_i,\mathbf x_j)\in\mu}}w_{ij}}.
\end{equation}
In a variation of the problem, called random Euclidean assignment problem (\reap), two sets of $N$ random points $\Xi\coloneqq \{\mathbf x_{i}\}_{i=1,\dots,N}$ and $\Upsilon\coloneqq \{\mathbf y_{i}\}_{i=1,\dots,N}$ are given, and we associate a weight $w_{ij}$ to the couple $(\mathbf x_i,\mathbf y_j)$, typically in the form $w_{ij}=c\left(\|\mathbf x_i-\mathbf y_j\|\right)$. In this case, only points of different sets can be coupled, and we search therefore for a permutation $\pi\in\mathcal S_N$ of $N$ elements such that
\begin{equation}
E(\pi)=\sum_{\mathclap{i=1}}^Nw_{i\,\pi(i)}
\end{equation}
is minimized. As before, the main object of interest is the average optimal cost,
\begin{equation}
 \overline{\min_{\pi\in\mathcal S_N}E(\pi)}=\overline{\min_{\pi\in\mathcal S_N}\sum_{\mathclap{i=1}}^N w_{i\,\pi(i)}}.
\end{equation}

A large physics literature exists about the properties of \remps and \reaps. In their seminal work, \textcite{Mezard1985} proposed a mean-field approximation of both the $d$-dimensional \remp and the $d$-dimensional \reap, obtaining the solution in the thermodynamical limit \cite{Mezard1985}. The finite-size corrections to the average optimal cost in the mean-field model have been also evaluated \cite{Mezard1987,*Ratieville2002,*Caracciolo2017}. Using the replica approach, it has been later shown that finite $d$ corrections can be included in the mean-field solution performing a diagrammatic expansion \cite{Mezard1988,*Lucibello2017}, whose resummation is, however, a challenging task. As an alternative to the classical methods, in Ref.~\cite{Caracciolo2015} a field theoretical approach has been proposed for the study of the so-called quadratic \reap in any dimension. The new approach was based on the deep connections with optimal transportation theory \cite{Villani2008,Ambrosio2016,bobkov2014}, and allowed the authors to give an exact analytical prediction for the average optimal cost for dimension $d=2$ and its finite-size corrections for dimension $d>2$ \cite{Caracciolo2014,Caracciolo2015}. Moreover, the \reap in one dimension on a compact domain has already been solved in the case of convex and increasing cost function~\cite{McCann1999, Boniolo2012, Caracciolo2014, Caracciolo2014c, Caracciolo2015, Caracciolo2015s}. Despite the numerous results on mean-field models, many properties of the corresponding Euclidean models remain to be investigated and, moreover, few exact results are available in finite dimension. The availability of analytical solutions is therefore of great importance to check the validity of the approximate results obtained correcting the mean-field theories, and the assumptions adopted to obtain them.

In the present work, we will restrict ourselves to the \remp and the \reap in one dimension, with the purpose of extending the analytical results of some previous investigations. Some results in the case of more general supports, as noncompact supports, and other general properties of the convergence rate can be found in the review of \textcite{bobkov2014}, in which the problem is treated in the context of optimal transportation theory. Despite their simple formulation, the one-dimensional \reap and the one-dimensional \remp have in general a nontrivial analytical treatment and they are related to many different problems in mathematics, physics, and biology. In Refs.~\cite{Boniolo2012,Caracciolo2014c} it has been shown that the optimal assignment in the case of a strictly increasing cost function can be interpreted as a stochastic process on a compact support, namely the Brownian bridge process, and therefore as a quadratic field theory \cite{Caracciolo2015}. On the other hand, if $c(z)$ is concave, it can be easily proved that, independently from the distribution adopted to generate the points, the optimal assignment is always \textit{planar} \cite{McCann1999}, in a sense that will be specified below. The relevance of planar matching configurations both in physics and in biology is due to the fact that they appear in the study of the secondary structure of single stranded DNA and RNA chains in solution \cite{Higgs2000}. These chains tend to fold in a planar configuration, in which complementary nucleotides are matched. The secondary structure of a RNA strand is therefore a problem of optimal matching on the line, with the restriction on the optimal configuration to be planar \cite{Orland2002,Nechaev2013}. The statistical physics of the folding process is not trivial and it has been investigated by many different techniques \cite{Bundschuh2002,Muller2003rna}, also in presence of disorder \cite{Higgs1996,Marinari2002,Bundschuh2002,Nechaev2013}. One-dimensional Euclidean matching problems  can be adopted therefore as toy models for different processes, depending on the properties of the cost function $c$, namely constrained Brownian processes for convex strictly increasing cost function, and folding processes for concave cost function. 

\paragraph*{The model.} Before proceeding further, let us recall some standard definitions of matching theory and rigorously specify our model. Given a generic graph $\mG=(\mV,\mE)$, with $\mV$ set of vertices and 
$\mE\subseteq\mV\times\mV$ set of edges, a \textit{matching} $\mu\subseteq\mE$ 
on $\mG$ is a subset of edges of $\mG$ such that, given two edges in $\mu$, they 
do not share a common vertex \cite{lovasz2009matching}. A matching $\mu$ is said 
to be \textit{maximal} if, for any $e\in\mE\setminus\mu$, $\mu\cup\{e\}$ is no 
longer a matching. Denoting by $|\mu|$ the cardinality of $\mu$, we define 
$\nu(\mG)\coloneqq\max_{\mu}|\mu|$ the \textit{matching number} of $\mG$, and we 
say that $\mu$ is \textit{maximum} if $|\mu|=\nu(\mG)$.  A perfect
matching (or $1$-factor) is a matching that matches all vertices
of the graph. Every perfect matching is also maximum and hence maximal. A perfect matching is a
minimum-size edge cover. We will denote by $\mM$ 
the set of perfect matchings.

Let us suppose now that a weight $w_e\geq 0$ is assigned to each edge $e\in\mE$ of the graph $\mG$. We can associate to each perfect matching $\mu$ a total cost
\begin{equation}\label{cost}
E(\mu)\coloneqq\sum_{e\in\mu}w_e,
\end{equation}
and a mean cost per edge
\begin{equation}\label{costedge}
\varepsilon(\mu)\coloneqq\frac{1}{\nu(\mG)}\sum_{e\in\mu}w_e.
\end{equation}
In the (weighted) matching problem we search for the perfect matching $\mu$ such that the total cost in Eq.~\eqref{cost} is minimized, i.e., the \textit{optimal} matching $\mu^\text{opt}$ is such that
\begin{equation}
E(\mu^\text{opt})=\min_{\mu\in\mM}E(\mu).
\end{equation}
Once the weights are assigned, the problem can be solved using efficient algorithms available in the literature \cite{Papadimitriou1998,*Kuhn,*Munkres1957,*Edmonds1965,*Jonker1987,*jungnickel2005, mezard2009information}. If the graph $\mathcal G$ is a bipartite graph, the matching problem is said to be an assignment problem.

In random matching problems, the costs $\{w_e\}_e$ are random quantities. In this case, the typical properties of the optimal solution are of a certain interest, and in particular the average optimal cost, $\overline{E}\coloneqq\overline{\min_{\mu\in\mM}E(\mu)}$, where we have denoted by $\overline{\bullet}$ the average over all possible instances of the costs set. The simplest way to introduce randomness in the problem is to consider the weights $\{w_e\}_e$ independent and identically distributed random variables \cite{Mezard1985,mezard1987spin}.  In random Euclidean matching problems, the graph $\mG$ is 
supposed to be embedded in a $d$-dimensional Euclidean domain 
$\Lambda\subseteq\R^d$ through an embedding function $\boldsymbol\Phi$, in such a way that 
each vertex $v\in\mV$ of the graph is associated to a random Euclidean point $v\mapsto 
\boldsymbol\Phi(v)\in\Lambda$. In this case, the cost $w_e$ of the edge 
$e=(u,v)$ is typically a function of the distance of the images of its 
corresponding endpoints in $\Lambda$, i.e., 
$w_{e}=c\left[\|\boldsymbol\Phi(u)-\boldsymbol\Phi(v)\|\right]$ \cite{Mezard1988,Sicuro2017}. Random Euclidean matching problems are usually more difficult to investigate, due to the presence of Euclidean correlations among the weights. The purely random case with independent edge weights plays the role of mean-field approximation of the Euclidean case \cite{Mezard1988}.

In the present paper, we will work on a specific toy model in one dimension. We will consider the case in which $\mathcal G=\mathcal K_{2N}$ complete graph with $2N$ vertices for the \remp, and $\mathcal G=\mathcal K_{N,N}$ complete bipartite graph with two partitions of the same size $N$ for the \reap. We will assume the points to be independently and uniformly generated both on the compact interval $\Lambda=[0,1]$ and on the unit circumference, and we will introduce a general class of cost functions, called here $\mathcal C$-functions, that determine a specific structure for the optimal matching solution for a given instance in the \reap. More precisely, once the two sets of points are labeled in increasing order according to their position along the line, the optimal assignment can be respresented as a periodic label shift. As particular application, we will consider the cost function $c(z)=z^p$. We will consider the finite size corrections to the average optimal cost for $p>1$ that had not been computed previously and we will also study the case $p<0$, corresponding to a long-range optimal assignment. The results obtained for the \reap will be extended to the \remp. The cost function $c(z)=z^p$ is of particular interest because the optimal assignment can be interpreted as a Gaussian stochastic process for $p>1$ and, as we will show, for $p<0$. On the other hand, for $0<p<1$ the solution is planar, and therefore the \reap is a model for the folding process mentioned above, the single parameter $p$ controlling the transition between different behaviors.

The paper is therefore subdivided in such a way to present in Sec.~\ref{sec:assignment} the case of the \reap and in Sec.~\ref{sec:matching} the case of the \remp. Finally, we will give our conclusions in Sec.~\ref{sec:conclusioni}.

\section{The random Euclidean assignment problem}\label{sec:assignment}
With reference to the definitions given in the Introduction, in the 
\textit{assignment problem}, we assume $\mG=\mathcal K_{N,N}$, the complete 
bipartite graph in which $\mV=\mV_1\cup\mV_2$, $\mV_1\cap\mV_2=\emptyset$, 
$|\mV_1|=|\mV_2|=N$. In the \reap in one 
dimension, we consider two sets of points $\Xi_N\coloneqq\{x_i\}_{i=1,\dots,N}$ 
and $\Upsilon_N\coloneqq \{y_j\}_{j=1,\dots,N}$, independently generated with 
uniform distribution density on $\Lambda=[0,1]$; we associate then the points in 
$\Xi_N$, respectively, $\Upsilon_N$, to the vertices in $\mV_1$, respectively, 
$\mV_2$. 
We will assume that the points are labeled in such a way that 
\begin{subequations}
\begin{align}
&0\leq x_1<x_2<\dots<x_{N}\leq 1,\\&0\leq y_1<y_2<\dots<y_{N}\leq 1.
\end{align}
\end{subequations}
A maximum matching $\mu\subset\mV_1\times\mV_2$ uniquely corresponds to a 
permutation $\pi\in\mathcal S_N$ of the $N$ elements 
$[N]\coloneqq\{1,\dots,N\}$, in such a way that, if $(i,j)\in\mu$, $j=\pi(i)$. 
We associate to $\pi$ a \textit{matching cost} and a \textit{mean cost per edge}, 
respectively, given by
\begin{subequations}
\begin{align}\label{costoass}
E_N(\pi)&\coloneqq \sum_{i=1}^Nw(x_i,y_{\pi(i)}),\\
\varepsilon_N(\pi)&\coloneqq\frac{1}{N}E_N(\pi).
\end{align}\end{subequations}
In the previous expressions, the cost function $w(x_i,y_j)$ depends on the points' positions $x_i$ and $y_j$. As anticipated, we will restrict ourselves to cost functions in the form \footnote{We will assume that $c(z)$ is finite almost everywhere on $\Lambda$.}
\begin{equation}\label{costoc}
w(x_i,y_j)\coloneqq c\left(|x_i-y_j|\right),\quad c\colon \Lambda\to\R.
\end{equation}
We are interested in the asymptotic behavior for $N\gg 1$ of the \textit{average} optimal (mean) cost
\begin{equation}
\varepsilon_N\coloneqq\overline{\min_{\pi}\varepsilon_N(\pi)},
\end{equation}
where we have denoted by $\overline{\bullet}$ the average over the points' 
positions. We will show that the typical properties of the solution strongly depend on the properties of the cost function $c(z)$. To study this dependency, we will assume in particular \begin{equation}\label{costocp} c(z)\coloneqq z^p,\quad p\in\R,\quad z\in\Lambda.\end{equation}As we will show below, the properties of the optimal solution will depend on the chosen value of $p$.

\subsection{On the structure of optimal matching}
We shall here discuss some general features of the optimal solution in a given instance at variance with the cost function.
\subsubsection{Preliminaries} Let us first introduce some preliminary definitions and results.
\begin{cyclic} Given a set of $n$ elements, we say that a permutation $\pi\in\mathcal S_n$ of $n$ elements belongs to $\mathcal C_n\subseteq\mathcal S_n$ if an integer number $k$ exists such that $0\leq k< n$ and
\begin{equation}
\pi(i)=i+k\Mod n,\quad i=1,\dots,n.
\end{equation}
\end{cyclic}
Observe that, for $k\neq 0$, a permutation $\pi\in\mathcal C_n$ is a cyclic permutation having one cycle only. For $k=0$ we have the identity permutation, which has $n$ cycles. The set $\mathcal C_n\subseteq\mathcal S_n$ is an Abelian subgroup corresponding to the cyclic group of the $n$ proper rotations in the plane which leave a regular polygon with $n$ vertices invariant. For $n=2$, $\mathcal C_2=\{(1)(2),(1,2)\}=\mathcal S_2$. The group $\mathcal C_3$ coincides with the alternating group of even permutations $\mathcal A_3$. 

\begin{cooriented}Given a triple of three integers $(i, j, k)\in[n]^3$, $[n]\coloneqq \{1,\dots,n\}$, we say that the ordered triple of integers $(a,b,c)$ is {\em cyclically co-oriented} with it when an even permutation $\pi\in\mathcal C_3$ exists such that $(\pi(a), \pi(b), \pi(c))$ is in the same order of $(i, j, k)$ respect to the order relation of the integers.
\end{cooriented}
The following Proposition, which appeared in Ref.~\cite{Boniolo2012}, will be fundamental in the following.
\begin{propcycle}\label{PropA1}
A permutation $\pi \in {\cal S}_n$ belongs to $\mathcal{C}_n$ if, and only if, for any triple $(i,j,k)$, the corresponding triple $(\pi(i), \pi(j), \pi(k))$ is cyclically co-oriented with $(i,j,k)$.
\end{propcycle}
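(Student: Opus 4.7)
The plan is to prove the two implications separately, using the group structure of $\mathcal{C}_n$ to reduce the nontrivial direction to a normalized case. For the forward implication, I fix $\pi(i) = i + k \Mod{n}$ for some $0\leq k < n$ and pick an arbitrary triple $(i,j,l)$ with $i<j<l$ (assuming this order costs no generality, since cyclic co-orientation is invariant under cyclic permutation of a triple's entries). A short case analysis on how many of the shifted values $i+k,\, j+k,\, l+k$ exceed $n$ (only four cases: zero, one, two, or three) shows that the image triple $(\pi(i),\pi(j),\pi(l))$ can always be put into strictly increasing order by a single cyclic shift of its entries, so it is cyclically co-oriented with $(i,j,l)$. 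This is the explicit statement that rotations of $\Z/n\Z$ preserve the cyclic order of triples.

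For the converse, I normalize using the forward direction: setting $s \coloneqq \pi(1)-1 \Mod{n}$ and $\tilde\pi(i) \coloneqq \pi(i) - s \Mod{n}$, the permutation $\tilde\pi$ also preserves cyclic co-orientation of every triple (as a composition of two such maps) and satisfies $\tilde\pi(1)=1$. It therefore suffices to show that $\tilde\pi$ is the identity. Applying the hypothesis to the triples $(1,i,j)$ with $1<i<j\leq n$, which are already in increasing order, forces $(1,\tilde\pi(i),\tilde\pi(j))$ to have the same cyclic orientation. A direct check shows that a triple of the form $(1,a,b)$ with distinct $a,b\in\{2,\dots,n\}$ has positive cyclic orientation if and only if $a<b$; hence $\tilde\pi(i)<\tilde\pi(j)$ whenever $i<j$. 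Thus $\tilde\pi$ is strictly increasing on $\{2,\dots,n\}$ and, being a permutation fixing $1$, must be the identity. This yields $\pi(i) = i + s \Mod{n}$ and therefore $\pi\in\mathcal{C}_n$.

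The main subtle point is translating between cyclic orientation, an intrinsically circular notion, and the linear order on the integers in terms of which triples are compared. Everything reduces to the observation that on $\{1,\dots,n\}$ a triple of the form $(1,a,b)$ has positive cyclic orientation if and only if $a<b$; once this base case is isolated, the rest of the argument is bookkeeping about compositions with rotations.
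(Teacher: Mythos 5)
Your proof is correct, but your argument for the nontrivial (``only if'') direction is genuinely different from the paper's. The paper argues by contradiction: if $\pi(i)-\pi(j)\neq i-j \Mod n$ for some pair, then the cyclic intervals $I=(i,\dots,j)$ and $J=(\pi(i),\dots,\pi(j))$ have different cardinalities, so some $k$ lies in one but has its image outside the other, and that $k$ witnesses a triple that is not co-oriented. You instead normalize: composing $\pi$ with the inverse rotation $\rho^{-s}$ (legitimate because the forward direction shows rotations preserve cyclic orientation, and co-orientation is an equivalence relation with only two classes, hence transitive) reduces everything to a permutation $\tilde\pi$ fixing $1$, and the anchored triples $(1,i,j)$ then force $\tilde\pi$ to be monotone on $\{2,\dots,n\}$, hence the identity. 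Your route buys a fully constructive argument with a clean base case --- the observation that $(1,a,b)$ is positively oriented iff $a<b$ --- and it sidesteps the one terse step in the paper's proof, namely why the existence of a $k$ with $k\in I$, $\pi(k)\notin J$ actually breaks co-orientation of $(i,j,k)$ and its image. The paper's route, in exchange, never invokes the forward direction inside the converse and makes the ``interval-length'' obstruction explicit, which is the picture one wants when thinking of $\mathcal C_n$ as the rotations of a necklace. Both proofs implicitly assume triples have distinct entries, as they must for cyclic orientation to be defined.
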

\begin{proof}If $\pi \in \mathcal C_n$ all the triples $(\pi(i), \pi(j), \pi(k))$ are cyclically co-oriented with $(i,j,k)$, due to the fact that all permutations in $\mathcal C_n$ are ordering preserving. 

For the converse, let us assume that all triples are cyclically co-oriented with their image through the permutation $\pi$. Observe now that if for any couple $(i,j)\in [n]^2$ we have $\pi(i) - \pi(j) = i - j \Mod n$, then $\pi\in\mathcal C_n$. To prove this statement, we proceed by contradiction and we assume that there exists at least a couple $(i,j) \in [n]^2$ such that $\pi(i) - \pi(j) \neq i - j \Mod n$. It follows that the sequences 
\begin{subequations}
\begin{align}
I&= (i, i+1 \Mod n, \dots,j),\\
J&=(\pi(i), \pi(i)+1 \Mod n, \dots,\pi(j))
\end{align}
\end{subequations}
have not the same cardinality, and therefore there must exist a $k$ such that either $k \in I$ and  $\pi(k) \not\in  J$  or $k\not\in I$ and $\pi(k) \in J$. By consequence, the triples $(i, j, k)$ and $(\pi(i), \pi(j), \pi(k))$ are not cyclically co-oriented, that is in contradiction with the hypothesis and therefore the theorem is proved. 
\end{proof}

\subsubsection{$\mathcal C$-functions}
As anticipated, an assignment between two sets of $N$ points can be uniquely associated to a permutation of $N$ elements $\pi\in\mathcal S_N$. We will show below that the optimal permutations belongs to $\mathcal C_N\subseteq \mathcal S_N$ if the cost function $c(z)$ appearing in Eq.~\eqref{costoc} satisfies the following property.
\begin{bischera}
We shall say that a function $f:[0,1]\to\R$ is a {\em $\mathcal C$-function} if, given $0< z_1 < z_2 < 1$, for any $\eta\in (0,1-z_2)$, $\hat\eta\in (z_2,1)$
\begin{subequations}\label{bischera}
\begin{align}
f(z_2)-f(z_1)&\leq f(\eta+z_2)-f(\eta+z_1),\label{prima} \\
f(z_2)-f(z_1)&\leq f(\hat \eta-z_2)-f(\hat \eta-z_1).\label{seconda}
\end{align}\end{subequations}
\end{bischera}
Eq.~\eqref{prima} implies that $\Psi_\eta(z)=f(\eta+z)-f(z)$ is an increasing function in the interval $(0,1-\eta)$ for any value of $\eta\in(0,1)$. Moreover, if $f$ is continuous, Eq.~\eqref{prima} is equivalent to convexity (see Appendix \ref{app:convexity}).

Eq.~\eqref{seconda} implies that the function $\Phi_\eta(z)\coloneqq f(\eta-z)-f(z)$ is increasing in the interval $(0,\eta)$ for any value of $\eta\in(0,1)$. If $f$ is differentiable, this fact can be written as
\begin{subequations}\begin{equation}
 f'(\eta-z)+f'(z)\leq 0,\quad z\in(0,\eta),\quad \eta\in(0,1),
\end{equation}
which for $\eta\to 1$ becomes
\begin{equation}
 f'(1-z)+f'(z)\leq 0.
\end{equation}\label{decreasing}\end{subequations}
This implies, for example, that the convex function
\begin{equation}
 f_\alpha(x)=\left(x-\alpha\right)^2,\quad\alpha\in\R,
\end{equation}
is a $\mathcal C$-function on $\Lambda$ for $\alpha\geq\sfrac{1}{2}$ only.

\subsubsection{Optimal matching on a segment} Let us now discuss the structure of the optimal assignment on the line. We start with the following Definition to fix our nomenclature.
\begin{crossing}[Crossing and planar matching] Let us consider two sets of points $\Xi_N=\{x_i\}_{i=1,\dots,N}$ and $\Upsilon_N=\{y_i\}_{i=1,\dots,N}$ on the interval $\Lambda\coloneqq [0,1]$ and let us assume that they are labeled in such a way that if $i<j$ then $x_i<x_j$ and $y_i<y_j$. Then a matching between $\Xi_N$ and $\Upsilon_N$ is said to be \textit{planar}, or \textit{non-crossing} if, given the corresponding permutation $\pi$ and any two pairs of matched points $(x_i,y_{\pi(i)})$ and $(x_j,y_{\pi(j)})$, $i<j$, the corresponding intervals are either disjoint, $x_i<y_{\pi(i)}<x_j<y_{\pi(j)}$, or nested, $x_i<x_j<y_{\pi(j)}<y_{\pi(i)}$. The matching is otherwise said to be \textit{crossing}.
\end{crossing}
From the pictorial point of view, drawing the points on a rightward oriented horizontal line, in a planar matching it is always possible to draw semi-arcs in the upper semi-plane joining the couples of matched points which do not intersect, e.g., 
\begin{center}
\begin{tikzpicture}[scale=0.6]
\draw[style=help lines,line width=1pt,black] (0.5,0) to (8.5,0);
\node[draw,circle,inner sep=1.5pt,fill=black,text=white,label=below:{\footnotesize $x_1$}] (x1) at (1,0) {};
\node[draw,circle,inner sep=1.5pt,fill=black,text=white,label=below:{\footnotesize $x_2$}] (x2) at (2,0) {};
\node[draw,circle,inner sep=1.5pt,fill=black,text=white,label=below:{\footnotesize $x_3$}] (x3) at (4,0) {};
\node[draw,circle,inner sep=1.5pt,fill=black,text=white,label=below:{\footnotesize $x_4$}] (x4) at (6,0) {};
\node[draw,circle,inner sep=1.5pt,fill=white,label=below:{\footnotesize $y_1$}] (y1) at (3,0) {};
\node[draw,circle,inner sep=1.5pt,fill=white,label=below:{\footnotesize $y_2$}] (y2) at (5,0) {};
\node[draw,circle,inner sep=1.5pt,fill=white,label=below:{\footnotesize $y_3$}] (y3) at (7,0) {};
\node[draw,circle,inner sep=1.5pt,fill=white,label=below:{\footnotesize $y_4$}] (y4) at (8,0) {};
\draw[line width=1pt,gray] (x1) to[bend left=90] (y4);
\draw[line width=1pt,gray] (x2) to[bend left=90] (y2);
\draw[line width=1pt,gray] (x3) to[bend right=90] (y1);
\draw[line width=1pt,gray] (x4) to[bend left=90] (y3);
\end{tikzpicture}\end{center}
It is well known that if the cost function $c$ in Eq.~\eqref{costoc} is concave the optimal matching configuration is planar \cite{McCann1999,Orland2002,Nechaev2013}. 

In the following, we will restrict ourselves to two classes of convex cost functions, namely $\mathcal C$-functions and strictly increasing convex functions. We can start studying in detail the matching problem for $N=2$, considering two white points and two black points on the line. We can assume, without loss of generality, that the first point along the line is black. There are therefore $3$ possible orderings of the points, namely
\begin{center}
\begin{tikzpicture}[scale=0.6]
\draw[style=help lines,line width=1pt,black] (0.5,0) to (4.5,0);
\node[draw,circle,inner sep=1.5pt,fill=black,text=white,label=below:{\footnotesize $x_1$}] (x1) at (1,0) {};
\node[draw,circle,inner sep=1.5pt,fill=black,text=white,label=below:{\footnotesize $x_2$}] (x2) at (2,0) {};
\node[draw,circle,inner sep=1.5pt,fill=white,label=below:{\footnotesize $y_1$}] (y1) at (3,0) {};
\node[draw,circle,inner sep=1.5pt,fill=white,label=below:{\footnotesize $y_2$}] (y2) at (4,0) {};
\node (c) at (2.5,0.75) {\footnotesize A};
\end{tikzpicture}\quad
\begin{tikzpicture}[scale=0.6]
\draw[style=help lines,line width=1pt,black] (0.5,0) to (4.5,0);
\node[draw,circle,inner sep=1.5pt,fill=black,text=white,label=below:{\footnotesize $x_1$}] (x1) at (1,0) {};
\node[draw,circle,inner sep=1.5pt,fill=black,text=white,label=below:{\footnotesize $x_2$}] (x2) at (3,0) {};
\node[draw,circle,inner sep=1.5pt,fill=white,label=below:{\footnotesize $y_1$}] (y1) at (2,0) {};
\node[draw,circle,inner sep=1.5pt,fill=white,label=below:{\footnotesize $y_2$}] (y2) at (4,0) {};
\node (c) at (2.5,0.75) {\footnotesize B};
\end{tikzpicture}\quad
\begin{tikzpicture}[scale=0.6]
\draw[style=help lines,line width=1pt,black] (0.5,0) to (4.5,0);
\node[draw,circle,inner sep=1.5pt,fill=black,text=white,label=below:{\footnotesize $x_1$}] (x1) at (1,0) {};
\node[draw,circle,inner sep=1.5pt,fill=black,text=white,label=below:{\footnotesize $x_2$}] (x2) at (4,0) {};
\node[draw,circle,inner sep=1.5pt,fill=white,label=below:{\footnotesize $y_1$}] (y1) at (2,0) {};
\node[draw,circle,inner sep=1.5pt,fill=white,label=below:{\footnotesize $y_2$}] (y2) at (3,0) {};
\node (c) at (2.5,0.75) {\footnotesize C};
\end{tikzpicture}
\end{center}
Each configuration allows two possible matchings, namely $\pi(i)=i$ and $\pi(i)=i+1\pmod 2$ for $i=1,2$. The following Propositions hold.
\begin{bischeraduept}\label{PropA2}
Given the assignment problem on the interval $\Lambda$ with $N=2$, if the cost function $c(z)\colon\Lambda\to\R$ appearing in Eq.~\eqref{costoc} is a $\mathcal C$-function, then the optimal matching is the crossing one, whenever a crossing matching is available.
\end{bischeraduept}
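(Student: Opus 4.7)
The plan is a short case analysis. Up to the convention that $x_1$ is leftmost, the four points admit three distinguishable orderings, namely configurations A, B, C of the figure above. Inspection shows that a crossing matching is available exactly in A (where the identity matching $\pi(i)=i$ crosses, while the swap is nested) and in C (where the swap crosses, while the identity is disjoint); configuration B has no crossing matching and so the statement is vacuously true there. The conceptual backbone of the proof is the observation that the two defining inequalities \eqref{prima} and \eqref{seconda} of a $\mathcal{C}$-function correspond, via a suitable parametric substitution, one-to-one to the two configurations A and C in which the statement has content.

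For configuration A, I would set $z_1\coloneqq y_1-x_2$, $z_2\coloneqq y_1-x_1$, and $\eta\coloneqq y_2-y_1$, so that $0<z_1<z_2<1$ and $\eta\in(0,1-z_2)$. The identity (crossing) matching then pairs intervals of lengths $z_2$ and $z_1+\eta$, whereas the swap (non-crossing) matching pairs intervals of lengths $z_1$ and $z_2+\eta$. The claim that the crossing matching is not more expensive therefore reads
\[c(z_2)-c(z_1)\leq c(z_2+\eta)-c(z_1+\eta),\]
which is exactly inequality \eqref{prima} applied to $f=c$.

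For configuration C I would analogously set $z_1\coloneqq y_1-x_1$, $z_2\coloneqq y_2-x_1$, and $\hat\eta\coloneqq x_2-x_1$, so that $0<z_1<z_2<\hat\eta\leq 1$. Now the identity (non-crossing) matching pairs intervals of lengths $z_1$ and $\hat\eta-z_2$, whereas the swap (crossing) matching pairs intervals of lengths $z_2$ and $\hat\eta-z_1$. The claim then reads
\[c(z_2)-c(z_1)\leq c(\hat\eta-z_2)-c(\hat\eta-z_1),\]
which is precisely inequality \eqref{seconda}. The only small nuisance is the degenerate boundary $\hat\eta=1$ (i.e., $x_1=0$ and $x_2=1$), which is accommodated by extending \eqref{seconda} to the closure of its hypothesis by continuity. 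Beyond this trivial point, there is no genuine obstacle, only bookkeeping of arc lengths; indeed, the $\mathcal{C}$-function axioms have been tailored so as to make each of the two possible crossing scenarios immediately reducible to one of them.
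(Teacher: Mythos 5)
Your proof is correct and follows essentially the same route as the paper's: a case analysis over the three orderings, with configuration \textsc{b} dismissed as having no crossing matching, configuration \textsc{a} handled by inequality \eqref{prima} with $\eta=y_2-y_1$, and configuration \textsc{c} handled by inequality \eqref{seconda}. Your explicit substitution $z_1=y_1-x_1$, $z_2=y_2-x_1$, $\hat\eta=x_2-x_1$ in case \textsc{c} is in fact the cleaner reading of how \eqref{seconda} must be applied there (the paper's displayed inequality for that case repeats the one from case \textsc{a}), and your remark about the boundary value $\hat\eta=1$ is a harmless degenerate point.
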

\begin{proof}The proof of the Proposition is straightforward. In the $N=2$ case only two of the three possible configurations allow a crossing solution, namely the configuration \textsc{a} and the configuration \textsc{c}. The configuration \textsc{a} allows the two possible matchings
\begin{center}
\begin{tikzpicture}[scale=0.6]
\draw[style=help lines,line width=1pt,black] (0.5,0) to (4.5,0);
\node[draw,circle,inner sep=1.5pt,fill=black,text=white,label=below:{\footnotesize $x_1$}] (x1) at (1,0) {};
\node[draw,circle,inner sep=1.5pt,fill=black,text=white,label=below:{\footnotesize $x_2$}] (x2) at (2,0) {};
\node[draw,circle,inner sep=1.5pt,fill=white,label=below:{\footnotesize $y_1$}] (y1) at (3,0) {};
\node[draw,circle,inner sep=1.5pt,fill=white,label=below:{\footnotesize $y_2$}] (y2) at (4,0) {};
\draw[line width=1pt,gray] (x1) to[bend left=90] (y1);
\draw[line width=1pt,gray] (x2) to[bend left=90] (y2);
\node (c) at (2.5,-1) {\footnotesize A-I};
\end{tikzpicture}\quad 
\begin{tikzpicture}[scale=0.6]
\draw[style=help lines,line width=1pt,black] (0.5,0) to (4.5,0);
\node[draw,circle,inner sep=1.5pt,fill=black,text=white,label=below:{\footnotesize $x_1$}] (x1) at (1,0) {};
\node[draw,circle,inner sep=1.5pt,fill=black,text=white,label=below:{\footnotesize $x_2$}] (x2) at (2,0) {};
\node[draw,circle,inner sep=1.5pt,fill=white,label=below:{\footnotesize $y_1$}] (y1) at (3,0) {};
\node[draw,circle,inner sep=1.5pt,fill=white,label=below:{\footnotesize $y_2$}] (y2) at (4,0) {};
\draw[line width=1pt,gray] (x1) to[bend left=90] (y2);
\draw[line width=1pt,gray] (x2) to[bend left=90] (y1);
\node (c) at (2.5,-1) {\footnotesize A-II};
\end{tikzpicture}
\end{center}
With reference to the picture above,
\begin{subequations}
\begin{align}
E_{\text{I}}^{\textsc{a}}&=c(y_1-x_1)+c(y_2-x_2),\\
E_{\text{II}}^{\textsc{a}}&=c(y_2-x_1)+c(y_1-x_2).
\end{align}\end{subequations}
But, $c$ being a $\mathcal C$-function,
\begin{equation}
c(y_1-x_1)-c(y_1-x_2)\leq c(y_2-x_1)-c(y_2-x_2),
\end{equation}
where we have used Eq.~\eqref{prima} with $\eta=y_2-y_1$, and therefore we have that $E_{\text{I}}^{\textsc{a}}\leq E_{\text{II}}^{\textsc{a}}$. Observe that this result for the \textsc{a} case holds for any continuous convex function $c(z)$ on $\Lambda$. Similarly, the configuration \textsc{c} allows the two possible matchings
\begin{center}
\begin{tikzpicture}[scale=0.6]
\draw[style=help lines,line width=1pt,black] (0.5,0) to (4.5,0);
\node[draw,circle,inner sep=1.5pt,fill=black,text=white,label=below:{\footnotesize $x_1$}] (x1) at (1,0) {};
\node[draw,circle,inner sep=1.5pt,fill=black,text=white,label=below:{\footnotesize $x_2$}] (x2) at (4,0) {};
\node[draw,circle,inner sep=1.5pt,fill=white,label=below:{\footnotesize $y_1$}] (y1) at (2,0) {};
\node[draw,circle,inner sep=1.5pt,fill=white,label=below:{\footnotesize $y_2$}] (y2) at (3,0) {};
\draw[line width=1pt,gray] (x1) to[bend left=90] (y1);
\draw[line width=1pt,gray] (x2) to[bend right=90] (y2);
\node (c) at (2.5,-1) {\footnotesize C-I};
\end{tikzpicture}\quad 
\begin{tikzpicture}[scale=0.6]
\draw[style=help lines,line width=1pt,black] (0.5,0) to (4.5,0);
\node[draw,circle,inner sep=1.5pt,fill=black,text=white,label=below:{\footnotesize $x_1$}] (x1) at (1,0) {};
\node[draw,circle,inner sep=1.5pt,fill=black,text=white,label=below:{\footnotesize $x_2$}] (x2) at (4,0) {};
\node[draw,circle,inner sep=1.5pt,fill=white,label=below:{\footnotesize $y_1$}] (y1) at (2,0) {};
\node[draw,circle,inner sep=1.5pt,fill=white,label=below:{\footnotesize $y_2$}] (y2) at (3,0) {};
\draw[line width=1pt,gray] (x1) to[bend left=90] (y2);
\draw[line width=1pt,gray] (x2) to[bend right=90] (y1);
\node (c) at (2.5,-1) {\footnotesize C-II};
\end{tikzpicture}
\end{center}
with corresponding costs
\begin{subequations}
\begin{align}
E_{\text{I}}^{\textsc{c}}&=c(y_1-x_1)+c(x_2-y_2),\\
E_{\text{II}}^{\textsc{c}}&=c(y_2-x_1)+c(x_2-y_1).
\end{align}\end{subequations}
Again, $c$ being a $\mathcal C$-function,
\begin{equation}
c(y_1-x_1)-c(y_1-x_2)\leq c(y_2-x_1)-c(y_2-x_2),
\end{equation}
where we have used Eq.~\eqref{seconda} with $\eta=y_2-x_1$, and therefore $E_{\text{II}}^{\textsc{c}}\leq E_{\text{I}}^{\textsc{c}}$. This completes the proof.
\end{proof}

\begin{pgone}[Convex increasing function]Given the assignment problem on the interval $\Lambda$, if the cost function $c(z)\colon\Lambda\to\R$ appearing in Eq.~\eqref{costoc} is a strictly convex increasing function, then the optimal permutation in the $N=2$ case is the identity permutation $\pi(i)=i$, $i=1,2$.\label{PropA3}
\end{pgone}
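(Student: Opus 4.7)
The plan is to mirror the proof of Proposition~\ref{PropA2} and reduce the $N=2$ statement to a finite case analysis on the relative ordering of the four points $x_1,x_2,y_1,y_2$. Up to swapping the roles of the two colors (which leaves $c(|x-y|)$ invariant), there are three orderings to consider, denoted as in Proposition~\ref{PropA2} by \textsc{a} ($x_1<x_2<y_1<y_2$), \textsc{b} ($x_1<y_1<x_2<y_2$), and \textsc{c} ($x_1<y_1<y_2<x_2$). In each configuration one compares the cost $E_I$ of the identity matching $\pi(i)=i$ with the cost $E_{II}$ of the swap $\pi=(1\,2)$, aiming at $E_I\leq E_{II}$ with strict inequality under strict convexity.

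Configuration~\textsc{a} is already handled by the proof of Proposition~\ref{PropA2}: the inequality $E_I^\textsc{a}\leq E_{II}^\textsc{a}$ there relied only on Eq.~\eqref{prima} with $\eta=y_2-y_1$, which is equivalent to convexity of $c$ (see Appendix~\ref{app:convexity}) and does not require monotonicity; strict convexity makes the inequality strict.

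The configurations \textsc{b} and \textsc{c} fall outside the scope of Proposition~\ref{PropA2}, because its step for \textsc{c} used Eq.~\eqref{seconda}, which typically fails for a convex strictly increasing $c$. I would handle them by parametrizing the three consecutive gaps between the four ordered points and using convexity together with monotonicity. In \textsc{c}, setting $\alpha\coloneqq y_1-x_1$, $\beta\coloneqq y_2-y_1$, $\gamma\coloneqq x_2-y_2$, one has $E_I^\textsc{c}=c(\alpha)+c(\gamma)$ and $E_{II}^\textsc{c}=c(\alpha+\beta)+c(\beta+\gamma)$; applying the convexity inequality $c(z+\beta)-c(z)\geq c(\beta)-c(0)$ at $z=\alpha$ and $z=\gamma$ and summing yields
\begin{equation*}
E_{II}^\textsc{c}-E_I^\textsc{c}\;\geq\;2\bigl[c(\beta)-c(0)\bigr],
\end{equation*}
which is strictly positive by the strict monotonicity of $c$ (itself a consequence of strict convexity on $[0,1]$ together with $c$ increasing). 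In \textsc{b}, setting $\alpha\coloneqq y_1-x_1$, $\beta\coloneqq x_2-y_1$, $\gamma\coloneqq y_2-x_2$, one has $E_{II}^\textsc{b}=c(\alpha+\beta+\gamma)+c(\beta)$; iterating the same convexity inequality twice gives $c(\alpha+\beta+\gamma)\geq c(\alpha)+c(\beta)+c(\gamma)-2c(0)$, whence $E_{II}^\textsc{b}-E_I^\textsc{b}\geq 2[c(\beta)-c(0)]>0$.

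The main obstacle I expect is configuration~\textsc{b}: there the sum of the two swap distances strictly exceeds the sum of the two identity distances, so the equal-sum \emph{spreading} argument that closes case~\textsc{a} is not available, and strict monotonicity of $c$ has to be invoked in a non-trivial way (alongside convexity applied to two nested sub-intervals). Once \textsc{a}, \textsc{b}, \textsc{c} are covered, every ordering of the four points reduces to one of them, proving that $\pi=\mathrm{id}$ is the unique optimum.
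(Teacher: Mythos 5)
Your proposal is correct and follows essentially the same route as the paper: the same reduction to the three orderings \textsc{a}, \textsc{b}, \textsc{c}, with \textsc{a} settled by convexity alone (as in Proposition~\ref{PropA2}) and \textsc{b} settled by combining the convex increment inequality with the strict monotonicity of $c$ — your superadditivity bound $c(\alpha+\beta+\gamma)\geq c(\alpha)+c(\beta)+c(\gamma)-2c(0)$ is just a repackaging of the paper's two-sided comparison with $c(x_2-x_1)-c(0)$. The only cosmetic difference is in case \textsc{c}, where the paper compares the two matchings termwise using monotonicity alone ($y_1-x_1<y_2-x_1$ and $x_2-y_2<x_2-y_1$), while you invoke convexity as well; both are valid.
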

\begin{proof}Let us first observe that a strictly increasing function on $\Lambda$ cannot be a $\mathcal C$-function, due to property in Eq.~\eqref{decreasing}. Moreover, due to the strict convexity hypothesis, $c(0)$ must be finite. We have already proved that, if $c(z)$ is convex, the optimal matching in the configuration \textsc{a} is the ordered one. If we consider now the configuration \textsc{b}, we have to evaluate two possible matchings, namely
\begin{center}
\begin{tikzpicture}[scale=0.6]
\draw[style=help lines,line width=1pt,black] (0.5,0) to (4.5,0);
\node[draw,circle,inner sep=1.5pt,fill=black,text=white,label=below:{\footnotesize $x_1$}] (x1) at (1,0) {};
\node[draw,circle,inner sep=1.5pt,fill=black,text=white,label=below:{\footnotesize $x_2$}] (x2) at (3,0) {};
\node[draw,circle,inner sep=1.5pt,fill=white,label=below:{\footnotesize $y_1$}] (y1) at (2,0) {};
\node[draw,circle,inner sep=1.5pt,fill=white,label=below:{\footnotesize $y_2$}] (y2) at (4,0) {};
\draw[line width=1pt,gray] (x1) to[bend left=90] (y1);
\draw[line width=1pt,gray] (x2) to[bend left=90] (y2);
\node (c) at (2.5,-1) {\footnotesize B-I};
\end{tikzpicture}\quad 
\begin{tikzpicture}[scale=0.6]
\draw[style=help lines,line width=1pt,black] (0.5,0) to (4.5,0);
\node[draw,circle,inner sep=1.5pt,fill=black,text=white,label=below:{\footnotesize $x_1$}] (x1) at (1,0) {};
\node[draw,circle,inner sep=1.5pt,fill=black,text=white,label=below:{\footnotesize $x_2$}] (x2) at (3,0) {};
\node[draw,circle,inner sep=1.5pt,fill=white,label=below:{\footnotesize $y_1$}] (y1) at (2,0) {};
\node[draw,circle,inner sep=1.5pt,fill=white,label=below:{\footnotesize $y_2$}] (y2) at (4,0) {};
\draw[line width=1pt,gray] (x1) to[bend left=90] (y2);
\draw[line width=1pt,gray] (x2) to[bend right=90] (y1);
\node (c) at (2.5,-1) {\footnotesize B-II};
\end{tikzpicture}
\end{center}
In this case,
\begin{subequations}
\begin{align}
E_{\text{I}}^{\textsc{b}}&=c(y_1-x_1)+c(x_2-y_2),\\
E_{\text{II}}^{\textsc{b}}&=c(y_2-x_1)+c(x_2-y_1).
\end{align}\end{subequations}
By hypothesis, the quantity $c(y_1-x_1)-c(x_2-y_1)$ is monotonously increasing respect to the variable $y_1\in(x_1,x_2)$ and therefore
\begin{equation}
c(y_1-x_1)-c(x_2-y_1)\leq c(x_2-x_1)-c(0).
\end{equation}
Similarly, convexity implies that $c(y_2-x_1)-c(y_2-x_2)$ is increasing in its argument $y_2$ and therefore
\begin{equation}
c(y_2-x_1)-c(y_2-x_2)\geq c(x_2-x_1)-c(0).
\end{equation}
It follows that
\begin{multline}
c(y_1-x_1)-c(x_2-y_1)\leq c(y_2-x_1)-c(y_2-x_2)\Rightarrow\\
\Rightarrow E_{\text{I}}^{\textsc{b}}\leq E_{\text{II}}^{\textsc{b}}.
\end{multline}
If we finally consider the configuration \textsc{c}, we have
\begin{equation}
E_{\text{I}}^{\textsc{c}}=c(y_1-x_1)+c(x_2-y_2)\leq 
c(y_2-x_1)+c(x_2-y_1)=E_{\text{II}}^{\textsc{c}},
\end{equation}
due to the fact that the cost function $c$ is strictly increasing.\end{proof}

If we consider now the $N=3$ case, we can derive the following fundamental Lemma.
\begin{lemma} Let $c(z)\colon\Lambda\to\R$, cost function for the assignment problem on the interval $\Lambda$, be a $\mathcal C$-function. Then the optimal permutation $\pi$ in the assignment problem for $N=3$ belongs to the set $\mathcal C_3$.\label{Lemma}
\end{lemma}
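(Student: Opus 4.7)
My plan is to argue by contradiction: assume the optimal permutation $\pi$ lies in $\mathcal{S}_3\setminus\mathcal{C}_3$, so that $\pi$ is one of the three transpositions $(1\,2)$, $(1\,3)$, $(2\,3)$. Under the reflection $x\mapsto 1-x$ of the interval combined with the relabelling $i\mapsto 4-i$, the $\mathcal{C}$-function cost is preserved, and this symmetry interchanges $(1\,2)\leftrightarrow(2\,3)$ while fixing $(1\,3)$. It therefore suffices to treat the two representative cases $\pi=(1\,2)$ and $\pi=(1\,3)$.

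For either of these transpositions, every element of $\mathcal{C}_3$ is obtained from $\pi$ by swapping the matched $y$-labels within exactly one unordered pair of indices $\{i,j\}\subset[3]$. Such a $2$-swap is a local move that affects only the $4$-point subproblem $S_{ij}=\{x_i,x_j,y_{\pi(i)},y_{\pi(j)}\}$, and the corresponding total cost variation coincides with the cost variation on $S_{ij}$ alone. I would then apply Proposition~\ref{PropA2} to each of the three subproblems: whenever $S_{ij}$ is in configuration \textsc{a} or \textsc{c} and the $\pi$-restricted pairing on $S_{ij}$ is the planar (non-crossing) one, the $\mathcal{C}$-function property forces the crossing alternative to have strictly smaller cost, thereby producing a cyclic permutation $\pi'\in\mathcal{C}_3$ with $E_3(\pi')<E_3(\pi)$ and contradicting the optimality of $\pi$.

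The remaining substance of the proof is a finite geometric case analysis verifying that, for every admissible ordering of the six points $x_1<x_2<x_3$ and $y_1<y_2<y_3$, at least one of the three subproblems always provides such a detecting configuration. The main obstacle is configuration \textsc{b} (the alternating color pattern), for which Proposition~\ref{PropA2} is silent, together with the subcase of configuration \textsc{c} in which the current $\pi$-pairing already happens to be the crossing one and thus yields no improvement. I would organise the argument for $\pi=(1\,2)$ by the configuration of the inner subproblem $S_{12}=\{x_1,x_2,y_1,y_2\}$: when $S_{12}$ fails to be in configuration \textsc{a} (its detecting case), the monotonicity of the two coordinate orderings forces at least one of the subproblems $S_{13}$ and $S_{23}$ involving the fixed pair $(x_3,y_3)$ to land in configuration \textsc{c} with the current intervals disjoint, which by Proposition~\ref{PropA2} delivers the required crossing improvement. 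The case $\pi=(1\,3)$ is then handled by the analogous analysis, closing the proof.
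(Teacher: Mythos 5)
Your proof is correct, and it reaches the Lemma by a genuinely leaner route than the paper. The paper enumerates all ten orderings of the six points (drawn on a circle) and, for each one, constructs the full transition diagram of the six permutations under crossing-creating transpositions, reading off that the minimal elements are always cyclic. You instead fix a candidate non-cyclic permutation --- necessarily one of the three transpositions of $\mathcal S_3$, reduced to two representatives by the reflection $x\mapsto 1-x$ --- and exhibit one improving local $2$-swap; since composing a transposition with a transposition sharing an index gives a $3$-cycle (and swapping the transposed pair gives the identity), every such swap indeed lands in $\mathcal C_3$, as you claim. Your closing combinatorial step also checks out: for $\pi=(1\,2)$, failure of $S_{12}$ to be in configuration \textsc{a} forces $y_1<x_2$ and $x_1<y_2$, after which $x_3<y_3$ places $S_{23}$, and $y_3<x_3$ places $S_{13}$, in configuration \textsc{c} with the current pairing the disjoint one; for $\pi=(1\,3)$ all three restricted pairings are the ``swapped'' ones and the configuration-\textsc{a} conditions for $S_{12}$ and $S_{23}$ cannot both fail, since one requires $y_2<x_2$ and the other $x_2<y_2$. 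Two small caveats: Proposition~\ref{PropA2} only gives a non-strict inequality, so your contradiction should be softened to the conclusion that some element of $\mathcal C_3$ also attains the minimum (the paper's own proof establishes no more than this); and in the subproblems the leftmost point may be white, so you are implicitly using that Proposition~\ref{PropA2} is symmetric under exchanging the two colors, which its proof does support. What your organization buys is that the problematic configuration \textsc{b}, on which Proposition~\ref{PropA2} is silent, is isolated and disposed of by a short logical argument rather than by inspecting all ten configurations; what the paper's enumeration buys is the explicit list of which cyclic permutations can be optimal in each configuration, which is of independent illustrative value.
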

\begin{proof}For the proof of this Lemma, observe that, by Proposition \ref{PropA2}, for $N=2$ the crossing matching is the optimal one whenever it is available, i.e., we can obtain the optimal matching maximizing the number of crossings given a certain configuration. In the assignment problem with $N=3$ ``white'' points to be matched with $N=3$ ``black'' points on the line, there are $\frac{1}{2}\binom{6}{3}=10$ distinct configurations assuming that the first point is always of a given type, e.g., black. Each configuration allows six possible matching permutations. Let us start with the following three configurations,
\begin{center}
\begin{tikzpicture}[scale=0.45]
\draw[style=help lines,line width=1pt,black] (0.5,0) to (6.5,0);
\node[draw,circle,inner sep=1.5pt,fill=black,text=white] (x1) at (1,0) {};
\node[draw,circle,inner sep=1.5pt,fill=black,text=white] (x2) at (2,0) {};
\node[draw,circle,inner sep=1.5pt,fill=black,text=white] (x3) at (3,0) {};
\node[draw,circle,inner sep=1.5pt,fill=white] (y1) at (4,0) {};
\node[draw,circle,inner sep=1.5pt,fill=white] (y2) at (5,0) {};
\node[draw,circle,inner sep=1.5pt,fill=white] (y3) at (6,0) {};
\node (c) at (3.5,1) {\footnotesize A};
\end{tikzpicture}\ 
\begin{tikzpicture}[scale=0.45]
\draw[style=help lines,line width=1pt,black] (0.5,0) to (6.5,0);
\node[draw,circle,inner sep=1.5pt,fill=black,text=white] (x1) at (1,0) {};
\node[draw,circle,inner sep=1.5pt,fill=black,text=white] (x2) at (2,0) {};
\node[draw,circle,inner sep=1.5pt,fill=black,text=white] (x3) at (6,0) {};
\node[draw,circle,inner sep=1.5pt,fill=white] (y1) at (3,0) {};
\node[draw,circle,inner sep=1.5pt,fill=white] (y2) at (4,0) {};
\node[draw,circle,inner sep=1.5pt,fill=white] (y3) at (5,0) {};
\node (c) at (3.5,1) {\footnotesize B};
\end{tikzpicture}\ 
\begin{tikzpicture}[scale=0.45]
\draw[style=help lines,line width=1pt,black] (0.5,0) to (6.5,0);
\node[draw,circle,inner sep=1.5pt,fill=black,text=white] (x1) at (1,0) {};
\node[draw,circle,inner sep=1.5pt,fill=black,text=white] (x2) at (5,0) {};
\node[draw,circle,inner sep=1.5pt,fill=black,text=white] (x3) at (6,0) {};
\node[draw,circle,inner sep=1.5pt,fill=white] (y1) at (2,0) {};
\node[draw,circle,inner sep=1.5pt,fill=white] (y2) at (3,0) {};
\node[draw,circle,inner sep=1.5pt,fill=white] (y3) at (4,0) {};
\node (c) at (3.5,1) {\footnotesize C};
\end{tikzpicture}
\end{center}
All of them can be pictorially represented, in a compact way, on a 
circumference as
\begin{center}
\begin{tikzpicture}[thick]
\newdimen\R
\R=0.5cm
\draw (0,0) circle (\R);
\node[draw,circle,inner sep=1.5pt,fill=black] (x1) at (60:\R) {};
\node[draw,circle,inner sep=1.5pt,fill=black] (x2) at (120:\R) {};
\node[draw,circle,inner sep=1.5pt,fill=black] (x3) at (180:\R) {};
\node[draw,circle,inner sep=1.5pt,fill=white] (y1) at (240:\R) {};
\node[draw,circle,inner sep=1.5pt,fill=white] (y2) at (300:\R) {};
\node[draw,circle,inner sep=1.5pt,fill=white] (y3) at (360:\R) {};
\end{tikzpicture}
\end{center}
For example, we can represent
\begin{center}
\begin{tikzpicture}[scale=0.45]
\draw[style=help lines,line width=1pt,black] (0.5,0) to (6.5,0);
\node[draw,circle,inner sep=1.5pt,fill=black,text=white,label=below:{\footnotesize $1$}] (x1) at (1,0) {};
\node[draw,circle,inner sep=1.5pt,fill=black,text=white,label=below:{\footnotesize $2$}] (x2) at (2,0) {};
\node[draw,circle,inner sep=1.5pt,fill=black,text=white,label=below:{\footnotesize $3$}] (x3) at (3,0) {};
\node[draw,circle,inner sep=1.5pt,fill=white,label=below:{\footnotesize $1$}] (y1) at (4,0) {};
\node[draw,circle,inner sep=1.5pt,fill=white,label=below:{\footnotesize $2$}] (y2) at (5,0) {};
\node[draw,circle,inner sep=1.5pt,fill=white,label=below:{\footnotesize $3$}] (y3) at (6,0) {};
\node (c) at (3.5,1) {\footnotesize A};
\draw [-stealth,thick] (7,0) to (9,0);
\begin{scope}[scale=2.22,thick,shift={(5,0)}]
\newdimen\R
\R=0.5cm
\draw (0,0) circle (\R);
\node[draw,circle,inner sep=1pt,fill=black,text=white] (x1) at (60:\R) {\tiny $1$};
\node[draw,circle,inner sep=1pt,fill=black,text=white] (x2) at (120:\R) {\tiny $2$};
\node[draw,circle,inner sep=1pt,fill=black,text=white] (x3) at (180:\R) {\tiny $3$};
\node[draw,circle,inner sep=1pt,fill=white] (y1) at (240:\R) {\tiny $1$};
\node[draw,circle,inner sep=1pt,fill=white] (y2) at (300:\R) {\tiny $2$};
\node[draw,circle,inner sep=1pt,fill=white] (y3) at (360:\R) {\tiny $3$};
\node[draw,circle,inner sep=1pt,fill=white,white] (z) at (30:\R) {};
\end{scope}\end{tikzpicture}
\end{center}
In this representation, each matching on a configuration on the line corresponds 
to a set of three chords joining the points on the circumference, and a 
crossing in a matching corresponds to an actual crossing between chords. We will 
use this representation to evaluate all configurations in a compact way. By 
Proposition \ref{PropA2}, we can order the possible matchings using the fact 
that, applying a transposition to a given permutation, if a new crossing 
appears, then the new permutation has a lower cost. Pictorially we can order the 
six permutations as
\begin{center}
\begin{tikzpicture}[thick]
\begin{scope}
\newdimen\R
\R=0.5cm
\draw (0,0) circle (\R);
\node[draw,circle,inner sep=1.5pt,fill=black] (x1) at (60:\R) {};
\node[draw,circle,inner sep=1.5pt,fill=black] (x2) at (120:\R) {};
\node[draw,circle,inner sep=1.5pt,fill=black] (x3) at (180:\R) {};
\node[draw,circle,inner sep=1.5pt,fill=white] (y1) at (240:\R) {};
\node[draw,circle,inner sep=1.5pt,fill=white] (y2) at (300:\R) {};
\node[draw,circle,inner sep=1.5pt,fill=white] (y3) at (360:\R) {};
\draw[line width=1pt,gray] (x1) to[out=240,in=180] (y3);
\draw[line width=1pt,gray] (x2) to[out=300,in=120] (y2);
\draw[line width=1pt,gray] (x3) to[out=0,in=60] (y1);
\end{scope}
\draw [-stealth] (0.6,0) to [out=0,in=90] (1,-0.4);
\draw [-stealth] (-0.6,0) to [out=180,in=90] (-1,-0.4);
\begin{scope}[shift={(-1,-1)}]
\newdimen\R
\R=0.5cm
\draw (0,0) circle (\R);
\node[draw,circle,inner sep=1.5pt,fill=black] (x1) at (60:\R) {};
\node[draw,circle,inner sep=1.5pt,fill=black] (x2) at (120:\R) {};
\node[draw,circle,inner sep=1.5pt,fill=black] (x3) at (180:\R) {};
\node[draw,circle,inner sep=1.5pt,fill=white] (y1) at (240:\R) {};
\node[draw,circle,inner sep=1.5pt,fill=white] (y2) at (300:\R) {};
\node[draw,circle,inner sep=1.5pt,fill=white] (y3) at (360:\R) {};
\draw[line width=1pt,gray] (x1) to[out=240,in=120] (y2);
\draw[line width=1pt,gray] (x2) to[out=300,in=180] (y3);
\draw[line width=1pt,gray] (x3) to[out=0,in=60] (y1);
\end{scope}
\begin{scope}[shift={(1,-1)}]
\newdimen\R
\R=0.5cm
\draw (0,0) circle (\R);
\node[draw,circle,inner sep=1.5pt,fill=black] (x1) at (60:\R) {};
\node[draw,circle,inner sep=1.5pt,fill=black] (x2) at (120:\R) {};
\node[draw,circle,inner sep=1.5pt,fill=black] (x3) at (180:\R) {};
\node[draw,circle,inner sep=1.5pt,fill=white] (y1) at (240:\R) {};
\node[draw,circle,inner sep=1.5pt,fill=white] (y2) at (300:\R) {};
\node[draw,circle,inner sep=1.5pt,fill=white] (y3) at (360:\R) {};
\draw[line width=1pt,gray] (x1) to[out=240,in=180] (y3);
\draw[line width=1pt,gray] (x2) to[out=300,in=60] (y1);
\draw[line width=1pt,gray] (x3) to[out=0,in=120] (y2);
\end{scope}
\draw [-stealth] (1,-1.6) to (1,-1.9);
\draw [-stealth] (-1,-1.6) to (-1,-1.9);
\begin{scope}[shift={(-1,-2.5)}]
\newdimen\R
\R=0.5cm
\draw (0,0) circle (\R);
\node[draw,circle,inner sep=1.5pt,fill=black] (x1) at (60:\R) {};
\node[draw,circle,inner sep=1.5pt,fill=black] (x2) at (120:\R) {};
\node[draw,circle,inner sep=1.5pt,fill=black] (x3) at (180:\R) {};
\node[draw,circle,inner sep=1.5pt,fill=white] (y1) at (240:\R) {};
\node[draw,circle,inner sep=1.5pt,fill=white] (y2) at (300:\R) {};
\node[draw,circle,inner sep=1.5pt,fill=white] (y3) at (360:\R) {};
\draw[line width=1pt,gray] (x1) to[out=240,in=60] (y1);
\draw[line width=1pt,gray] (x2) to[out=300,in=180] (y3);
\draw[line width=1pt,gray] (x3) to[out=0,in=120] (y2);
\end{scope}
\begin{scope}[shift={(1,-2.5)}]
\newdimen\R
\R=0.5cm
\draw (0,0) circle (\R);
\node[draw,circle,inner sep=1.5pt,fill=black] (x1) at (60:\R) {};
\node[draw,circle,inner sep=1.5pt,fill=black] (x2) at (120:\R) {};
\node[draw,circle,inner sep=1.5pt,fill=black] (x3) at (180:\R) {};
\node[draw,circle,inner sep=1.5pt,fill=white] (y1) at (240:\R) {};
\node[draw,circle,inner sep=1.5pt,fill=white] (y2) at (300:\R) {};
\node[draw,circle,inner sep=1.5pt,fill=white] (y3) at (360:\R) {};
\draw[line width=1pt,gray] (x1) to[out=240,in=120] (y2);
\draw[line width=1pt,gray] (x2) to[out=300,in=60] (y1);
\draw[line width=1pt,gray] (x3) to[out=0,in=180] (y3);
\end{scope}
\draw [-stealth] (-1,-3.1) to [out=-90,in=180] (-0.6,-3.5);
\draw [-stealth] (1,-3.1) to [out=-90,in=0] (0.6,-3.5);
\begin{scope}[shift={(0,-3.5)}]
\newdimen\R
\R=0.5cm
\draw (0,0) circle (\R);
\node[draw,circle,inner sep=1.5pt,fill=black] (x1) at (60:\R) {};
\node[draw,circle,inner sep=1.5pt,fill=black] (x2) at (120:\R) {};
\node[draw,circle,inner sep=1.5pt,fill=black] (x3) at (180:\R) {};
\node[draw,circle,inner sep=1.5pt,fill=white] (y1) at (240:\R) {};
\node[draw,circle,inner sep=1.5pt,fill=white] (y2) at (300:\R) {};
\node[draw,circle,inner sep=1.5pt,fill=white] (y3) at (360:\R) {};
\draw[line width=1pt,gray] (x1) to (y1);
\draw[line width=1pt,gray] (x2) to (y2);
\draw[line width=1pt,gray] (x3) to (y3);
\end{scope}
\end{tikzpicture}
\end{center}
where the arrows denote the transition from a given matching to another one with lower cost, as a consequence of a single transposition. In this case the final, and cheapest, matching corresponds to the optimal permutation $\pi(i)=i$ for the configuration \textsc{a}, $\pi(i)=i+1\Mod 3$ for the configuration \textsc{b}, and $\pi(i)=i+2\Mod 3$ for the configuration \textsc{c}.

The following six configurations of points
\begin{center}
\begin{tikzpicture}[scale=0.45]
\draw[style=help lines,line width=1pt,black] (0.5,0) to (6.5,0);
\node[draw,circle,inner sep=1.5pt,fill=black,text=white] (x1) at (1,0) {};
\node[draw,circle,inner sep=1.5pt,fill=black,text=white] (x2) at (2,0) {};
\node[draw,circle,inner sep=1.5pt,fill=black,text=white] (x3) at (4,0) {};
\node[draw,circle,inner sep=1.5pt,fill=white] (y1) at (3,0) {};
\node[draw,circle,inner sep=1.5pt,fill=white] (y2) at (5,0) {};
\node[draw,circle,inner sep=1.5pt,fill=white] (y3) at (6,0) {};
\node (c) at (3.5,1) {\footnotesize D};
\end{tikzpicture}\ 
\begin{tikzpicture}[scale=0.45]
\draw[style=help lines,line width=1pt,black] (0.5,0) to (6.5,0);
\node[draw,circle,inner sep=1.5pt,fill=black,text=white] (x1) at (1,0) {};
\node[draw,circle,inner sep=1.5pt,fill=black,text=white] (x2) at (2,0) {};
\node[draw,circle,inner sep=1.5pt,fill=black,text=white] (x3) at (5,0) {};
\node[draw,circle,inner sep=1.5pt,fill=white] (y1) at (3,0) {};
\node[draw,circle,inner sep=1.5pt,fill=white] (y2) at (4,0) {};
\node[draw,circle,inner sep=1.5pt,fill=white] (y3) at (6,0) {};
\node (c) at (3.5,1) {\footnotesize E};
\end{tikzpicture}\ 
\begin{tikzpicture}[scale=0.45]
\draw[style=help lines,line width=1pt,black] (0.5,0) to (6.5,0);
\node[draw,circle,inner sep=1.5pt,fill=black,text=white] (x1) at (1,0) {};
\node[draw,circle,inner sep=1.5pt,fill=black,text=white] (x2) at (3,0) {};
\node[draw,circle,inner sep=1.5pt,fill=black,text=white] (x3) at (4,0) {};
\node[draw,circle,inner sep=1.5pt,fill=white] (y1) at (2,0) {};
\node[draw,circle,inner sep=1.5pt,fill=white] (y2) at (5,0) {};
\node[draw,circle,inner sep=1.5pt,fill=white] (y3) at (6,0) {};
\node (c) at (3.5,1) {\footnotesize F};
\end{tikzpicture}\\\vspace{0.2cm}
\begin{tikzpicture}[scale=0.45]
\draw[style=help lines,line width=1pt,black] (0.5,0) to (6.5,0);
\node[draw,circle,inner sep=1.5pt,fill=black,text=white] (x1) at (1,0) {};
\node[draw,circle,inner sep=1.5pt,fill=black,text=white] (x2) at (4,0) {};
\node[draw,circle,inner sep=1.5pt,fill=black,text=white] (x3) at (5,0) {};
\node[draw,circle,inner sep=1.5pt,fill=white] (y1) at (2,0) {};
\node[draw,circle,inner sep=1.5pt,fill=white] (y2) at (3,0) {};
\node[draw,circle,inner sep=1.5pt,fill=white] (y3) at (6,0) {};
\node (c) at (3.5,1) {\footnotesize G};
\end{tikzpicture}\ 
\begin{tikzpicture}[scale=0.45]
\draw[style=help lines,line width=1pt,black] (0.5,0) to (6.5,0);
\node[draw,circle,inner sep=1.5pt,fill=black,text=white] (x1) at (1,0) {};
\node[draw,circle,inner sep=1.5pt,fill=black,text=white] (x2) at (3,0) {};
\node[draw,circle,inner sep=1.5pt,fill=black,text=white] (x3) at (6,0) {};
\node[draw,circle,inner sep=1.5pt,fill=white] (y1) at (2,0) {};
\node[draw,circle,inner sep=1.5pt,fill=white] (y2) at (4,0) {};
\node[draw,circle,inner sep=1.5pt,fill=white] (y3) at (5,0) {};
\node (c) at (3.5,1) {\footnotesize H};
\end{tikzpicture}\ 
\begin{tikzpicture}[scale=0.45]
\draw[style=help lines,line width=1pt,black] (0.5,0) to (6.5,0);
\node[draw,circle,inner sep=1.5pt,fill=black,text=white] (x1) at (1,0) {};
\node[draw,circle,inner sep=1.5pt,fill=black,text=white] (x2) at (4,0) {};
\node[draw,circle,inner sep=1.5pt,fill=black,text=white] (x3) at (6,0) {};
\node[draw,circle,inner sep=1.5pt,fill=white] (y1) at (2,0) {};
\node[draw,circle,inner sep=1.5pt,fill=white] (y2) at (3,0) {};
\node[draw,circle,inner sep=1.5pt,fill=white] (y3) at (5,0) {};
\node (c) at (3.5,1) {\footnotesize I};
\end{tikzpicture}
\end{center}
can be treated similarly. In particular, all six configurations can be represented as
\begin{center}
\begin{tikzpicture}[thick]
\newdimen\R
\R=0.5cm
\draw (0,0) circle (\R);
\node[draw,circle,inner sep=1.5pt,fill=black] (x1) at (60:\R) {};
\node[draw,circle,inner sep=1.5pt,fill=black] (x2) at (120:\R) {};
\node[draw,circle,inner sep=1.5pt,fill=black] (x3) at (300:\R) {};
\node[draw,circle,inner sep=1.5pt,fill=white] (y1) at (180:\R) {};
\node[draw,circle,inner sep=1.5pt,fill=white] (y2) at (240:\R) {};
\node[draw,circle,inner sep=1.5pt,fill=white] (y3) at (360:\R) {};
\end{tikzpicture}
\end{center}
We have therefore
\begin{center}
\begin{tikzpicture}[thick]
\begin{scope}
\newdimen\R
\R=0.5cm
\draw (0,0) circle (\R);
\node[draw,circle,inner sep=1.5pt,fill=black] (x2) at (60:\R) {};
\node[draw,circle,inner sep=1.5pt,fill=black] (x1) at (120:\R) {};
\node[draw,circle,inner sep=1.5pt,fill=black] (x3) at (300:\R) {};
\node[draw,circle,inner sep=1.5pt,fill=white] (y3) at (180:\R) {};
\node[draw,circle,inner sep=1.5pt,fill=white] (y2) at (240:\R) {};
\node[draw,circle,inner sep=1.5pt,fill=white] (y1) at (360:\R) {};
\draw[line width=1pt,gray] (x1) to[out=300,in=0] (y3);
\draw[line width=1pt,gray] (x2) to[out=240,in=60] (y2);
\draw[line width=1pt,gray] (x3) to[out=120,in=180] (y1);
\end{scope}
\begin{scope}[shift={(1.5,0)}]
\newdimen\R
\R=0.5cm
\draw (0,0) circle (\R);
\node[draw,circle,inner sep=1.5pt,fill=black] (x2) at (60:\R) {};
\node[draw,circle,inner sep=1.5pt,fill=black] (x1) at (120:\R) {};
\node[draw,circle,inner sep=1.5pt,fill=black] (x3) at (300:\R) {};
\node[draw,circle,inner sep=1.5pt,fill=white] (y3) at (180:\R) {};
\node[draw,circle,inner sep=1.5pt,fill=white] (y2) at (240:\R) {};
\node[draw,circle,inner sep=1.5pt,fill=white] (y1) at (360:\R) {};
\draw[line width=1pt,gray] (x1) to[out=300,in=0] (y3);
\draw[line width=1pt,gray] (x2) to[out=240,in=180] (y1);
\draw[line width=1pt,gray] (x3) to[out=120,in=60] (y2);
\end{scope}
\begin{scope}[shift={(3,0)}]
\newdimen\R
\R=0.5cm
\draw (0,0) circle (\R);
\node[draw,circle,inner sep=1.5pt,fill=black] (x2) at (60:\R) {};
\node[draw,circle,inner sep=1.5pt,fill=black] (x1) at (120:\R) {};
\node[draw,circle,inner sep=1.5pt,fill=black] (x3) at (300:\R) {};
\node[draw,circle,inner sep=1.5pt,fill=white] (y3) at (180:\R) {};
\node[draw,circle,inner sep=1.5pt,fill=white] (y2) at (240:\R) {};
\node[draw,circle,inner sep=1.5pt,fill=white] (y1) at (360:\R) {};
\draw[line width=1pt,gray] (x1) to[out=300,in=180] (y1);
\draw[line width=1pt,gray] (x2) to[out=240,in=0] (y3);
\draw[line width=1pt,gray] (x3) to[out=120,in=60] (y2);
\end{scope}
\draw [-stealth] (0,-0.6) to (0,-0.9);
\draw [-stealth] (3,-0.6) to (3,-0.9);
\draw [-stealth] (1.5,-0.6) to (1.5,-0.9);
\draw [-stealth] (2.1,-1.5) to (2.4,-1.5);
\begin{scope}[shift={(0,-1.5)}]
\newdimen\R
\R=0.5cm
\draw (0,0) circle (\R);
\node[draw,circle,inner sep=1.5pt,fill=black] (x2) at (60:\R) {};
\node[draw,circle,inner sep=1.5pt,fill=black] (x1) at (120:\R) {};
\node[draw,circle,inner sep=1.5pt,fill=black] (x3) at (300:\R) {};
\node[draw,circle,inner sep=1.5pt,fill=white] (y3) at (180:\R) {};
\node[draw,circle,inner sep=1.5pt,fill=white] (y2) at (240:\R) {};
\node[draw,circle,inner sep=1.5pt,fill=white] (y1) at (360:\R) {};
\draw[line width=1pt,gray] (x1) to[out=300,in=60] (y2);
\draw[line width=1pt,gray] (x2) to[out=240,in=0] (y3);
\draw[line width=1pt,gray] (x3) to[out=120,in=180] (y1);
\end{scope}
\begin{scope}[shift={(1.5,-1.5)}]
\newdimen\R
\R=0.5cm
\draw (0,0) circle (\R);
\node[draw,circle,inner sep=1.5pt,fill=black] (x2) at (60:\R) {};
\node[draw,circle,inner sep=1.5pt,fill=black] (x1) at (120:\R) {};
\node[draw,circle,inner sep=1.5pt,fill=black] (x3) at (300:\R) {};
\node[draw,circle,inner sep=1.5pt,fill=white] (y3) at (180:\R) {};
\node[draw,circle,inner sep=1.5pt,fill=white] (y2) at (240:\R) {};
\node[draw,circle,inner sep=1.5pt,fill=white] (y1) at (360:\R) {};
\draw[line width=1pt,gray] (x1) to[out=300,in=60] (y2);
\draw[line width=1pt,gray] (x2) to[out=240,in=180] (y1);
\draw[line width=1pt,gray] (x3) to[out=120,in=0] (y3);
\end{scope}
\begin{scope}[shift={(3,-1.5)}]
\newdimen\R
\R=0.5cm
\draw (0,0) circle (\R);
\node[draw,circle,inner sep=1.5pt,fill=black] (x2) at (60:\R) {};
\node[draw,circle,inner sep=1.5pt,fill=black] (x1) at (120:\R) {};
\node[draw,circle,inner sep=1.5pt,fill=black] (x3) at (300:\R) {};
\node[draw,circle,inner sep=1.5pt,fill=white] (y3) at (180:\R) {};
\node[draw,circle,inner sep=1.5pt,fill=white] (y2) at (240:\R) {};
\node[draw,circle,inner sep=1.5pt,fill=white] (y1) at (360:\R) {};
\draw[line width=1pt,gray] (x1) to[out=300,in=180] (y1);
\draw[line width=1pt,gray] (x2) to[out=240,in=60] (y2);
\draw[line width=1pt,gray] (x3) to[out=120,in=0] (y3);
\end{scope}
\end{tikzpicture}
\end{center}
This implies that, for each configuration, there are two possible optimal permutations, namely $\pi(i)=i$ or $\pi(i)=i+1\Mod 3$ for the configurations \textsc{d}, \textsc{e}, \textsc{f}, and $\pi(i)=i+1\Mod 3$ or $\pi(i)=i+2\Mod 3$ for the configurations \textsc{g}, \textsc{h}, \textsc{i}.

Finally, let us consider the configuration
\begin{center}
\begin{tikzpicture}[scale=0.45]
\draw[style=help lines,line width=1pt,black] (0.5,0) to (6.5,0);
\node[draw,circle,inner sep=1.5pt,fill=black,text=white] (x1) at (1,0) {};
\node[draw,circle,inner sep=1.5pt,fill=black,text=white] (x2) at (3,0) {};
\node[draw,circle,inner sep=1.5pt,fill=black,text=white] (x3) at (5,0) {};
\node[draw,circle,inner sep=1.5pt,fill=white] (y1) at (2,0) {};
\node[draw,circle,inner sep=1.5pt,fill=white] (y2) at (4,0) {};
\node[draw,circle,inner sep=1.5pt,fill=white] (y3) at (6,0) {};
\node (c) at (3.5,1) {\footnotesize J};
\draw [-stealth,thick] (7,0) to (9,0);
\begin{scope}[scale=2.22,thick,shift={(5,0)}]
\newdimen\R
\R=0.5cm
\draw (0,0) circle (\R);
\node[draw,circle,inner sep=1.5pt,fill=black] (x1) at (60:\R) {};
\node[draw,circle,inner sep=1.5pt,fill=black] (x2) at (180:\R) {};
\node[draw,circle,inner sep=1.5pt,fill=black] (x3) at (300:\R) {};
\node[draw,circle,inner sep=1.5pt,fill=white] (y1) at (120:\R) {};
\node[draw,circle,inner sep=1.5pt,fill=white] (y2) at (240:\R) {};
\node[draw,circle,inner sep=1.5pt,fill=white] (y3) at (360:\R) {};
\node[draw,circle,inner sep=1pt,fill=white,white] (z) at (30:\R) {};
\end{scope}\end{tikzpicture}
\end{center}
In this case we have
\begin{center}
\begin{tikzpicture}[thick]
\begin{scope}
\newdimen\R
\R=0.5cm
\draw (0,0) circle (\R);
\node[draw,circle,inner sep=1.5pt,fill=black] (x1) at (60:\R) {};
\node[draw,circle,inner sep=1.5pt,fill=black] (x2) at (180:\R) {};
\node[draw,circle,inner sep=1.5pt,fill=black] (x3) at (300:\R) {};
\node[draw,circle,inner sep=1.5pt,fill=white] (y1) at (120:\R) {};
\node[draw,circle,inner sep=1.5pt,fill=white] (y2) at (240:\R) {};
\node[draw,circle,inner sep=1.5pt,fill=white] (y3) at (360:\R) {};
\draw[line width=1pt,gray] (x1) to[out=240,in=180] (y3);
\draw[line width=1pt,gray] (x2) to[out=0,in=60] (y2);
\draw[line width=1pt,gray] (x3) to[out=120,in=300] (y1);
\node[draw,circle,inner sep=1pt,fill=white,white] (z) at (30:\R) {};
\end{scope}
\begin{scope}[shift={(1.5,0)}]
\newdimen\R
\R=0.5cm
\draw (0,0) circle (\R);
\node[draw,circle,inner sep=1.5pt,fill=black] (x1) at (60:\R) {};
\node[draw,circle,inner sep=1.5pt,fill=black] (x2) at (180:\R) {};
\node[draw,circle,inner sep=1.5pt,fill=black] (x3) at (300:\R) {};
\node[draw,circle,inner sep=1.5pt,fill=white] (y1) at (120:\R) {};
\node[draw,circle,inner sep=1.5pt,fill=white] (y2) at (240:\R) {};
\node[draw,circle,inner sep=1.5pt,fill=white] (y3) at (360:\R) {};
\draw[line width=1pt,gray] (x1) to[out=240,in=300] (y1);
\draw[line width=1pt,gray] (x2) to[out=0,in=180] (y3);
\draw[line width=1pt,gray] (x3) to[out=120,in=60] (y2);
\node[draw,circle,inner sep=1pt,fill=white,white] (z) at (30:\R) {};
\end{scope}
\begin{scope}[shift={(3,0)}]
\newdimen\R
\R=0.5cm
\draw (0,0) circle (\R);
\node[draw,circle,inner sep=1.5pt,fill=black] (x1) at (60:\R) {};
\node[draw,circle,inner sep=1.5pt,fill=black] (x2) at (180:\R) {};
\node[draw,circle,inner sep=1.5pt,fill=black] (x3) at (300:\R) {};
\node[draw,circle,inner sep=1.5pt,fill=white] (y1) at (120:\R) {};
\node[draw,circle,inner sep=1.5pt,fill=white] (y2) at (240:\R) {};
\node[draw,circle,inner sep=1.5pt,fill=white] (y3) at (360:\R) {};
\draw[line width=1pt,gray] (x1) to[out=240,in=60] (y2);
\draw[line width=1pt,gray] (x2) to[out=0,in=300] (y1);
\draw[line width=1pt,gray] (x3) to[out=120,in=180] (y3);
\node[draw,circle,inner sep=1pt,fill=white,white] (z) at (30:\R) {};
\end{scope}
\begin{scope}[shift={(4.5,-1)}]
\newdimen\R
\R=0.5cm
\draw (0,0) circle (\R);
\node[draw,circle,inner sep=1.5pt,fill=black] (x1) at (60:\R) {};
\node[draw,circle,inner sep=1.5pt,fill=black] (x2) at (180:\R) {};
\node[draw,circle,inner sep=1.5pt,fill=black] (x3) at (300:\R) {};
\node[draw,circle,inner sep=1.5pt,fill=white] (y1) at (120:\R) {};
\node[draw,circle,inner sep=1.5pt,fill=white] (y2) at (240:\R) {};
\node[draw,circle,inner sep=1.5pt,fill=white] (y3) at (360:\R) {};
\draw[line width=1pt,gray] (x1) to[out=240,in=300] (y1);
\draw[line width=1pt,gray] (x2) to[out=0,in=60] (y2);
\draw[line width=1pt,gray] (x3) to[out=120,in=180] (y3);
\node[draw,circle,inner sep=1pt,fill=white,white] (z) at (30:\R) {};
\end{scope}
\begin{scope}[shift={(6,-1)}]
\newdimen\R
\R=0.5cm
\draw (0,0) circle (\R);
\node[draw,circle,inner sep=1.5pt,fill=black] (x1) at (60:\R) {};
\node[draw,circle,inner sep=1.5pt,fill=black] (x2) at (180:\R) {};
\node[draw,circle,inner sep=1.5pt,fill=black] (x3) at (300:\R) {};
\node[draw,circle,inner sep=1.5pt,fill=white] (y1) at (120:\R) {};
\node[draw,circle,inner sep=1.5pt,fill=white] (y2) at (240:\R) {};
\node[draw,circle,inner sep=1.5pt,fill=white] (y3) at (360:\R) {};
\draw[line width=1pt,gray] (x1) to[out=240,in=180] (y3);
\draw[line width=1pt,gray] (x2) to[out=0,in=300] (y1);
\draw[line width=1pt,gray] (x3) to[out=120,in=60] (y2);
\node[draw,circle,inner sep=1pt,fill=white,white] (z) at (30:\R) {};
\end{scope}
\draw [-stealth] (0,-0.6) to [out=-90,in=180] (0.9,-1.5);
\draw [-stealth] (3,-0.6) to [out=-90,in=0] (2.1,-1.5);
\draw [-stealth] (1.5,-0.6) to (1.5,-0.9);
\begin{scope}[shift={(1.5,-1.5)}]
\newdimen\R
\R=0.5cm
\draw (0,0) circle (\R);
\node[draw,circle,inner sep=1.5pt,fill=black] (x1) at (60:\R) {};
\node[draw,circle,inner sep=1.5pt,fill=black] (x2) at (180:\R) {};
\node[draw,circle,inner sep=1.5pt,fill=black] (x3) at (300:\R) {};
\node[draw,circle,inner sep=1.5pt,fill=white] (y1) at (120:\R) {};
\node[draw,circle,inner sep=1.5pt,fill=white] (y2) at (240:\R) {};
\node[draw,circle,inner sep=1.5pt,fill=white] (y3) at (360:\R) {};
\draw[line width=1pt,gray] (x1) to (y2);
\draw[line width=1pt,gray] (x2) to (y3);
\draw[line width=1pt,gray] (x3) to (y1);
\node[draw,circle,inner sep=1pt,fill=white,white] (z) at (30:\R) {};
\end{scope}
\end{tikzpicture}
\end{center}
There are three possible optimal permutations, namely $\pi(i)=i$, $\pi(i)=i+1\Mod 3$ and $\pi(i)=i+2\Mod 3$.

Collecting our results, we have that, in the $N=3$ case, the optimal permutation $\pi$ is such that $\pi\in\mathcal C_3$.
\end{proof}

Proposition \ref{PropA3} allows us to state the following Theorem, that generalizes an analogous one proven by \textcite{Boniolo2012} in the particular case of the cost function $c(z)=z^p$ with $p>1$. An equivalent statement for general convex increasing functions can be found, for example, in Ref.~\cite{McCann1999}. 
\begin{pgoneth}[Optimal matching with convex increasing cost function]Given the assignment problem on $\Lambda$, if the cost function $c(z)\colon\Lambda\to\mathds R$ appearing in Eq.~\eqref{costoc} is a strictly convex increasing function, then the optimal permutation is the identity permutation.\label{TheoA5}
\end{pgoneth}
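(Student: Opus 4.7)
The plan is to reduce the general-$N$ statement to the $N=2$ case already established in Proposition \ref{PropA3} by a standard exchange (bubble-sort) argument on inversions of the permutation. Suppose $\pi\in\mathcal S_N$ is not the identity. Then $\pi$ has at least one inversion, i.e.\ a pair of indices $i<j$ with $\pi(i)>\pi(j)$; for convenience pick an \emph{adjacent} one, $j=i+1$. Because the points are labeled in increasing order we have $x_i<x_j$ and, from the inversion, $y_{\pi(j)}<y_{\pi(i)}$. So the four points $\{x_i,x_j\}$ and $\{y_{\pi(j)},y_{\pi(i)}\}$ form an instance of the $N=2$ assignment problem, falling (depending on the interleaving on the line) into one of the three configurations \textsc{a}, \textsc{b}, \textsc{c} covered by Proposition \ref{PropA3}.

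Under the current $\pi$ the two involved edges are $(x_i,y_{\pi(i)})$ and $(x_j,y_{\pi(j)})$, which is precisely the non-ordered matching of these four points (smaller $x$ matched to larger $y$). By Proposition \ref{PropA3}, since $c$ is strictly convex and increasing, the ordered matching $x_i\mapsto y_{\pi(j)}$, $x_j\mapsto y_{\pi(i)}$ has cost no larger than the non-ordered one. Define $\pi'$ by exchanging the images of $i$ and $j$ and leaving all other values of $\pi$ untouched. The total cost changes only through the two edges considered above, hence $E_N(\pi')\leq E_N(\pi)$.

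Next, observe that the number of inversions $\operatorname{inv}(\pi)$ is a non-negative integer, and since we exchanged an \emph{adjacent} inversion, $\operatorname{inv}(\pi')=\operatorname{inv}(\pi)-1$. Iterating the procedure therefore produces a finite sequence $\pi,\pi',\pi'',\dots$ of permutations with non-increasing cost which terminates at the unique inversion-free permutation of $\mathcal S_N$, namely the identity $\mathrm{id}$. Consequently $E_N(\mathrm{id})\leq E_N(\pi)$ for every $\pi\in\mathcal S_N$, which is the claim.

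There is no serious obstacle here: the whole proof is a clean bubble-sort reduction, and the only thing to verify carefully is that Proposition \ref{PropA3} really applies regardless of the interleaving of $x_i,x_{i+1},y_{\pi(i+1)},y_{\pi(i)}$ on the segment. This is automatic because Proposition \ref{PropA3} treats the three possible configurations \textsc{a}, \textsc{b}, \textsc{c} uniformly and in each of them identifies the ordered pairing as the optimal one for strictly convex increasing $c$. No hypothesis beyond those already stated in the theorem is needed, and strict convexity enters only to guarantee the comparison in case \textsc{b}, where the $\mathcal C$-function argument would fail.
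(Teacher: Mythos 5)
Your proof is correct and follows essentially the same route as the paper's: both reduce the general case to the $N=2$ statement of Proposition~\ref{PropA3} by an exchange argument on an inversion of $\pi$. The only difference is presentational --- the paper argues by contradiction on an arbitrary inversion, while you bubble-sort adjacent inversions using the inversion count as a terminating monovariant, which has the minor merit of yielding $E_N(\mathrm{id})\leq E_N(\pi)$ directly without needing the cost decrease at each swap to be strict.
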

\begin{proof}
Let us assume by contradiction that the optimal matching $\pi$ is not the one corresponding to the identity permutation. Therefore, there exists at least a couple of matched pairs $(i,\pi(i))$, $(j,\pi(j))$ such that $i<j$ and $\pi(i)>\pi(j)$. But, by Proposition \ref{PropA2}, this implies that the cost can be decreased considering instead the matched pairs $(i,\pi(j))$ and $(j,\pi(i))$, hence the absurd and therefore the optimal matching is given by the identity permutation $\pi(i)=i$ for all values of $i=1,\dots,N$.
\end{proof} 

If the cost function is a $\mathcal{C}$-function, the following Theorem holds.
\begin{plzeroth}[Optimal matching with $\mathcal C$-function]Given the assignment problem on the interval $\Lambda$, if the cost function $c(z)\colon\Lambda\to\mathds R$ appearing in Eq.~\eqref{costoc} is a $\mathcal C$-function, then the optimal permutation $\pi$ is such that $\pi\in\mathcal C_N$.\label{TheoA6}
\end{plzeroth}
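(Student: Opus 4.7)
The plan is to argue by contradiction, using Proposition~\ref{PropA1} to localize any failure of an optimal permutation $\pi^{\text{opt}}$ to lie in $\mathcal{C}_N$ down to a single misbehaving triple, and then to invoke Lemma~\ref{Lemma} on the corresponding three-point sub-assignment to derive a strict improvement that contradicts optimality.

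First, I would assume that some optimal permutation $\pi^{\text{opt}}\in\mathcal{S}_N\setminus\mathcal{C}_N$ exists. The contrapositive of Proposition~\ref{PropA1} then furnishes a triple $(i,j,k)$ with $1\leq i<j<k\leq N$ whose image $(\pi^{\text{opt}}(i),\pi^{\text{opt}}(j),\pi^{\text{opt}}(k))$ is not cyclically co-oriented with $(i,j,k)$. Next, I would isolate the three-point sub-assignment problem formed by the black points $\{x_i,x_j,x_k\}$ and the white points $\{y_{\pi^{\text{opt}}(i)},y_{\pi^{\text{opt}}(j)},y_{\pi^{\text{opt}}(k)}\}$, inheriting the same $\mathcal{C}$-function cost $c$. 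Since $E_N$ decomposes as a sum over matched pairs, any re-matching inside this sub-problem produces a valid $\pi'\in\mathcal{S}_N$ whose total cost differs from $E_N(\pi^{\text{opt}})$ by exactly the change in the sub-cost.

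The crux is then a bookkeeping step: relabel the three black sub-points by position as $1,2,3$ (so $b=1,2,3$ corresponds to $x_i,x_j,x_k$) and the three white sub-points by position as $1,2,3$. Because $y_\ell$ is increasing in $\ell$, the position order of the whites coincides with the numerical order of $(\pi^{\text{opt}}(i),\pi^{\text{opt}}(j),\pi^{\text{opt}}(k))$. Running through the six possible orderings of this image triple, one verifies that the induced sub-permutation is a cyclic shift, i.e.\ lies in $\mathcal{C}_3$, exactly in the three cyclically co-oriented cases and fails to be a shift in the three remaining cases. Our hypothesis that $(i,j,k)$ is not co-oriented therefore forces the induced sub-permutation outside $\mathcal{C}_3$.

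Finally, Lemma~\ref{Lemma} applied to the three-point sub-problem guarantees that every optimal sub-matching belongs to $\mathcal{C}_3$; since the inherited sub-matching does not, its sub-cost strictly exceeds the sub-problem optimum. Replacing it by any $\mathcal{C}_3$ sub-optimum yields $\pi'\in\mathcal{S}_N$ with $E_N(\pi')<E_N(\pi^{\text{opt}})$, contradicting optimality. The main obstacle I expect is precisely the translation in the third paragraph: cyclic co-orientation is a condition on the original integer labels $1,\dots,N$, whereas membership in $\mathcal{C}_3$ is a condition on the induced sub-permutation after independently re-indexing the black and the white sub-points by their positions in $\Lambda$, so one has to handle the six orderings uniformly and be careful with the shift amount $k\in\{0,1,2\}$ in each case.
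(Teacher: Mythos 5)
Your proof is correct and is essentially the paper's own argument run in contrapositive form: the paper directly observes that, by Lemma~\ref{Lemma}, every triple of matched couples in an optimal assignment must be cyclically co-oriented (since the restriction to any three couples is itself an optimal $N=3$ sub-assignment), and then invokes Proposition~\ref{PropA1} to conclude $\pi\in\mathcal C_N$. The only point to watch is that your contradiction step asserts a \emph{strict} improvement, whereas Lemma~\ref{Lemma} (like Proposition~\ref{PropA2} on which it rests) only delivers non-strict inequalities, so in the presence of ties the conclusion should be read as the existence of an optimal permutation in $\mathcal C_N$ --- a caveat the paper's direct version shares.
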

\begin{proof}
Given the optimal permutation $\pi$, because of Lemma~\ref{Lemma}, in every subset of three matched couples they must be co-oriented. Therefore, because of Proposition~\ref{PropA1}, $\pi\in\mathcal C_N$.
\end{proof}

\subsection{Average properties of the optimal solution}
In the following, we will apply the previous results to the \reap with cost function $c(z)=z^p$, $p\in\mathds R \setminus[0,1]$. We will study the average properties of the optimal solution assuming that the points are uniformly and independently distributed on the unit interval. To stress the dependency on $p$, we will denote the matching cost and the mean cost per edge corresponding to the permutation $\pi\in\mathcal S_N$ by
\begin{subequations}
\begin{align}
 \mC{p}(\pi)&\coloneqq\sum_{i=1}^N|x_i-y_{\pi(i)}|^p,\\ 
 \mc{p}(\pi)&\coloneqq\frac{1}{N}\mC{p}(\pi),
\end{align}\end{subequations}
respectively. The average optimal cost will be given by
\begin{equation}
 \mc{p}\coloneqq\overline{\min_{\pi\in\mathcal S_N}\mc{p}(\pi).}
\end{equation} 
For this particular case, Theorem \ref{TheoA6} allows us to state the following
\begin{corollary}Given the assignment problem on the interval $\Lambda$ with cost function $c(z)=z^p$, denoting by $\pi$ the optimal permutation, then $\pi(i)=i$ for $p>1$, and $\pi\in\mathcal C_N$ for $p<0$.\label{CorrII7}
\end{corollary}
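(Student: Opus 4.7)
The plan is to observe that this corollary is a direct specialization of Theorems \ref{TheoA5} and \ref{TheoA6} to the family $c(z)=z^p$ on $\Lambda=[0,1]$, so the only work is to check, for each range of $p$, which hypothesis is satisfied.

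For the case $p>1$, I would simply note that $c(z)=z^p$ is strictly convex (since $c''(z)=p(p-1)z^{p-2}>0$ on $(0,1)$) and strictly increasing on $\Lambda$. Thus $c$ falls directly under Theorem \ref{TheoA5}, and the optimal permutation is the identity $\pi(i)=i$.

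For the case $p<0$, I would verify that $c(z)=z^p$ is a $\mathcal C$-function on $\Lambda$, after which Theorem \ref{TheoA6} immediately yields $\pi\in\mathcal C_N$. For the first defining inequality \eqref{prima}, it suffices to note convexity of $c$ on $(0,1)$, which again follows from $c''(z)=p(p-1)z^{p-2}>0$ (both factors $p$ and $p-1$ being negative), combined with the remark in the paper that for continuous functions convexity is equivalent to \eqref{prima}. For the second defining inequality \eqref{seconda}, written in the differentiable form \eqref{decreasing}, I would use $c'(z)=p z^{p-1}$ and observe that for $p<0$ one has $p<0$ and $z^{p-1}>0$, hence $c'(z)<0$ everywhere on $(0,1)$. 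Then for any $\eta\in(0,1)$ and $z\in(0,\eta)$,
\begin{equation}
c'(\eta-z)+c'(z) = p(\eta-z)^{p-1}+p z^{p-1} < 0,
\end{equation}
so \eqref{seconda} is satisfied.

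There is no real obstacle here; the mild subtlety is only to confirm that both inequalities in the definition of a $\mathcal C$-function hold simultaneously for $p<0$, which is immediate once one observes that a negative power is both convex and strictly decreasing on $(0,1)$. The exclusion of the range $0\leq p\leq 1$ is consistent with the hypotheses: for $0<p<1$ the cost is concave (so the optimum is planar rather than cyclic), while $p=0$ or $p=1$ give trivial or non-strictly-convex costs to which neither theorem applies in the form stated.
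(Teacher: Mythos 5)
Your proposal is correct and follows essentially the same route as the paper, which likewise reduces the corollary to Theorems \ref{TheoA5} and \ref{TheoA6} by observing that $z^p$ is strictly convex increasing for $p>1$ and a $\mathcal C$-function for $p<0$. The only difference is that you spell out the verification of the two defining inequalities of a $\mathcal C$-function (via convexity and the derivative condition \eqref{decreasing}), which the paper leaves implicit; that verification is sound.
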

\begin{proof}
It is enough to observe that, for $p>1$, the function $c(z)=z^p$ is a strictly increasing convex function, and apply therefore Theorem \ref{TheoA5}. On the other hand, for $p<0$, the function $c(z)$ is a $\mathcal C$-function, and we can apply Theorem \ref{TheoA6}.
\end{proof}
If periodic boundary conditions are assumed (i.e., the problem is considered on the unit circumference), we can derive a similar result. Indeed, the assignment problem with cost function $c(z)$ on the circumference can be restated as an assignment problem on $\Lambda$ with a modified cost function taking into account the periodicity. In particular, the assignment problem on the circumference with cost function $c(z)$ can be thought as an assignment problem on $\Lambda$ with cost function
\begin{equation}\label{cperiodic}
 \hat c(z)=c(z)\theta\left(\frac{1}{2}-z\right)\theta(z)+c(1-z)\theta(1-z)\theta\left(z-\frac{1}{2}\right),
\end{equation}
where $\theta(x)$ is the Heaviside step function. In this case, the following Corollary holds.
\begin{corollary2}\label{CorrII8}Let us consider the assignment problem on the circumference with cost function $c(z)=z^p$. Denoting by $\pi$ the optimal permutation, then $\pi\in\mathcal C_N$ for $p<0$ or $p>1$. In particular, for $p=1$ there exists a cyclic optimal solution. 
\end{corollary2}
\begin{proof}
As stated above, the assignment problem on the circumference with cost function $c(z)$ corresponds to the assignment problem on $\Lambda$ with cost function
\begin{equation}
 \hat c(z)=z^p\theta(z)\theta\left(\frac{1}{2}-z\right)+(1-z)^p\theta(1-z)\theta\left(z-\frac{1}{2}\right).
\end{equation}
The function above is a $\mathcal{C}$-function for $p<0$. Indeed, it is easily seen that $\Psi_\eta(z)=\hat c(\eta+z)-\hat c(z)$ is an increasing function on the interval $(0,1-\eta)$ for any value of $\eta\in(0,1)$, and therefore Eq.~\eqref{prima} holds. Moreover, the function $\Phi_\eta(z)=\hat c(\eta-z)-\hat c(z)$ is monotonically increasing on the interval $(0,\eta)$, and therefore Eq.~\eqref{seconda} is satisfied. The proof for the $p\geq 1$ case has been given in Ref.~\cite{Boniolo2012}.
\end{proof}

\subsubsection{Donsker's theorem and the Brownian bridge process} In Corollary~\ref{CorrII7} and Corollary~\ref{CorrII8} we have proved that, for $p\in\mathds R\setminus[0,1]$, the optimal permutation has the form $\pi(i)=i+k\Mod N$, for some $k\in [N]$ depending on the instance of our problem. In the case of open boundary conditions (i.e., of the problem on the interval), the optimal cost can be written as
\begin{equation}\label{sommapneg}
\min_{\pi\in\mathcal S_N}\mc{p}(\pi) = \min_{k\in [N]} \frac{1}{N}\sum_{i=1}^N\left|x_i - y_{i+k\Mod{N}}\right|^p,
\end{equation}
In particular, for $p>1$, the optimal permutation of the assignment problem on $\Lambda$ is always $\pi(i)=i$, independently from both the instance and the specific values of $p$, and therefore the optimal cost is simply given by
\begin{equation}\label{sommapmag1}
\min_{\pi\in\mathcal S_N}\mc{p}(\pi) = \frac{1}{N}\sum_{i=1}^N\left|x_i - y_{i}\right|^p.
\end{equation}
These results imply that the optimal solution is related, in the $N\to\infty$ limit, to a linear combination of two Brownian bridge processes, a fact that follows from Donsker's theorem \cite{Donsker1952,*Dudley1999}. 
\begin{donskerth}[Donsker] For any $N\in\mathds N$, there exists a probability space $\Omega_N$ such that we can define on it the random variable $\boldsymbol{\mathsf X}_N\coloneqq(\mathsf X_i)_i$, $\boldsymbol{\mathsf X}_N\colon\Omega_N\to \Lambda^N$, each component $\mathsf X_i$ being a random variable uniformly distributed on the unit interval $\Lambda$. Moreover, let us consider the corresponding $N$th empirical process,
\begin{equation}
 \mathsf F_N(t,\boldsymbol{\mathsf X}_N)\coloneqq\frac{1}{N}\sum_{i=1}^N\theta(t-\mathsf X_i)-t.
\end{equation}
Then we can find a sample-continuous Brownian bridge process on $\Lambda$, defined on the same probability space $\Omega_N$, $\mathsf B_N\colon \Lambda\times\Omega_N\to \mathsf B_N(t;\omega)$, such that, for all $\epsilon>0$,
\begin{equation}
 \lim_N\Pr\left[\sup_{t\in[0,1]}\left|\sqrt{N}\mathsf F_N(t,\boldsymbol{\mathsf X}_N(\omega))-\mathsf B_N(t;\omega)\right|>\epsilon\right]=0.
\end{equation}
\end{donskerth}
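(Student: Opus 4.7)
The plan is to deduce the statement from the classical Donsker invariance principle together with Skorokhod's almost sure representation. I would proceed in three stages: finite-dimensional convergence, functional tightness, and the coupling construction on the probability space $\Omega_N$ for each $N$.

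First, I would establish finite-dimensional convergence. For any finite collection $0\le t_1<\dots<t_k\le 1$, the vector $\sqrt{N}(\mathsf F_N(t_1,\boldsymbol{\mathsf X}_N),\dots,\mathsf F_N(t_k,\boldsymbol{\mathsf X}_N))$ is a normalized sum of $N$ i.i.d., centered, bounded random vectors whose $j$-th coordinate is $\theta(t_j-\mathsf X_i)-t_j$. An elementary covariance calculation gives $\min(t_i,t_j)-t_it_j$, and the multivariate central limit theorem then yields convergence in distribution to the centered Gaussian law with this covariance, which is exactly the law of a standard Brownian bridge sampled at $(t_1,\dots,t_k)$.

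Second, I would prove tightness of the laws of $\sqrt{N}\mathsf F_N$ in the Skorokhod space $D[0,1]$. The increments $\sqrt{N}(\mathsf F_N(t,\boldsymbol{\mathsf X}_N)-\mathsf F_N(s,\boldsymbol{\mathsf X}_N))$ are normalized sums of bounded i.i.d.\ centered variables of variance $(t-s)(1-(t-s))$. A fourth-moment or Bernstein-type estimate applied to these increments yields a modulus-of-continuity bound of Billingsley type; combined with the first step, this upgrades finite-dimensional convergence to functional weak convergence $\sqrt{N}\mathsf F_N\Rightarrow \mathsf B$ to a standard sample-continuous Brownian bridge $\mathsf B$.

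Third, I would apply Skorokhod's almost sure representation theorem. Because $\mathsf B$ has continuous trajectories almost surely, convergence to $\mathsf B$ in the Skorokhod topology coincides with convergence in the uniform norm on $\Lambda$. The representation theorem therefore produces, for each $N$, a probability space $\Omega_N$ carrying simultaneously a copy of the uniform sample $\boldsymbol{\mathsf X}_N$ and a sample-continuous Brownian bridge $\mathsf B_N$ for which the uniform distance between $\sqrt{N}\mathsf F_N(\,\cdot\,,\boldsymbol{\mathsf X}_N)$ and $\mathsf B_N$ tends to zero almost surely; almost sure convergence in the supremum norm implies convergence in probability, which is precisely the statement.

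The main technical obstacle is the tightness step: the modulus-of-continuity bound must be strong enough to guarantee that the limiting object is continuous, since weak convergence in $D[0,1]$ to a discontinuous limit would not translate into the uniform closeness required in the conclusion. Once the limit is identified with the sample-continuous Brownian bridge, the coupling step is an immediate application of Skorokhod's representation and no further probabilistic work is needed.
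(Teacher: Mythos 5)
The paper does not supply a proof of this statement at all: it is quoted as a classical theorem, with citations to Donsker and Dudley, and the quantitative refinement the paper actually uses later is the Koml\'os--Major--Tusn\'ady estimate of Eq.~\eqref{KMT}. Your sketch is therefore not competing with an argument in the text; it is the standard textbook proof (finite-dimensional convergence by the multivariate CLT with covariance $\min(s,t)-st$, tightness in $D[0,1]$, identification of the sample-continuous limit, and a Skorokhod almost-sure-representation coupling), and in outline it is correct. One point in the tightness stage deserves more care than your phrasing suggests: the fourth moment of a single rescaled increment over $[s,t]$ is of order $(t-s)^2+(t-s)/N$, and the extra $(t-s)/N$ term blocks a direct application of the single-increment moment criterion when $t-s\lesssim \sfrac{1}{N}$. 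The classical repair is to verify instead the two-increment condition $\mathbb{E}\bigl[|\xi_N(s,t)|^2\,|\xi_N(t,u)|^2\bigr]\leq C(u-s)^2$ for adjacent increments (which holds cleanly because the indicator variables for disjoint intervals are negatively correlated), or equivalently to treat increments shorter than $\sfrac{1}{N}$ separately; this is precisely the form of Billingsley's criterion that applies to the empirical process. With that repair your second stage goes through, and the third stage is sound: since the limiting bridge is sample-continuous, convergence in the Skorokhod topology coincides with uniform convergence, and the almost-sure representation (or, equivalently, the Strassen--Dudley coupling characterization of weak convergence) yields the in-probability statement with a common space $\Omega_N$ carrying both $\boldsymbol{\mathsf X}_N$ and $\mathsf B_N$. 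It is worth noting that a sharper alternative, and the one the paper implicitly leans on for its error estimates, is the KMT construction, which produces the coupling directly with an $O(\ln N/\sqrt N)$ uniform error and bypasses the weak-convergence detour entirely.
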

Donsker's theorem expresses the (weak) convergence of the process $\mathsf F_N$ to a Brownian bridge process in the $N\to\infty$ limit. The convergence rate has been studied by \textcite{Komlos1975}, that proved that
\begin{equation}\label{KMT}
 \sup_{0\leq t\leq 1}\left|\sqrt{N}\mathsf F_N(t,\boldsymbol{\mathsf X}_N(\omega))-\mathsf B_N(t;\omega)\right|=O\left(\frac{\ln N}{\sqrt{N}}\right)
\end{equation}
almost surely \cite{csorgo1981}.

Let $\Xi_N=\{x_i\}_i\equiv \boldsymbol{\mathsf X}_N(\omega)$ be a realization of $\boldsymbol{\mathsf X}_N$ for a given instance of the problem $\omega\in\Omega_N$. The empirical process $\mathsf F_N$ is given by
\begin{equation}
\mathsf F_N(t,\boldsymbol{\mathsf X}_N(\omega))=\frac{1}{N}\sum_{i=1}^N\theta(t-x_i)-t
\end{equation}
Supposing now that the elements of $\Xi_N$ are labeled in such a way that $x_i<x_j\Leftrightarrow i<j$, we have \begin{equation}
\lim_{t\to x_i^+}\mathsf F_N(t,\boldsymbol{\mathsf X}_N(\omega))=\frac{i}{N}-x_i.\end{equation}
Given therefore two realizations $\omega$ and $\tilde\omega$, corresponding to $\Xi_N=\{x_i\}_i$ and $\Upsilon_N=\{y_i\}_i\equiv \boldsymbol{\mathsf X}_N(\tilde\omega)$ respectively, both generated as above, we can write
\begin{equation}
 y_{j}-x_i=\frac{j-i}{N}+\lim_{t\to x_i^+}\mathsf F_N(t,\boldsymbol{\mathsf X}_N(\omega))-\lim_{t\to y_{j}^+}\mathsf F_N(t,\boldsymbol{\mathsf X}_N(\tilde \omega)).
\end{equation}
Denoting by $i=Nu+\sfrac{1}{2}$ and $j=Nv+\sfrac{1}{2}$, $u,v\in\Lambda$, and observing that, for large $N$ and $\epsilon>0$,
\begin{equation}
 \Pr\left[\left|x_i-u\right|>\epsilon\right]\simeq \exp\left(-\frac{N\epsilon^2}{2u(1-u)}\right)
\end{equation}
Donsker's theorem allows us to write
\begin{multline}\label{limitDonsker}
 \sqrt{N}\left(y_{Nv+\sfrac{1}{2}}-x_{Nu+\sfrac{1}{2}}+u-v\right)\\
 \xrightarrow{N\to +\infty}\mathsf B(u;\omega)-\mathsf B(v;\tilde \omega),
\end{multline}
where the limit is intended in probability. This results implies that we can write the arguments of the sums in Eq.~\eqref{sommapneg} and Eq.~\eqref{sommapmag1} in terms of Brownian bridge processes in the large $N$ limit.
\subsubsection{Open boundary conditions}
To be more specific, we start analyzing the \reap on the interval, assuming a cost function $c(z)=z^p$ with $p\in\mathds R\setminus[0,1]$.
\paragraph{The $p>1$ case.} Many aspects of the \reap in one dimension for $p>1$ have been analyzed in Refs.~\cite{McCann1999,Boniolo2012,Caracciolo2014c}. As observed above, the optimal permutation in this case is always the identity one, $\pi(i)=i$. Denoting by
\begin{equation}
\varphi_i\coloneqq y_i-x_i,
\end{equation}
we can write, for any instance and any value of $N$,
\begin{equation}
\min_{\pi\in\mathcal S_N}\mc{p}(\pi)=\frac{1}{N}\sum_{i=1}^N|\varphi_i|^p.
\end{equation}
From Donsker's theorem, we know that in the $N\to+\infty$ limit,
\begin{equation}
 \phi(s)\coloneqq\sqrt{N}\varphi_{Ns+\sfrac{1}{2}}\xrightarrow{N\to +\infty}\mathsf B(s;\omega)-\mathsf B(s;\tilde \omega),
\end{equation}
where we have introduced the new variable $s$ such that $i=Ns+\sfrac{1}{2}$. The authors of Refs.~\cite{Boniolo2012,Caracciolo2014c} used this correspondence between the Brownian bridge process and the optimal solution of the Euclidean assignment problem to derive the expression of the average optimal cost and the correlation function in the $N\to+\infty$ limit. They obtained
\begin{equation}\label{costoBoniolo}
N^{\frac{p}{2}}\mc{p}=\frac{\Gamma\left(1+\frac{p}{2}\right)}{p+1}+O\left(\frac{1}{N}\right),
\end{equation} 
whereas the correlation function has been studied in Refs.~\cite{Boniolo2012}. Here we derive
\begin{multline}\label{corrpmag1}
c_p(r)\coloneqq \frac{1}{\mathcal N(r)}\iint_0^1\overline{\phi(s)\phi(t)}\delta\left(|s-t|-r\right)\dd s\dd t\\
=\iint_0^1\frac{\overline{\phi(s)\phi(t)}-\overline{\phi(s)}\,\overline{\phi(t)}}{\mathcal N(r)}\delta\left(|s-t|-r\right)\dd s\dd t\\=\frac{1}{3}\left(1-r\right)^2,\quad r\in\Lambda,
\end{multline}
where
\begin{equation}
 \mathcal N(r)\coloneqq\iint_0^1\delta\left(|s-t|-r\right)\dd s\dd t=2(1-r).
\end{equation}
The correlation function has been obtained using the following fundamental property of the Brownian bridge process,
\begin{equation}
 \overline{\mathsf B(s;\omega)\mathsf B(t;\omega)}=\min\{s,t\}-st.\label{covbb}
\end{equation}

In the following, we derive the finite size corrections to the asymptotic cost in Eq.~\eqref{costoBoniolo}, through a straightforward computation on the optimal matching solution and following the approach of \textcite{Boniolo2012}. Let us first observe that the probability of finding $x_k$ in the interval $(x,x+\dd x)$ is
\begin{equation}\label{prbin}
\Pr\left[x_k\in\dd x\right]=B(k;N,x)\frac{k\dd x}{x},\quad k=1,\dots,N,
\end{equation}
where we have introduced the binomial distribution
\begin{equation}
B(k;N,p)\coloneqq \binom{N}{k}p^k(1-p)^{N-k}.
\end{equation}
In Eq.~\eqref{prbin} we have used the notation $x_k\in\dd x\Leftrightarrow x_k\in(x,x+\dd x)$. Due to the fact that the random variables $x_k$ and $y_k$ are independent, we can write
\begin{multline}\label{distphipmag}
 \Pr[\varphi_k\in\dd\varphi]=\dd\varphi\binom{N}{k}^2 k^2\times\\
 \times\iint_0^1\! \delta(\varphi-y+x)(xy)^{k-1}[(1-x)(1-y)]^{N-k}\dd x\dd y.
\end{multline}
Eq.~\eqref{distphipmag} allows in principle the calculation of the average optimal cost for any value $N$. For example, in the $p=2$ case, we obtain
\begin{equation}
\min_{\pi}\mc{2}(\pi)
=\frac{1}{3(1+N)}
=\frac{1}{3N}-\frac{1}{3N^2}+o\left(\frac{1}{N^2}\right).
\end{equation}
The calculations, however, greatly simplify in the $N\to+\infty$ limit. Introducing the variable $\phi(s)\coloneqq \sqrt{N}\varphi_{Ns+\sfrac{1}{2}}$, Eq.~\eqref{distphipmag} can be written, up to higher order terms, as
\begin{multline}
\Pr[\phi(s)\in\dd\phi]=\\
 =\dd\phi\frac{\e^{-\frac{\phi^2}{4s(1-s)}}}{2\sqrt{\pi s(1-s)}}
 \left\{1+\frac{s(1-s)+1}{8Ns(1-s)}+\frac{7s(1-s)-2}{8N s^2(1-s)^2}\phi^2\right.\\
 \left.+\frac{1-5s(1-s)}{32Ns^3(1-s)^3}\phi^4+o\left(\frac{1}{N}\right)\right\},\label{distphi}
\end{multline}
see Appendix~\ref{app:distphi}. As expected, the leading term is the distribution of a Brownian bridge process on the domain $\Lambda$ \footnote{This result has been obtained, for example, in Ref.~\cite{Boniolo2012}. Observe, however, that in Ref.~\cite{Boniolo2012} one set of points is supposed to be fixed and, for this reason, the variance of the distribution is one half of the variance of the leading term in Eq.~\eqref{distphi}.}. From Eq.~\eqref{distphi} we can easily obtain
\begin{multline}\label{costopmag1}
N^{\frac{p}{2}}\mc{p}= \int_0^1\overline{|\phi(s)|^p}\dd s\\
=\frac{\Gamma\left(1+\sfrac{p}{2}\right)}{p+1}\left(1-\frac{1}{N}\frac{p(p+2)}{8}\right)+o\left(\frac{1}{N}\right).
\end{multline} 
The expression of the leading term in Eq.~\eqref{costopmag1} has been numerically verified, for example, in Refs.~\cite{Boniolo2012,Caracciolo2014c}. In Fig.~\ref{fig:pmag1} we compare the results of our simulations with the theoretical prediction given in Eq.~\eqref{costopmag1}.

\paragraph{The $p<0$ case.} Let us now consider the $p<0$ and let us define
\begin{multline}\label{phiN}
 \frac{1}{\sqrt N}\phi_t^{(N)}(s)\coloneqq \\
 =y_{N[s+t\Mod{1}]+\sfrac{1}{2}}-x_{Ns+\sfrac{1}{2}}-\sigma(s,t),
\end{multline}
where $Ns+\sfrac{1}{2}=k\in[N]$ and $Nt\in[N]$ and
\begin{equation}
 \sigma(s,t)\coloneqq [s+t\Mod{1}]-s,\quad s,t\in\Lambda.
\end{equation}
Corollary \ref{CorrII7} states that, for a given instance of our problem, the optimal solution corresponds to a certain value $t$ such that
\begin{multline}
 \min_\pi\varepsilon^{(p)}_N(\pi)=\\
 =\frac{1}{N}\sum_{k=1}^N\left|\sigma\left(\frac{k-\sfrac{1}{2}}{N},t\right)+\frac{1}{\sqrt{N}}\phi_t^{(N)}\left(\frac{k-\sfrac{1}{2}}{N}\right)\right|^p.
\end{multline}
From Donsker's theorem, we have
\begin{equation}\label{bblimit}
 \phi_t^{(N)}(s)\xrightarrow{N\to+\infty}\phi_t(s)\coloneqq \mathsf B(s;\omega)-\mathsf B\left(s+t\Mod 1;\tilde \omega\right),
\end{equation}
and the optimal cost can be written, in the large $N$ limit, as
\begin{multline}\label{costshift}
 \frac{1}{N}\sum_{k=1}^N\left|\sigma\left(\frac{k-\sfrac{1}{2}}{N},t\right)+\frac{1}{\sqrt{N}}\phi_t^{(N)}\left(\frac{k-\sfrac{1}{2}}{N}\right)\right|^p\xrightarrow{N\gg 1}\\
 \int_0^1\left|\sigma(s,t)+\frac{\phi_t^{(N)}(s)}{\sqrt{N}}\right|^p\dd s,
\end{multline}
for some value of $t$ depending both on $p$ and on the specific instance of the problem. The value of $t$ for the optimal solution can be found by minimizing the expression above respect to $t$. We proceed perturbatively, observing that
\begin{equation}
 \lim_{N}\int_0^1\left|\sigma(s,t)+\frac{\phi_t^{(N)}(s)}{\sqrt{N}}\right|^p\dd s=t^p(1-t)+t(1-t)^p,
\end{equation}
which is minimized by $t=\sfrac{1}{2}$. To evaluate the nontrivial finite-size corrections, we assume therefore
\begin{equation}\label{tau0}
 t\coloneqq\frac{1}{2}+\frac{\tau}{\sqrt N},
\end{equation}
where $\tau$ depends both on $p$ and on the instance of the problem. Performing a large $N$ expansion, the cost can be written as
\begin{multline}\int_0^1\left|\sigma(s,t)+\frac{\phi_t^{(N)}(s)}{\sqrt{N}}\right|^p\,d s\\
=\frac{1}{2^p}+p\int_0^{1}\frac{\mathrm{sign}\left(\sfrac{1}{2}-s\right)}{2^{p-1}\sqrt{N}}\phi^{(N)}_{\sfrac{1}{2}}(s)\dd s\\-\frac{p\tau\left(\tau+\phi_{\sfrac{1}{2}}\left(\sfrac{1}{2}\right)\right)}{2^{p-2}N}+\frac{p(p-1)}{2^{p-1}N}\int_0^{1}\left(\tau+\phi_{\sfrac{1}{2}}(s)\right)^2\dd s\\
+\tau p\int_0^{1}\frac{\mathrm{sign}\left(\sfrac{1}{2}-s\right)}{2^{p-1}N}\left.\partial_t\phi_t(s)\right|_{t=\frac{1}{2}}\dd s+o\left(\frac{1}{N}\right).
\end{multline}
Here and in the following we adopt the convention $\mathrm{sign}(0)=1$. Observe that, due to Eq.~\eqref{KMT}, we can neglect the corrections to the asymptotic limit given in Eq.~\eqref{bblimit} in all terms appearing in the expansion above, except in the second one, that must be treated differently when the average will be performed, due to the different scaling of the coefficient. Being $\phi_t(s)=\mathsf B(s;\omega)-\mathsf B\left(s+t\Mod 1;\hat\omega\right)$, we can formally write in the last term $\partial_t\phi_t(s)=-\partial_t\mathsf B\left(s+t\Mod 1,\hat\omega\right)=-\partial_s\mathsf B\left(s+t\Mod 1,\hat\omega\right)$ and therefore, after an integration by parts, the expression above becomes
\begin{multline}\label{costopmin1-tau}
 \int_0^1\left|\sigma(s,t)+\frac{\phi^{(N)}_t(s)}{\sqrt{N}}\right|^p\dd s=\\
 =\frac{1}{2^p}-\frac{p\tau\left(\tau+\phi_{\sfrac{1}{2}}\left(\sfrac{1}{2}\right)+\phi_{\sfrac{1}{2}}(0)\right)}{2^{p-2}N}\\
 +\frac{p(p-1)}{2^{p-1}N}\int_0^{1}\left(\tau+\phi_{\sfrac{1}{2}}(s)\right)^2\dd s\\
 +\frac{p}{2^{p-1}\sqrt{N}}\int_0^{1}\mathrm{sign}\left(\sfrac{1}{2}-s\right)\phi_{\sfrac{1}{2}}^{(N)}(s)\dd s+o\left(\frac{1}{N}\right).
\end{multline}
Minimizing respect to $\tau$, we obtain, up to higher terms,
\begin{equation}\label{tau}
 \tau=-\frac{p-1}{p-3}\int_0^1\phi_{\sfrac{1}{2}}(s)\dd s+\frac{\phi_{\sfrac{1}{2}}\left(\sfrac{1}{2}\right)+\phi_{\sfrac{1}{2}}(0)}{p-3}.\end{equation}
We have verified our assumption in Eq.~\eqref{tau0} and therefore Eq.~\eqref{tau}. In particular, Eq.~\eqref{tau} implies $\overline{\tau}=0$ and \footnote{It is amusing to remark that if we expand $ 6\overline{\tau^2}$ around $p=-\infty$ we get a series with integer coefficients
\[
 6 \overline{\tau^2} = \sum_{n\ge 0} \frac{a_n}{p^n},\quad a_n=  
\begin{cases} 
1 & \hbox{for } n=0,\\
3^{n-2}(n+2) & \hbox{for } n>0. 
\end{cases}
\]
The sequence $(a_n)_{n\in\mathds N}$ can have a combinatorial interpretation. It is indeed the number of spanning trees in a $(n-2)$-book graph \cite{oeis,*Doslic2013}.}
\begin{equation}\label{tau2th}
\overline{\tau^2}=\frac{p^2-5p+7}{6(p-3)^2}.
\end{equation}
The results of our numerical simulations, given in the inset in Fig.~\ref{fig:pmin0}, show a good agreement between the prediction in Eq.~\eqref{tau2th} and simulations.

To obtain the average optimal cost, we have to substitute Eq.~\eqref{tau} into Eq.~\eqref{costopmin1-tau}, and then average over the possible realizations, using the fact that, as consequence of Eq.~\eqref{covbb}, the following property holds:
\begin{multline}
 \overline{\phi_{\sfrac{1}{2}}(s)\phi_{\sfrac{1}{2}}(t)}=\min\left\{s,t\right\}-st\\
 +\min\left\{s+\frac{\mathrm{sign}\left(\sfrac{1}{2}-s\right)}{2},t+\frac{\mathrm{sign}\left(\sfrac{1}{2}-t\right)}{2}\right\}\\-\left(s+\frac{\mathrm{sign}\left(\sfrac{1}{2}-s\right)}{2}\right)\left(t+\frac{\mathrm{sign}\left(\sfrac{1}{2}-t\right)}{2}\right).\label{covbb2}
\end{multline}
The average of the last term in Eq.~\eqref{costopmin1-tau} requires the 
evaluation of $\overline{\phi_{\sfrac{1}{2}}^{(N)}(s)}$ that must be performed, 
as anticipated, including the corrections to the limiting Brownian bridge 
distribution. Introducing for the sake of brevity 
$\varsigma=s+\sigma(s,\sfrac{1}{2})$, we get
\begin{multline}\label{distphipminz1}
\Pr[\phi^{(N)}_{\sfrac{1}{2}}(s)\in\dd\phi]
=\frac{\exp\left(-\frac{\phi^2}{2s(1-s)+2\varsigma(1-\varsigma)}\right)\dd\phi}{\sqrt{2\pi} \sqrt{s(1-s)+\varsigma(1-\varsigma)}}\\
\times \left[1+\frac{s-\varsigma}{\sqrt{N}}\left(\frac{(s+\varsigma-1)^2}{(s(1-s)+\varsigma(1-\varsigma))^2} \phi\right.\right.\\
\left.\left.-\frac{1-(s-\varsigma)^2-3s(1-s)-3\varsigma(1-\varsigma)}{3(s(1-s)+\varsigma(1-\varsigma))^3}\phi^3\right)+o\left(\frac{1}{\sqrt N}\right)\right],
\end{multline}
which provides
\begin{equation}
 \overline{\phi^{(N)}_{\sfrac{1}{2}}(s)}=-\frac{\sigma(s,\sfrac{1}{2})}{\sqrt{N}}.
 \end{equation}
Collecting the results above, we finally obtain
\begin{equation}\label{cpminzero2}
\mc{p}
=\frac{1}{2^p}\left[1+\frac{1}{3N}\frac{p(p-2)(p-4)}{p-3}\right]
+o\left(\frac{1}{N}\right).
\end{equation}
We verified the previous formula numerically. The numerical results show a good agreement with the theoretical prediction in Eq.~\eqref{cpminzero2}, see Fig.~\ref{fig:pmin0}. 

Given the optimal permutation $\pi$, such that $\pi(i)=i+k\Mod{N}$, the correlation function for the matching field
\begin{equation}
 \mu_i=y_{\pi(i)}-x_{i}\xrightarrow[i=Ns+\sfrac{1}{2}]{N\to+\infty}\mu(s)=\begin{cases}\frac{1}{2}&\text{if }0\leq s<\frac{1}{2}\\ -\frac{1}{2}&\text{if }\frac{1}{2}<s\leq1,\end{cases}
\end{equation}
is easily calculated as
\begin{multline}\label{corrfunpmin0intera}
c_p(r)\coloneqq\frac{1}{\mathcal N(r)}\iint_0^1\overline{\mu(s)\mu(t)}\delta\left(|s-t|-r\right)\dd s\dd t\\
=\begin{cases}\frac{3}{4}-\frac{1}{2(1-r)}&\text{if }0\leq r<\frac{1}{2}\\
  -\frac{1}{4}&\text{if }\frac{1}{2}<r\leq1.
 \end{cases}
\end{multline}
More interestingly, we introduce the correlation function for the field
\begin{equation}
 \hat \mu_{i}\coloneqq\sqrt{N}\left[y_{\pi(i)}-x_{i}-\frac{\mathrm{sign}\left(N-k-i\right)}{2}\right].
\end{equation}
For $i=Ns+\sfrac{1}{2}$, in the $N\to\infty$ limit keeping $s$ fixed, we have
\begin{equation}
 \hat\mu_{Ns+\sfrac{1}{2}}\xrightarrow{N\to+\infty}\phi_{\sfrac{1}{2}}(s)+\tau\eqqcolon\hat\mu(s).
\end{equation}
For $0\leq r\leq 1$,
\begin{multline}\label{corrfunpmin0}
 \hat c_p(r)\coloneqq\iint_0^1\frac{\overline{\hat\mu(s)\hat\mu(t)}}{\mathcal N(r)}\delta\left(|s-t|-r\right)\dd s\dd t\\
 =\iint_0^1\frac{\overline{\left(\tau+\phi_{\sfrac{1}{2}}(s)\right)\left(\tau+\phi_{\sfrac{1}{2}}(t)\right)}}{\mathcal N(r)}\delta\left(|s-t|-r\right)\dd s\dd t\\
 =\begin{cases}
   \frac{p^2-6p+10}{6(p-3)^2}+\frac{10r^2-30r+17}{6(p-3)(1-r)}r-\frac{pr(1-r)}{p-3}&\text{if }0\leq r\leq\sfrac{1}{2},\\
   \frac{p^2-3p+1}{6(p-3)^2}-\frac{3p-4}{3(p-3)}r+\frac{3p-5}{3(p-3)}r^2&\text{if }\sfrac{1}{2}<r\leq 1.
  \end{cases}
\end{multline}
Observe that
\begin{equation}
 \lim_{p\to-\infty}\hat c_{p}(r)=\frac{1}{6}-r(1-r),
\end{equation}
that coincides with the correlation function for the assignment problem with $p=2$ on the circumference \cite{Caracciolo2014c,Caracciolo2015}. This fact is not a coincidence. Indeed, for $p\to -\infty$, we have that (see below)
\begin{equation}
 \lim_{p\to-\infty}\tau=-\int_0^1\phi_{\sfrac{1}{2}}(s)\dd s.
\end{equation}
By comparison with Eq.~\eqref{taup2} below, it will be clear that $\hat\mu(s)$ 
for $p\to-\infty$ coincides with the solution of the assignment problem for 
$p=2$ on the circumference in which one set of points is translated by 
$\sfrac{1}{2}$. In Fig.~\ref{fig:corrpmin0obc} we compare the predictions above 
for $c_p$ and $\hat c_p$ with our numerical results.

\begin{figure}
\includegraphics[width=0.45\textwidth]{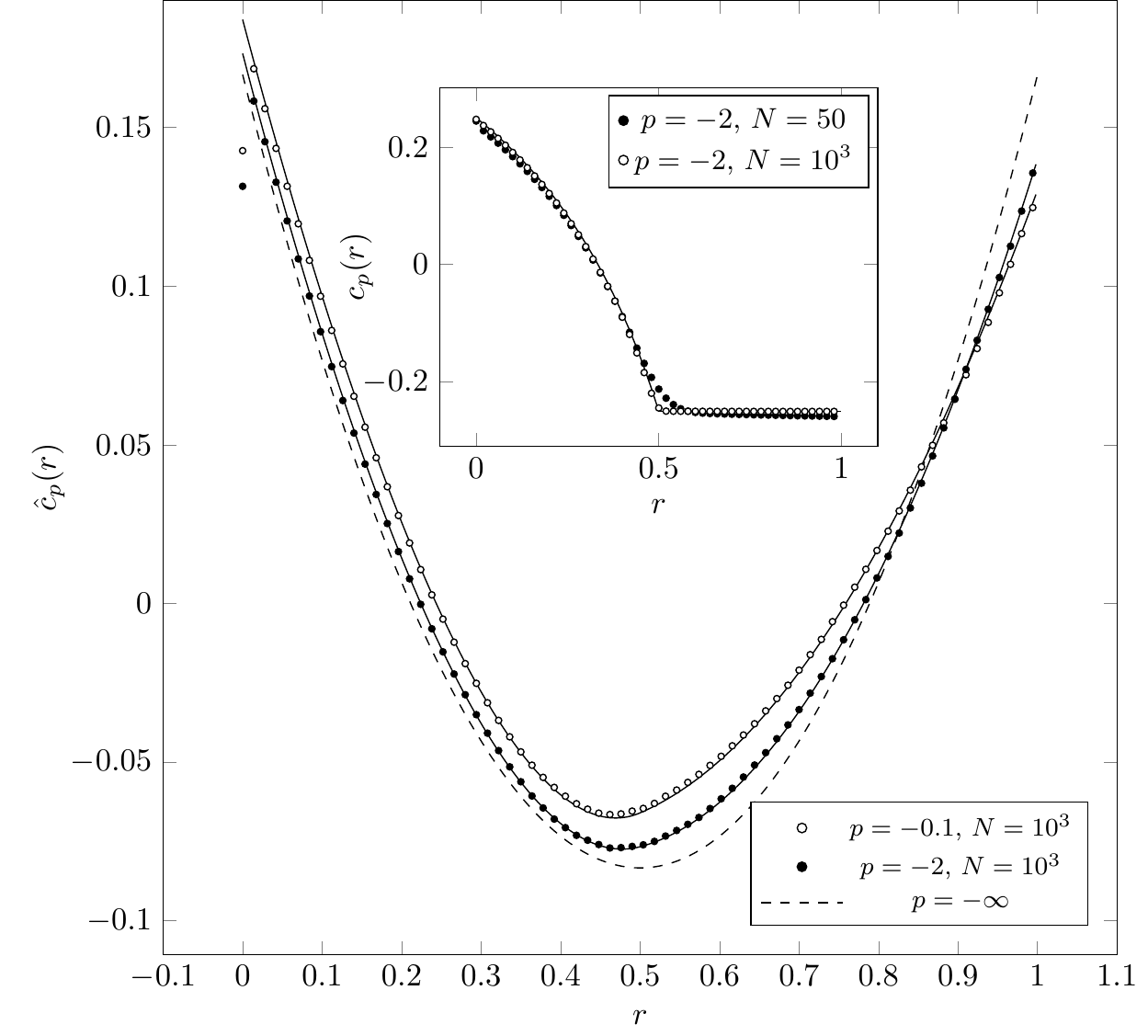}
\caption{Correlation functions $c_p(r)$ (inset) and $\hat c_p(r)$ for the assignment problem on the interval $\Lambda$ with $p<0$, as defined in Eq.~\eqref{corrfunpmin0intera} and Eq.~\eqref{corrfunpmin0}, respectively. Observe that a finite-size effect appears for $r\to 0$ in the numerical data for $\hat c_p(r)$. In all cases, the theoretical predictions are represented in solid line.\label{fig:corrpmin0obc}}
\end{figure}

   \begin{figure*}[!ht]
     \subfloat[Numerical results for the average optimal cost (inset) and its finite-size corrections in the assignment problem with $p>1$ on $\Lambda$. We fitted the obtained numerical results using the fitting function $f(N)=a+\sfrac{b}{N}$ for each value of $p$, $a$ and $b$ being fitting parameters corresponding to the leading cost and to the finite-size corrections, respectively.  We present our numerical results and we compare them with our prediction given in Eq.~\eqref{costopmag1}.\label{fig:pmag1}]{ \includegraphics[width=0.45\textwidth]{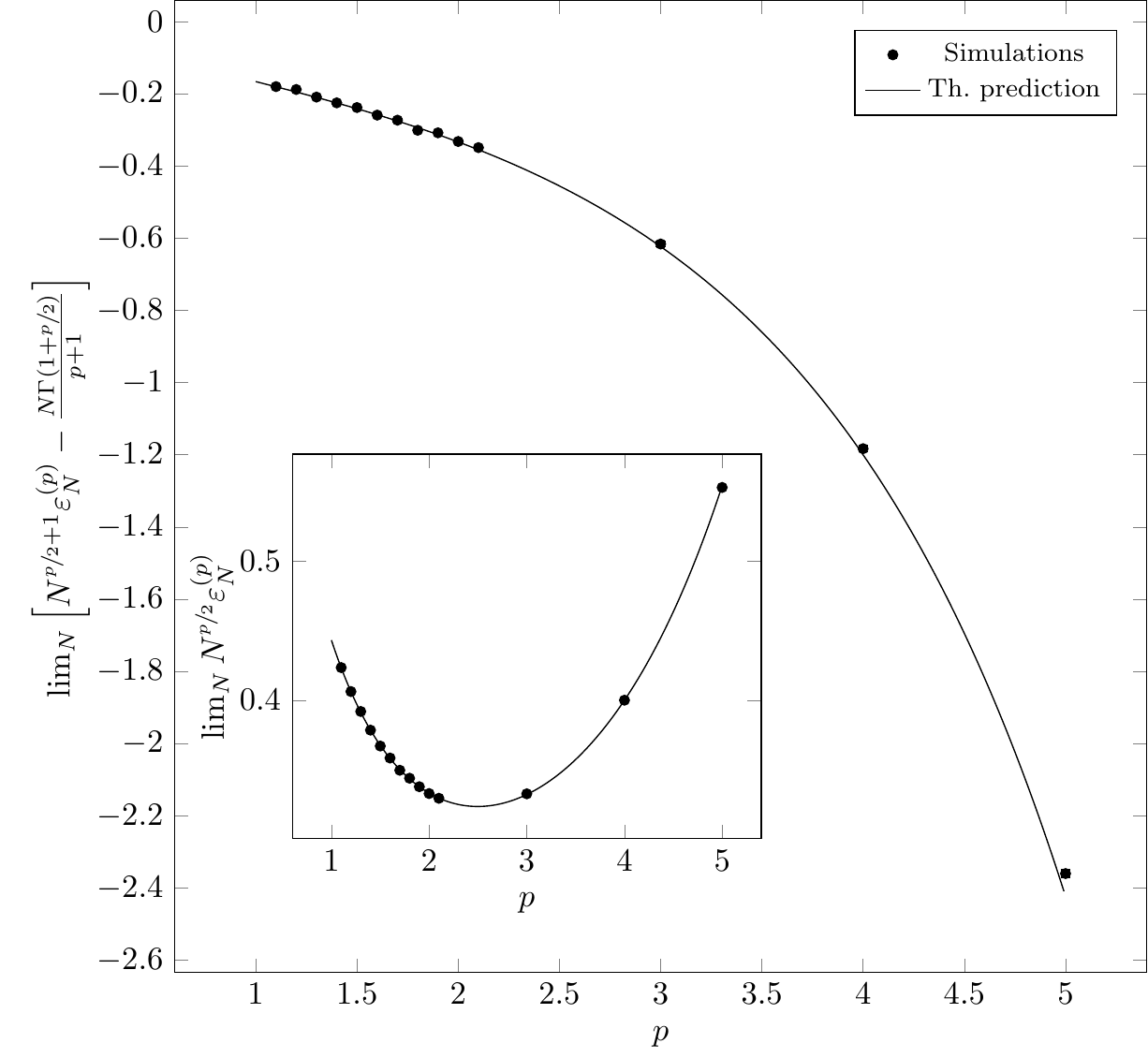}}
     \hfill
     \subfloat[Numerical results for the average optimal cost in the assignment problem on $\Lambda$ with $p<0$. 
     The theoretical predictions are given in Eq.~\eqref{cpminzero2} and they are represented in solid lines. In the upper inset, numerical results for $\overline{\tau^2}$ for different values of $p$, obtained using two-parameters fitting function in the form $f(N)=\alpha+\sfrac{\beta}{N}$, $\alpha$ being the numerical estimation for $\overline{\tau^2}$. We compare them with the prediction in Eq.~\eqref{tau2th}.\label{fig:pmin0}]{\includegraphics[width=0.45\textwidth]{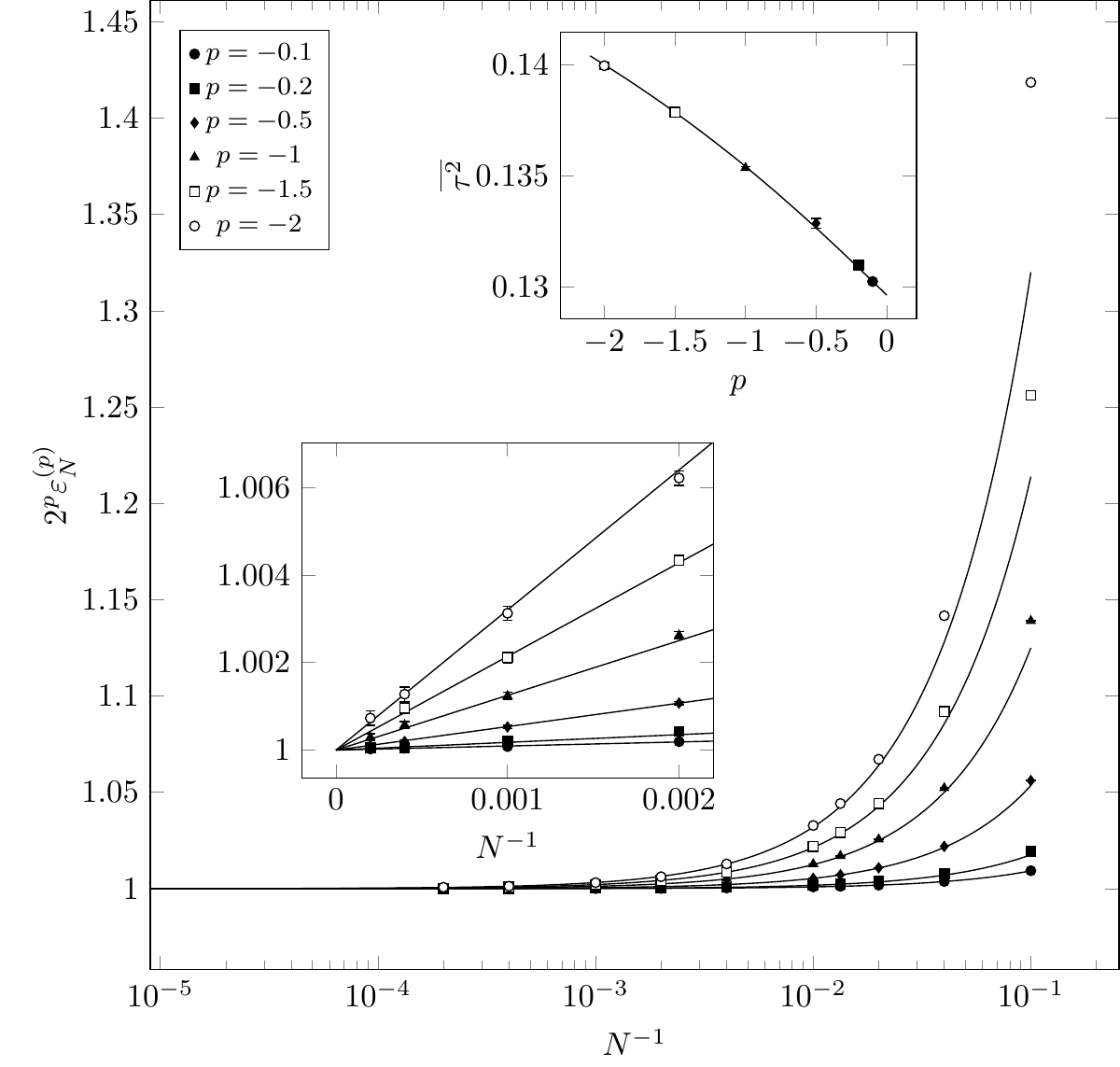}}\\
     \subfloat[Numerical results for the average optimal cost (inset) and its finite size corrections in the assignment problem with $p>1$ on the circumference. We fitted the numerical results obtained for different values of $N$ at fixed $p$ using the fitting function in the form given in Eq.~\eqref{scalingpbc}. See also Table~\ref{tab:datipbc}. 
     \label{fig:pmag1pbc}]{%
       \includegraphics[width=0.45\textwidth]{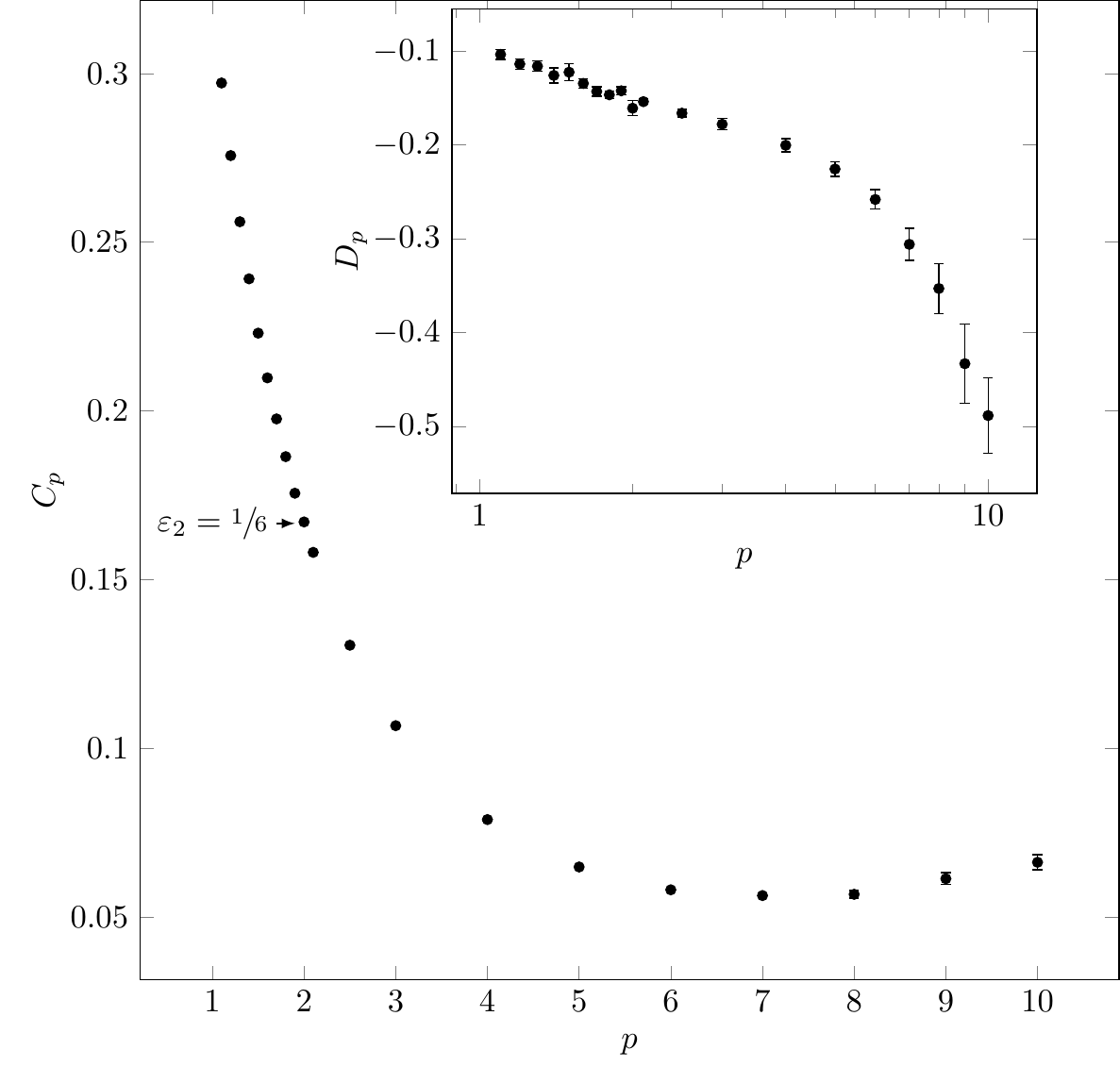}
     }\hfill
     \subfloat[Numerical results for the average optimal cost and its finite size corrections in the assignment problem with $p<0$ on the circumference. We present our numerical results and we compare them with our theoretical predictions for the asymptotic behavior given in Eq.~\eqref{costopmin0PBC} (solid lines).\label{fig:pmin0PBC}]{%
       \includegraphics[width=0.45\textwidth]{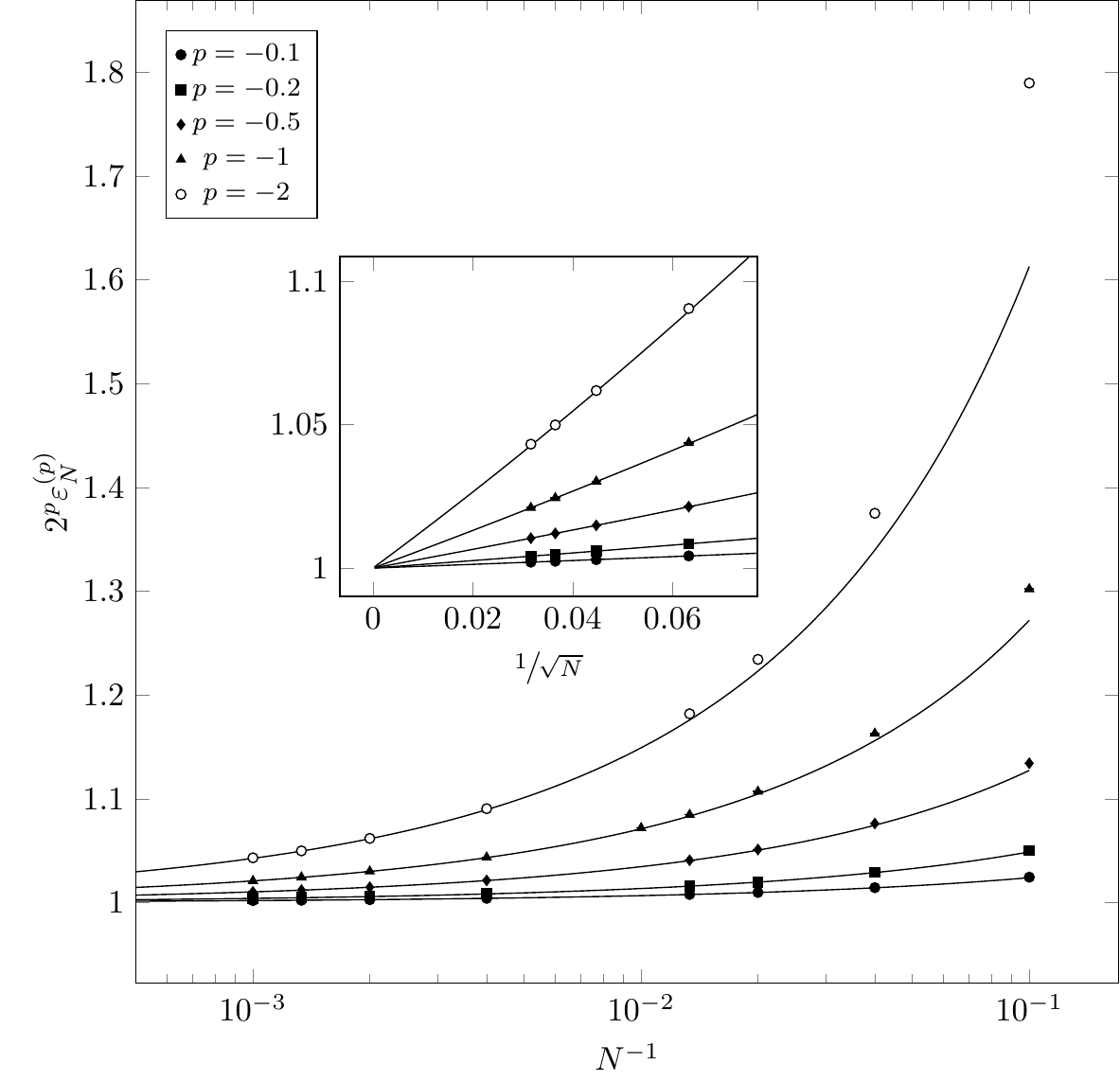}
     }
     \caption{Numerical results for the assignment problem. Error bars are typically smaller than the markers in the figures.}
     \label{fig:assignment}
   \end{figure*}
\subsubsection{Periodic boundary conditions} 
Corollary~\ref{CorrII8} states that, in the case of periodic boundary conditions, both for $p>1$ and for $p<0$, the optimal matching can be found searching for the optimal permutation in the set $\mathcal C_N$. The calculation above for the assignment problem on the interval, however, has to be slightly modified, due to the fact that the cost function is replaced by the one in Eq.~\eqref{cperiodic}, whereas Eq.~\eqref{limitDonsker} still holds. In particular, the average optimal cost has the form given in Eq.~\eqref{costshift}, with $\sigma(s,t)\equiv\sigma(t)=t\in[0,1)$, value of the global shift depending on the specific instance and on the value of $p$. The optimal cost can be written, in the large $N$ limit, as
\begin{equation}
 \int_0^1\left[\left|t+\frac{\phi_t(s)}{\sqrt N}\right|\Mod{\sfrac{1}{2}}\right]^p\dd s,
\end{equation}
for a certain value of $t$ obtained by minimization.

\paragraph{The $p>1$ case.} The $p>1$ case has been analyzed in 
Refs.~\cite{Caracciolo2014c,Boniolo2012,Caracciolo2015}. In this case at the 
leading order we obtain $t=0$ and therefore $t= o(1)$. An explicit 
expression of $t$ is known for $p=2$ and $p\to+\infty$ only. In general, 
we can assume that $t=\sfrac{\tau}{\sqrt N}+o\left(\sfrac{1}{\sqrt N}\right)$. 
The value of $\tau$ is obtained minimizing, in the $N\to+\infty$ limit, the 
expression
\begin{equation}
 \int_0^1\left|\tau+\phi_0(s)\right|^p\dd s.
\end{equation}
For $p=2$ we have \cite{Boniolo2012,Caracciolo2014c}
\begin{subequations}\label{taup2}
\begin{align}
  \mc{2}&=\frac{1}{6N}+o\left(\frac{1}{N}\right),\\
  \tau&=-\int_0^1\phi_0(s)\dd s.
\end{align}
\end{subequations}
Unfortunately, no general expression for $\tau$ is available to our knowledge. Numerical simulations suggests that the average optimal cost scales as
\begin{equation}
 \mc{p}=\frac{1}{N^\frac{p}{2}}\left[C_p+\frac{D_p}{N}+o\left(\frac{1}{N}\right)\right].\label{scalingpbc}
\end{equation}
In Fig.~\ref{fig:pmag1pbc} and in Table~\ref{tab:datipbc}  we present our numerical results for the average optimal cost and its finite-size corrections for the assignment problem on the circumference in the $p>1$ case. The data have been obtained using Eq.~\eqref{scalingpbc} to extrapolate the $N\to+\infty$ limit for both $C_p$ and $D_p$. In Ref.~\cite{Caracciolo2014} the $p=2$ case was carefully analyzed using a particular scaling ansatz, and the scaling in Eq.~\eqref{scalingpbc} was numerically verified. In particular, they obtained $C_2=0.166668(3)$ and $D_2=-0.1645(13)$. We refer to Ref.~\cite{Caracciolo2014c} for further discussion on the correlation function on the circumference.

\begingroup\squeezetable
\begin{table}
\begin{ruledtabular}
\begin{tabular}{lcc}
$p$&$C_p$&$D_p$\\
\colrule
1.1&0.2972(2)&-0.103(5)\\
1.2&0.2756(2)&-0.114(6)\\
1.3&0.2561(3)&-0.116(6)\\
1.4&0.2392(3)&-0.126(8)\\
1.5&0.2231(4)&-0.122(9)\\
1.6&0.2098(2)&-0.134(5)\\
1.7&0.1976(2)&-0.143(5)\\
1.8&0.1864(2)&-0.146(4)\\
1.9&0.1756(2)&-0.142(4)\\
2.0&0.1671(3)&-0.161(8)\\
2.1&0.1580(1)&-0.154(3)\\
2.5&0.1305(2)&-0.166(4)\\
3.0&0.1067(2)&-0.178(6)\\
4.0&0.0788(3)&-0.200(7)\\
5.0&0.0648(3)&-0.226(7)\\
6.0&0.0580(4)&-0.26(1)\\
7.0&0.0563(7)&-0.31(2)\\
8.0&0.057(1)&-0.35(2)\\
9.0&0.061(2)&-0.43(4)\\
10&0.066(2)&-0.489(4)\\
\end{tabular}
\end{ruledtabular}
\caption{Numerical results for the average optimal cost and its finite size corrections in the assignment problem on the circumference. For each value of $p$, the average optimal cost has been evaluated averaging over at least $10^4$ instances with size between $N=10$ and $N=2.5\cdot 10^3$. Subsequently, a fit has been performed using Eq.~\eqref{scalingpbc} to extract $C_p$ and $D_p$.}
\label{tab:datipbc}
\end{table}
\endgroup
\paragraph{The $p<0$ case.} For $p<0$ a more detailed computation can be performed. At the leading order, the minimum is obtained for $t=\sfrac{1}{2}+O\left(\sfrac{1}{\sqrt N}\right)$, as in the case of open boundary condition. Under the assumption $t=\sfrac{1}{2}+\sfrac{\tau}{\sqrt N}$, we obtain
\begin{multline}
 \int_0^1\left[\frac{1}{2}-\frac{\left|\tau+\phi_{\sfrac{1}{2}+\sfrac{\tau}{\sqrt N}}^{(N)}(s)\right|}{\sqrt{N}}\right]^p\dd s=\\
 =\frac{1}{2^p} -\frac{p}{2^{p-1}\sqrt{N}}\int_0^1\left|\tau+\phi_{\sfrac{1}{2}}(s)\right|\dd s\\
 -\frac{p}{2^{p-1}\sqrt{N}}\int_0^1\left(\phi_{\sfrac{1}{2}}^{(N)}(s)-\phi_{\sfrac{1}{2}}(s)+\frac{\tau\left.\partial_t\phi_{t}(s)\right|_{t=\sfrac{1}{2}}}{\sqrt N}\right)\dd s \\+\frac{p(p-1)}{2^{p-1}N}\int_0^1\left(\tau+\phi_{\sfrac{1}{2}}(s)\right)^2\dd s+o\left(\frac{1}{N}\right).
\end{multline}
In the expression above we took into account that $\phi^{(N)}_{\sfrac{1}{2}}(s)-\phi_{\sfrac{1}{2}}(s)$ is infinitesimal quantity for large $N$, due to Eq.~\eqref{KMT}. Moreover, $\int_0^1\partial_t\phi_t(s)\dd s=\partial_t\int_0^1\phi_t(s)\dd s=0$, see Eq.~\eqref{bblimit}. We have then that the third contribution in the previous equation is $O(\sfrac{1}{N})$.
Minimizing respect to $\tau$, we obtain, up to higher order terms,
\begin{equation}
 \int_0^1\mathrm{sign}\left(\tau+\phi_{\sfrac{1}{2}}(s)\right)\dd s=0,
\end{equation}
and therefore the optimal value $\tau_\phi$ of $\tau$ depends on the instance 
$\phi_{\sfrac{1}{2}}$, i.e., on the properties of the Brownian bridge process, 
and not on $p$. The average optimal cost is
\begin{subequations}
\begin{equation}
 \mc{p}=\frac{1}{2^p} -\frac{p\lambda_1}{2^{p-1}\sqrt{N}} +\frac{p(p-1)\lambda_2}{2^{p-1}N}+o\left(\frac{1}{N}\right).
\end{equation}
where the quantities
\begin{align}
 \lambda_1&\coloneqq\overline{\int_0^1\left|\tau_\phi+\phi_{\sfrac{1}{2}}(s)\right|\dd s}=0.3217(5),\label{lambda1}\\
 \lambda_2&\coloneqq\overline{\int_0^1\left(\tau_\phi+\phi_{\sfrac{1}{2}}(s)\right)^2\dd s}=0.1717(5),\label{lambda2}
\end{align}\label{costopmin0PBC}\end{subequations}
are fixed numbers related to the Brownian 
bridge process only, which we evaluated numerically. We numerically verified 
Eq.~\eqref{costopmin0PBC}. Our numerical results are given in 
Fig.~\ref{fig:pmin0PBC} and they show a good agreement with the theoretical 
prediction.


\section{The random Euclidean matching problem}\label{sec:matching}
In the \remp in one dimension we associate to the set of $2N$ vertices of the complete graph $\mathcal K_{2N}$ a set of $2N$ points $\Xi_N\coloneqq\{x_i\}_{i=1,\dots,2N}$ independently and randomly generated on $\Lambda$ with uniform distribution. Again, we will assume that the points are labeled in such a way that $0\leq x_1<x_2<\dots<x_{2N}\leq 1$. In this case, a matching $\mu$ is any partition of $\Xi_{2N}$ in subsets of two elements only, its cardinality being $N$. 
We will consider the following \textit{matching cost} associated to $\mu$,
\begin{equation}
\mc{p}(\mu)\coloneqq \frac{1}{N}\sum_{(i,j)\in\mu}|x_i-x_j|^p,\quad p\in\mathds R.
\end{equation}
As in the assignment problem, we are interested in the {average}
\begin{equation}
\mc{p}\coloneqq\overline{\min_{\mu}\mc{p}(\mu)},
\end{equation}
and in its asymptotic behavior for $N\to+\infty$.

\subsection{Open boundary conditions} For $p>1$, the optimal solution on the interval has a simple structure. In particular, the couple $(x_i,x_j)$, $i<j$, belongs to the optimal matching if, and only if, $i$ is odd and $j=i+1$. This statement follows directly from the direct inspection of the $N=2$ case. We have indeed that given the generic configuration
\begin{center}
\begin{tikzpicture}
\draw[style=help lines,line width=1pt,black] (0,0) grid[step=1cm] (2.5,0);
\node[draw,circle,inner sep=1.5pt,fill=white,label=below:{\footnotesize $x_1$}] (x1) at (0.5,0) {};
\node[draw,circle,inner sep=1.5pt,fill=white,label=below:{\footnotesize $x_2$}] (x2) at (1,0) {};
\node[draw,circle,inner sep=1.5pt,fill=white,label=below:{\footnotesize $x_3$}] (x3) at (1.5,0) {};
\node[draw,circle,inner sep=1.5pt,fill=white,label=below:{\footnotesize $x_4$}] (x4) at (2,0) {};
\end{tikzpicture}\end{center}
the minimum cost configuration has always the structure
\begin{center}
\begin{tikzpicture}
\draw[style=help lines,line width=1pt,black] (0,0) grid[step=1cm] (2.5,0);
\node[draw,circle,inner sep=1.5pt,fill=white,label=below:{\footnotesize $x_1$}] (x1) at (0.5,0) {};
\node[draw,circle,inner sep=1.5pt,fill=white,label=below:{\footnotesize $x_2$}] (x2) at (1,0) {};
\node[draw,circle,inner sep=1.5pt,fill=white,label=below:{\footnotesize $x_3$}] (x3) at (1.5,0) {};
\node[draw,circle,inner sep=1.5pt,fill=white,label=below:{\footnotesize $x_4$}] (x4) at (2,0) {};
\draw[line width=1pt,gray] (x1) to[bend left=90] (x2);
\draw[line width=1pt,gray] (x3) to[bend left=90] (x4);
\end{tikzpicture}\end{center}
The study of the properties of the optimal matching is reduced therefore to the study of \textit{spacings} between successive random points on $\Lambda$. The optimal cost for $p>1$ is therefore given by
\begin{equation}
 \min_\mu\varepsilon_N^{(p)}(\mu)=\frac{1}{N}\sum_{i=1}^N\varphi_{2i-1}^p,\quad \varphi_i\coloneqq x_{i+1}-x_{i}.
\end{equation}
Let us first observe that the distribution of the \textit{ordered} set $\boldsymbol x=(x_1,\dots,x_{2N})$ is given by
\begin{equation}
\rho_N(\boldsymbol x)=(2N)!\prod_{i=0}^{2N}\theta(x_{i+1}-x_i),\quad x_0\equiv 0,\ x_{2N+1}\equiv 1.
\end{equation}
It follows that
\begin{equation}\label{varrhoN}
\varrho_N(\varphi_0,\dots,\varphi_{2N})=(2N)!\prod_{i=0}^{2N}\theta(\varphi_i)\quad \text{for }\sum_{i=0}^{2N}\varphi_i=1.
\end{equation}
In particular, this implies that for the spacing $\varphi_l$ we have
\begin{multline}
\varrho_{N}^{(1)}(\varphi_l)=(2N)!\left[\prod_{\substack{k=0\\k\neq l}}^{2N}\int_0^{+\infty}\dd \varphi_k\right]\delta\left(\sum_{j=0}^{2N}\varphi_j-1\right)\\
=(2N)!i^{2N}\lim_{\epsilon\to 0^+}\e^{\epsilon(1-\varphi_l)}\int_{-\infty}^{+\infty}\frac{\e^{-i\lambda (1-\varphi_l)}}{(\lambda+i\epsilon)^{2N}}\frac{\dd\lambda}{2\pi}\\
=\begin{cases}2N(1-\varphi_l)^{2N-1}&\text{for $0<\varphi_l<1$,}\\0&\text{otherwise.}\end{cases}
\end{multline}
Observe that the shape of the distribution is not dependent on $l$. Moreover,
\begin{multline}\label{m-mag1obc}
\varepsilon_N^{(p)}\equiv \overline{\varphi_l^p}=\frac{\Gamma(2N+1)\Gamma(1+p)}{\Gamma(2N+1+p)}\\
=\frac{1}{N^{p}}\, \frac{\Gamma(p+1)}{2^{p}}\left[ 1-\frac{p (p+1)}{4 N}+o\left(\frac{1}{N}\right)\right].
\end{multline}
The joint density distribution $\varrho^{(2)}$ of the couple $(\varphi_{i},\varphi_{j})$, $i\neq j$, can be similarly evaluated. As proven, for example, in Ref.~\cite{Pyke65}, we have that
\begin{multline}
\varrho_N^{(2)}(\varphi_{i},\varphi_{j})=\\
=2N(2N-1)(1-\varphi_{i}-\varphi_{j})^{2(N-1)}\theta(\varphi_i)\theta(\varphi_j)\theta(1-\varphi_i-\varphi_j),
\end{multline}
implying
\begin{equation}
\overline{\varphi_{i}\varphi_{j}}=\frac{1}{(2N+1)(2N+2)}.
\end{equation}
Observe once again that no dependence on $i$ and $j$ appears on the right hand side of the previous equations. It is clear that in this case $\varphi_{i}\sim N^{-1}$. We introduce the rescaled variables $\phi_i=2N\varphi_i$, whose asymptotic distribution, for $0\leq\phi_i\leq 2N$, is given by
\begin{subequations}
\begin{equation}
\hat \varrho_{N}^{(1)}(\phi_i)
=\e^{-\phi_i} \left[1 - \frac{\phi_i ( \phi_i-2)}{4 N} +o\left(\frac{1}{N}\right) \right].\end{equation}
Similarly, the joint probability distribution for $\phi_i,\phi_j\geq 0$, $0\leq\phi_i+\phi_j\leq 2N$, is
\begin{multline}
\hat \varrho_{N}^{(2)}(\phi_i,\phi_j)=\\
=\e^{-\phi_i-\phi_j}\left[1-\frac{(\phi_i\!+\!\phi_j)^2\!-\!4(\phi_i\!+\phi_j)\!+\!2}{4N}+o\left(\frac{1}{N}\right)\right].
\end{multline}
\end{subequations}
We can therefore write
\begin{subequations}
\begin{align}
\overline{\phi_i}&=1-\frac{1}{2N}+o\left(\frac{1}{N}\right),\\
\overline{\phi_i\phi_j}&=1-\frac{3}{2N}+o\left(\frac{1}{N}\right).
\end{align}\end{subequations}
This implies 
\begin{equation}
\lim_N\left(\overline{\phi_i\phi_j}-\overline{\phi_i}\,\overline{\phi_j}\right)=0.
\end{equation}

For $p<0$ it is easily seen that, for $N=2$, the optimal solution is always the crossing one, i.e., in the form 
\begin{center}
\begin{tikzpicture}
\draw[style=help lines,line width=1pt,black] (0,0) grid[step=1cm] (2.5,0);
\node[draw,circle,inner sep=1.5pt,fill=white,label=below:{\footnotesize $x_1$}] (x1) at (0.5,0) {};
\node[draw,circle,inner sep=1.5pt,fill=white,label=below:{\footnotesize $x_2$}] (x2) at (1,0) {};
\node[draw,circle,inner sep=1.5pt,fill=white,label=below:{\footnotesize $x_3$}] (x3) at (1.5,0) {};
\node[draw,circle,inner sep=1.5pt,fill=white,label=below:{\footnotesize $x_4$}] (x4) at (2,0) {};
\draw[line width=1pt,gray] (x1) to[bend left=90] (x3);
\draw[line width=1pt,gray] (x2) to[bend left=90] (x4);
\end{tikzpicture}\end{center}
This can be proved again by direct inspection, in the spirit of the analysis in Proposition \ref{PropA2} and observing that a crossing solution is always possible. It follows that the optimal matching on a set of $2N$ points on the interval is given by the set of couples $\{(x_i,x_{i+N})\}_{i=1,\dots,N}$, such that, in the pictorial representation above, each arc corresponding to a matched coupled crosses all the remaining $N-1$ arcs. The optimal cost per edge is
\begin{equation}
\min_\mu\mc{p}(\mu)=\frac{1}{N}\sum_{i=1}^N\left(x_{i+N}-x_i\right)^p.
\end{equation}
The analysis proceeds as in the $p>1$ case. To evaluate the average optimal cost, 
denoting by $\varphi_l\coloneqq x_{l+N}-x_l$, we have
\begin{equation}
\Pr[\varphi_l\in\dd \varphi] = B(N;2N,\varphi)\theta(\varphi)\theta(1-\varphi)\frac{N\dd\varphi}{\varphi}\label{distphipnegm}
\end{equation}
which is the probability that given $2N$ points at random  $N$ of them are  in an interval of length $\varphi$.
Of course the distribution of $\varphi_l$ does not depend on $l$. It follows that, for any real $\gamma$ such that $N>-\gamma$,
\begin{equation}
\overline{\varphi_l^\gamma}=\frac{\Gamma(2N+1)\Gamma(N+\gamma)}{\Gamma(N)\Gamma(2N+\gamma+1)}=\frac{1}{2^\gamma}+\frac{\gamma}{2^{\gamma+2}}\frac{\gamma-3}{N}+o\left(\frac{1}{N}\right), \label{phiallap}
\end{equation}
and therefore, for $p<0$ and $N>-p$,
\begin{equation}\label{m-min0obc}
\mc{p}=\frac{\Gamma(2N+1)\Gamma(N+p)}{\Gamma(N)\Gamma(2N+p+1)}=\frac{1}{2^p}+\frac{p(p-3)}{2^{p+2}N}+o\left(\frac{1}{N}\right).
\end{equation}
For $N\gg 1$ expectation values in the distribution given by Eq.~\eqref{distphipnegm} can be evaluated by the saddle-point method. By performing the shift around the saddle point value
\begin{equation}\label{varphiphi}
 \varphi_l = \frac{1}{2} + \frac{1}{2 \sqrt{N}} \phi_l,
\end{equation}
we recover the distribution for $\phi_l$ as
\begin{multline}
\Pr[\phi_l\in\dd\phi] \simeq  \\
\simeq \frac{\e^{-\phi^2}}{\sqrt{\pi}} \left[ 1 - \frac{1}{\sqrt{N}} \phi - \frac{4\phi^4 -8\phi^2 +1}{8N}+ o\left(\frac{1}{N}\right)\right]\dd\phi.
\end{multline}
For example, the evaluation of
\begin{multline}
\overline{\varphi_l^\gamma} = \frac{1}{2^\gamma} \overline{\left(1 + \frac{1}{\sqrt{N}} \phi_l\right)^\gamma}\\
\simeq \frac{1}{2^\gamma}\left[ 1 + \frac{\gamma}{\sqrt{N}} \overline{\phi_l} + \frac{\gamma (\gamma-1)}{2 N} \overline{\phi_l^2 }+ o\left(\frac{1}{N}\right)\right]
\end{multline}
will provide the result given in Eq.~\eqref{phiallap}, because
\begin{subequations}
\begin{align}
\overline{\phi_l} = & - \frac{1}{2 \sqrt{N} } + o\left(\frac{1}{N}\right)\\
\overline{\phi_l^2} = &\, \frac{1}{2} - \frac{1}{4N} + o\left(\frac{1}{N}\right)\, .
\end{align}
\end{subequations}

For the evaluation of correlations, we have, for $0\leq l<k\leq N$,
\begin{multline}
\Pr[x_l\in \dd x,x_{l+N}\in\dd x',x_{k}\in\dd y,x_{k+N}\in\dd y']=\\
=\dd x\dd x'\dd y\dd y'(2N)!\frac{x^{l-1}(y-x)^{k-l-1}(x'-y)^{N+l-k-1}}{\Gamma(l)\Gamma(k-l)\Gamma(N+l-k)}\\
\times\frac{(y'-x')^{k-l-1}(1-y')^{N-k}}{\Gamma(k-l)\Gamma(N-k+1)}\\
\times\theta(x)\theta(y-x)\theta(x'-y)\theta(y'-x')\theta(1-y').
\end{multline}
We get, therefore,
\begin{equation}\label{corrpnegm}
\overline{\varphi_l\varphi_k}=\frac{N(N+1)-|k-l|}{(2N+2)(2N+1)}.
\end{equation}
Eq.~\eqref{corrpnegm} suggests the introduction of the variables $x,y$, such that $Nx=l$, $Ny=k$, and, with reference to Eq.~\eqref{varphiphi}, of the field variable $\phi(x)\coloneqq\phi_{Nx}$. We have that in the large $N$ limit,
\begin{multline}
\overline{\phi(x)\phi(y)}-\overline{\phi(x)}\,\overline{\phi(y)}=\\
=\frac{1-2|y-x|}{2}+\frac{3|y-x|}{2N}+O\left(\frac{1}{N^2}\right).
\end{multline}
In Fig.~\ref{fig:mpmag1} and Fig.~\ref{fig:m-pmin0} we compare our theoretical results with the output of numerical simulations for the $p>1$ and the $p<0$ case, respectively. 

   \begin{figure*}[!ht]
     \subfloat[Numerical results for the finite size corrections to the average optimal cost (inset) in the matching problem with $p>1$ on $\Lambda$. We compare our numerical results with our numerical prediction given in Eq.~\eqref{m-mag1obc} (solid lines).\label{fig:mpmag1}]{ \includegraphics[width=0.45\textwidth]{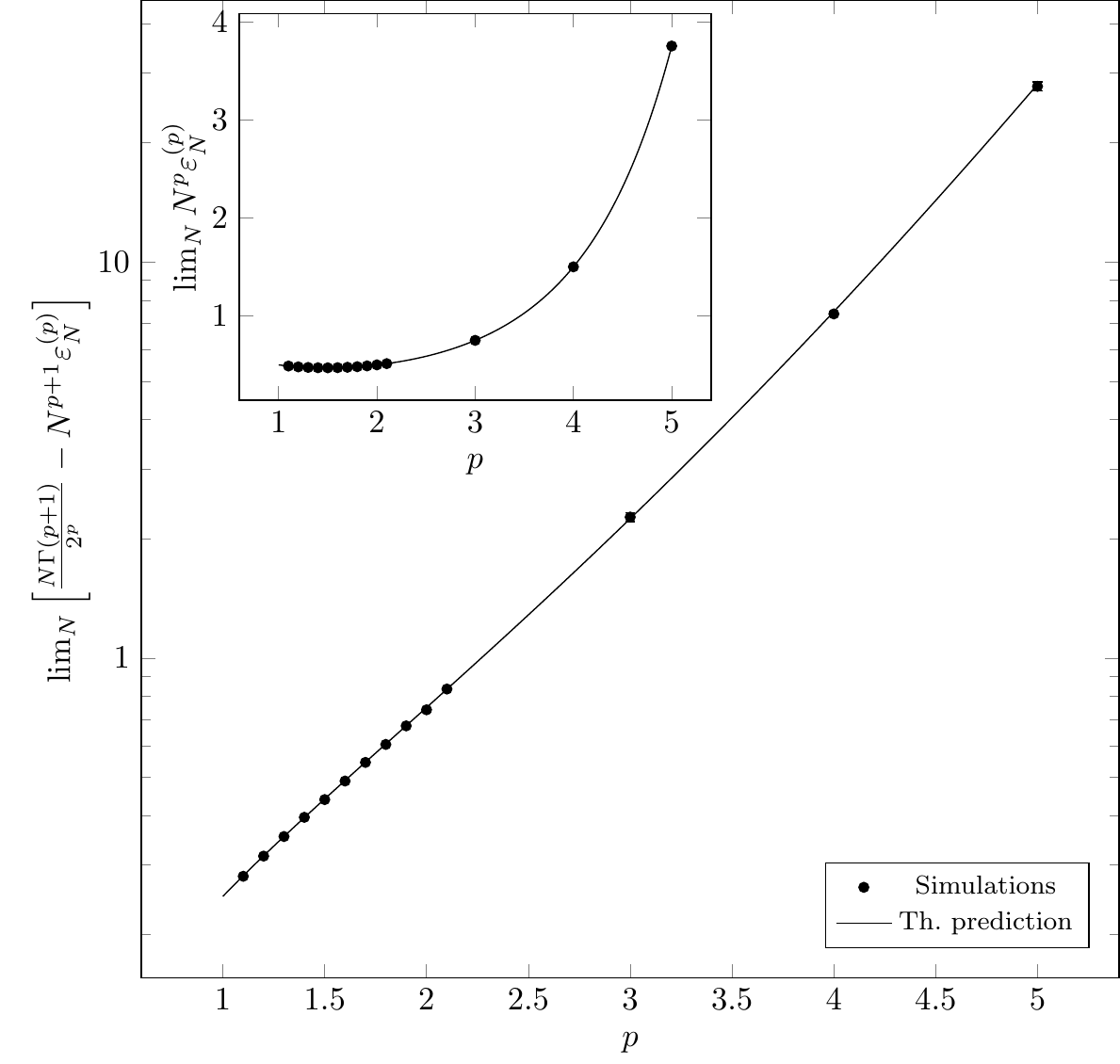}}
     \hfill
     \subfloat[Numerical results for average optimal cost in the matching problem on $\Lambda$ with $p<0$. The theoretical predictions (solid lines), for any value of $N>-p$, are given by Eq.~\eqref{m-min0obc}. Observe that for $N\gg 1$ $2^p\mc{p}$ is linear in $N^{-1}$ (inset). \label{fig:m-pmin0}]{\includegraphics[width=0.45\textwidth]{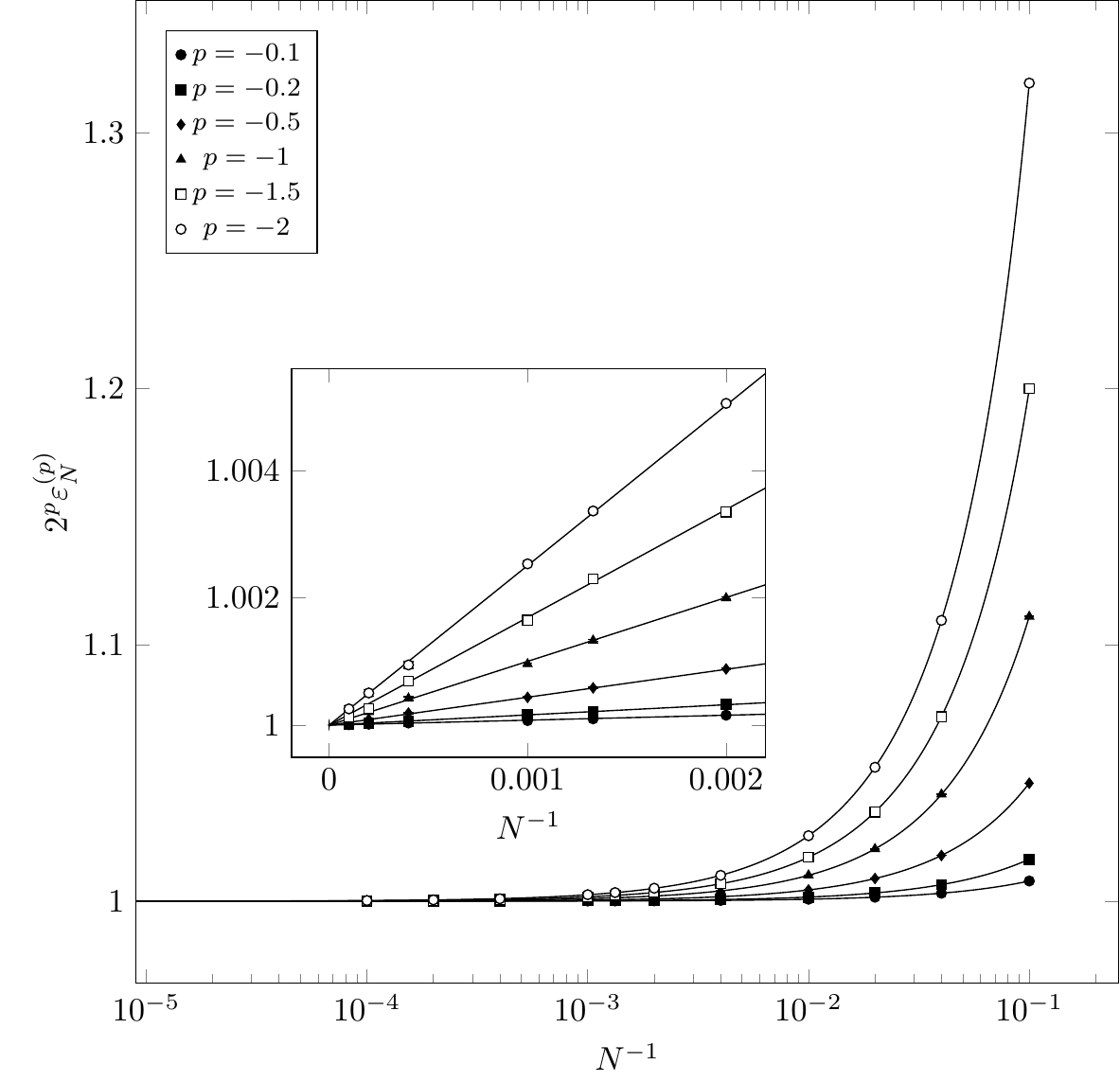}}\\
     \subfloat[Numerical results for average optimal cost in the matching problem on the circumference with $p>1$. The theoretical predictions (solid lines) for the asymptotic behavior are given by Eq.~\eqref{m-mag1pbc}. Observe that for $N\gg 1$ the corrections to the asymptotic cost scale as $\sfrac{1}{\sqrt{N}}$ (inset). \label{fig:m-pmag1pbc}]{%
       \includegraphics[width=0.45\textwidth]{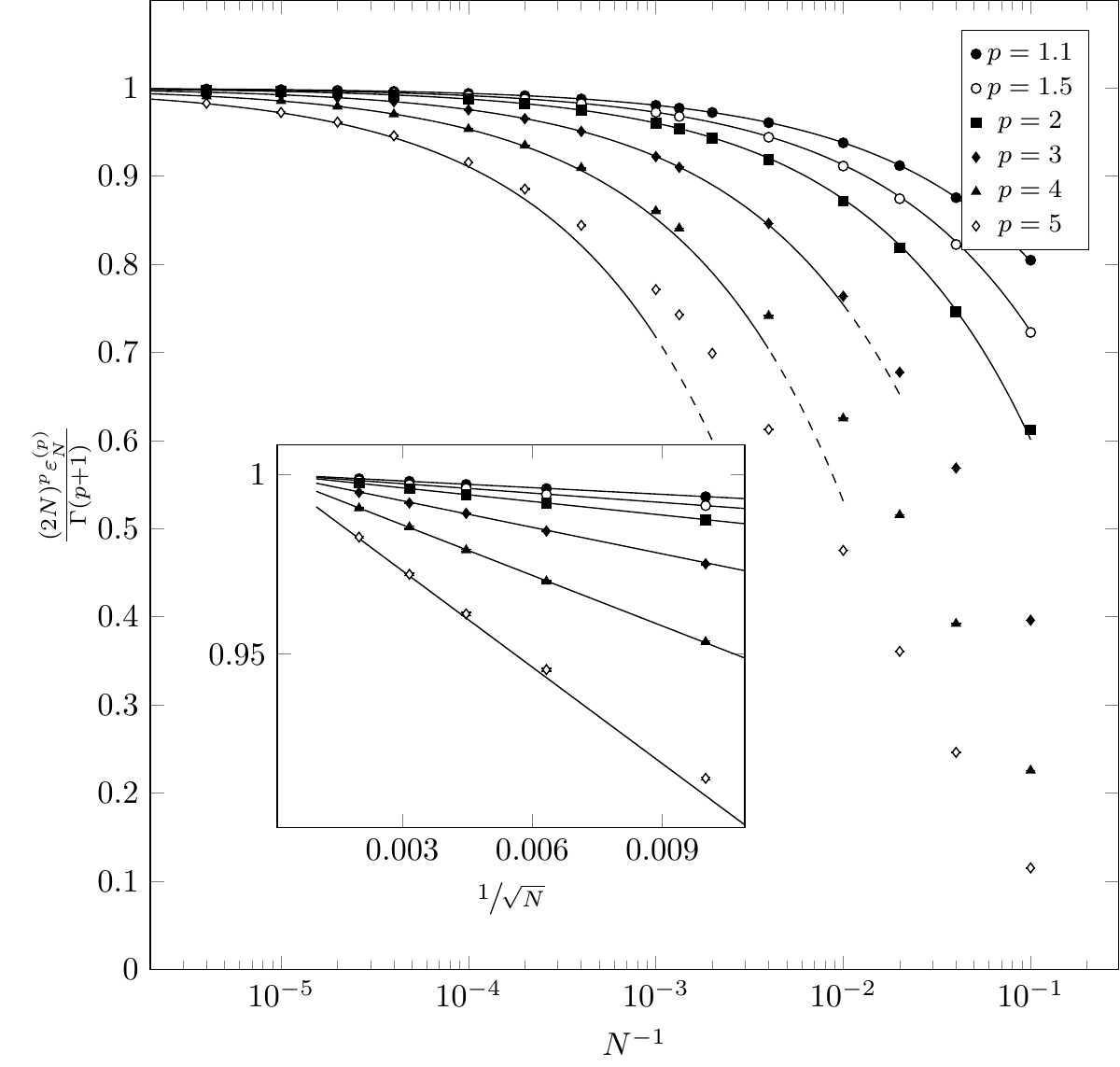}
     }\hfill
     \subfloat[Numerical results for average optimal cost in the matching problem on the circumference with $p<0$. The theoretical predictions (solid lines) are given by Eq.~\eqref{m-min0pbc} and they are correct for all values of $N>-p$, as expected. Observe that for $N\gg 1$ $2^p\mc{p}$ is linear in $\sfrac{1}{\sqrt{N}}$ (inset). \label{fig:m-pmin0pbc}]{%
       \includegraphics[width=0.45\textwidth]{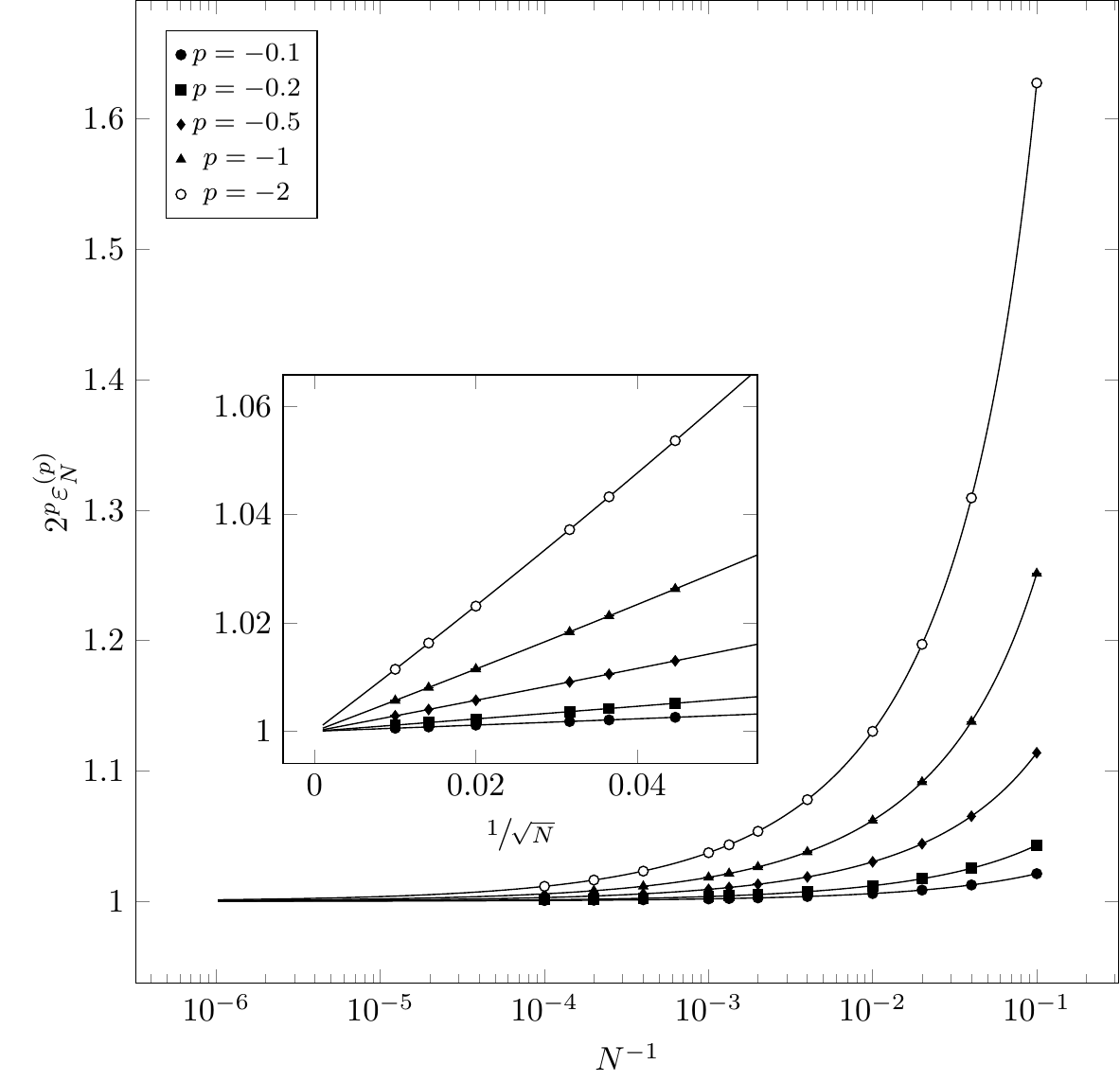}
     }
     \caption{Numerical results for the matching problem. Error bars are typically smaller than the markers in the figures.}
     \label{fig:matching}
   \end{figure*}
   
\subsection{Periodic boundary condition} The case of periodic boundary condition 
for $p>1$ can be easily obtained. Observe indeed that, in the case of $2N=4$ 
points on the circumference, given the crossing matching, we can always lower 
the cost considering one of the noncrossing solutions, i.e.,
\begin{center}
\begin{tikzpicture}[thick]
\begin{scope}
\newdimen\R
\R=0.5cm
\draw (0,0) circle (\R);
\node[draw,circle,inner sep=1.5pt,fill=white] (x1) at (45:\R) {};
\node[draw,circle,inner sep=1.5pt,fill=white] (x2) at (135:\R) {};
\node[draw,circle,inner sep=1.5pt,fill=white] (x3) at (225:\R) {};
\node[draw,circle,inner sep=1.5pt,fill=white] (x4) at (-45:\R) {};
\node[circle,inner sep=1pt,fill=white] (a) at (0:\R) {\tiny $a$};
\node[circle,inner sep=1pt,fill=white] (b) at (90:\R) {\tiny $b$};
\node[circle,inner sep=1pt,fill=white] (c) at (180:\R) {\tiny $c$};
\node[circle,inner sep=1pt,fill=white] (d) at (270:\R) {\tiny $d$};
\draw[line width=1pt,gray] (x1) to (x3);
\draw[line width=1pt,gray] (x2) to (x4);
\end{scope}
\draw [-stealth] (0.6,0) to (0.9,0);
\begin{scope}[shift={(1.5,0)}]
\R=0.5cm
\draw (0,0) circle (\R);
\node[draw,circle,inner sep=1.5pt,fill=white] (x1) at (45:\R) {};
\node[draw,circle,inner sep=1.5pt,fill=white] (x2) at (135:\R) {};
\node[draw,circle,inner sep=1.5pt,fill=white] (x3) at (225:\R) {};
\node[draw,circle,inner sep=1.5pt,fill=white] (x4) at (-45:\R) {};
\draw[line width=1pt,gray] (x1) to[out=225,in=135] (x4);
\draw[line width=1pt,gray] (x2) to[out=-45,in=45] (x3);
\end{scope}
\end{tikzpicture}
\end{center}
where the arrows denote the transition from a given matching to another one with lower cost. Indeed, with reference to the figure above, denoting by $a$, $b$, $c$, and $d$ the length of the four arcs with extremes the considered points, let us suppose, without loss of generality, that $a+b\leq\sfrac{1}{2}$ and $b+c\leq\sfrac{1}{2}$. Then, $a^p+c^p<(a+b)^p+(b+c)^p$ for $p>1$. Moreover, given the matching
\begin{center}
\begin{tikzpicture}[thick]
\begin{scope}
\newdimen\R
\R=0.5cm
\draw (0,0) circle (\R);
\node[draw,circle,inner sep=1.5pt,fill=white] (x1) at (55:\R) {};
\node[draw,circle,inner sep=1.5pt,fill=white] (x2) at (125:\R) {};
\node[draw,circle,inner sep=1.5pt,fill=white] (x3) at (175:\R) {};
\node[draw,circle,inner sep=1.5pt,fill=white] (x4) at (5:\R) {};
\node[circle,inner sep=0pt,fill=white] (a) at (30:\R) {\tiny $a$};
\node[circle,inner sep=0pt,fill=white] (b) at (90:\R) {\tiny $b$};
\node[circle,inner sep=0pt,fill=white] (c) at (150:\R) {\tiny $c$};
\node[circle,inner sep=0pt,fill=white] (d) at (270:\R) {\tiny $d$};
\draw[line width=1pt,gray] (x1) to[out=235,in=305] (x2);
\draw[line width=1pt,gray] (x3) to[out=355,in=185] (x4);
\end{scope}
\draw [-stealth] (0.6,0) to (0.9,0);
\begin{scope}[shift={(1.5,0)}]
\R=0.5cm
\draw (0,0) circle (\R);
\node[draw,circle,inner sep=1.5pt,fill=white] (x1) at (55:\R) {};
\node[draw,circle,inner sep=1.5pt,fill=white] (x2) at (125:\R) {};
\node[draw,circle,inner sep=1.5pt,fill=white] (x3) at (175:\R) {};
\node[draw,circle,inner sep=1.5pt,fill=white] (x4) at (5:\R) {};
\draw[line width=1pt,gray] (x1) to[out=235,in=185] (x4);
\draw[line width=1pt,gray] (x3) to[out=355,in=305] (x2);
\end{scope}
\end{tikzpicture}
\end{center}
if $a+b+c\leq\sfrac{1}{2}$, then $a^p+c^p\leq (a+b+c)^p+b^p$, i.e., there are no nested matchings in a half-circumference. Applying these rules iteratively to the case of $2N$ points on the circumference, we find that, ordering the points according to a reference orientation on the circumference, we have two possible optimal matching configurations, namely, for $i=1,\dots,N$, the $2i$-th point is associated either to the $(2i+1\Mod{2N})$-th point, or to the $(2i-1\Mod{2N})$-th point. Pictorially,
\begin{equation}
\begin{tikzpicture}[thick, baseline={([yshift=-.5ex]current bounding box.center)}]
\begin{scope}  
\newdimen\R
\R=0.5cm
\draw (0,0) circle (\R);
\node[draw,circle,inner sep=1.5pt,fill=white] (x1) at (30:\R) {};
\node[draw,circle,inner sep=1.5pt,fill=white] (x2) at (140:\R) {};
\node[draw,circle,inner sep=1.5pt,fill=white] (x3) at (170:\R) {};
\node[draw,circle,inner sep=1.5pt,fill=white] (x4) at (200:\R) {};
\node[draw,circle,inner sep=1.5pt,fill=white] (x5) at (290:\R) {};
\node[draw,circle,inner sep=1.5pt,fill=white] (x6) at (340:\R) {};
\end{scope}\draw [-stealth] (0.6,0) to (0.9,0);
\end{tikzpicture}
\min\left[
\begin{tikzpicture}[thick, baseline={([yshift=-.5ex]current bounding box.center)}]
\newdimen\R
\R=0.5cm
\draw (0,0) circle (\R);
\node[draw,circle,inner sep=1.5pt,fill=white] (x1) at (30:\R) {};
\node[draw,circle,inner sep=1.5pt,fill=white] (x2) at (140:\R) {};
\node[draw,circle,inner sep=1.5pt,fill=white] (x3) at (170:\R) {};
\node[draw,circle,inner sep=1.5pt,fill=white] (x4) at (200:\R) {};
\node[draw,circle,inner sep=1.5pt,fill=white] (x5) at (290:\R) {};
\node[draw,circle,inner sep=1.5pt,fill=white] (x6) at (340:\R) {};
\draw[line width=1pt,gray] (x1) to[out=210,in=160] (x6);
\draw[line width=1pt,gray] (x5) to[out=110,in=20] (x4);
\draw[line width=1pt,gray] (x3) to[out=350,in=320] (x2);
\end{tikzpicture},
\begin{tikzpicture}[thick, baseline={([yshift=-.5ex]current bounding box.center)}]
\newdimen\R
\R=0.5cm
\draw (0,0) circle (\R);
\node[draw,circle,inner sep=1.5pt,fill=white] (x1) at (30:\R) {};
\node[draw,circle,inner sep=1.5pt,fill=white] (x2) at (140:\R) {};
\node[draw,circle,inner sep=1.5pt,fill=white] (x3) at (170:\R) {};
\node[draw,circle,inner sep=1.5pt,fill=white] (x4) at (200:\R) {};
\node[draw,circle,inner sep=1.5pt,fill=white] (x5) at (290:\R) {};
\node[draw,circle,inner sep=1.5pt,fill=white] (x6) at (340:\R) {};
\draw[line width=1pt,gray] (x1) to[out=210,in=320] (x2);
\draw[line width=1pt,gray] (x3) to[out=350,in=20] (x4);
\draw[line width=1pt,gray] (x5) to[out=110,in=160] (x6);
\end{tikzpicture}\right]
\end{equation}
The distribution of $2N$ spacings $\{\varphi_i\}$ generated by $2N$ random points on the circumference is given by
\begin{equation}\label{spacingcirc}
 \varrho(\varphi_1,\dots\varphi_{2N})=\Gamma(2N)\delta\left(\sum_{i=1}^{2N}\varphi_i-1\right)\prod_{i=1}^N
\theta(\varphi_i).\end{equation}
We assume here that we choose one of the points as origin, and an orientation on the circumference, such that the intervals $\varphi_i$ are labeled accordingly. Let $p(\Phi_1,\dots,\Phi_{2N})$ be the probability for the quantities $\Phi_i\coloneqq N^p\varphi_i^p$, which can be straightforwardly obtained from Eq.~\eqref{spacingcirc}. The variables $\{\Phi_i\}_i$ have mean
\begin{subequations}
\begin{equation}
 \mu\coloneqq\overline{\Phi_i}=N^p\frac{\Gamma(2N)\Gamma(p+1)}{\Gamma(2N+p)},
\end{equation}
and variance
\begin{multline}
\overline{\left(\Phi_i-\mu\right)^2}=\\
= N^{2p}\left[\frac{\Gamma(2N)\Gamma(1+2p)}{\Gamma(2N+2p)}-\frac{\Gamma^2(2N)\Gamma^2(1+p)}{\Gamma^2(2N+p)}\right]\\
=\frac{\Gamma(2p+1)-\Gamma^2(p+1)}{2^{2p}}+o(1)\equiv\sigma^2+o(1).
\end{multline}
The variables are, however, not independent, due to the overall constraint $\sum_i\Phi_i^{\sfrac{1}{p}}=N$. We have that, for $i\neq j$,
\begin{multline}
\overline{\left(\Phi_i-\mu\right)\left(\Phi_j-\mu\right)}=\\
=N^{2p}\Gamma^2(1+p)\Gamma(2N)\left[\frac{1}{\Gamma(2N+2p)}-\frac{\Gamma(2N)}{\Gamma^2(2N+p)}\right]\\
=-\frac{p^2\Gamma^2(1+p)}{2^{2p+1}N}+o\left(\frac{1}{N}\right)=\frac{\rho}{N}+o\left(\frac{1}{N}\right).
\end{multline}
\end{subequations}
The optimal cost in the matching problem on the circumference with $p>1$ is given by
\begin{equation}
 \mc{p}=\overline{\min\left\{\frac{1}{N^{1+p}}\sum_{i=1}^N\Phi_{2i-1},\frac{1}{N^{1+p}}\sum_{i=1}^N\Phi_{2i}\right\}}.
\end{equation}
Using the results given in Appendix~\ref{App:minimum}, we have in this case that
\begin{multline}\label{m-mag1pbc}
 (2N)^p\mc{p}=\\
 =\Gamma(p+1)+\sqrt{\frac{\Gamma(2p+1)-\Gamma^2(p+1)}{\pi N}}+o\left(\frac{1}{\sqrt N}\right).
\end{multline}
In Fig.~\ref{fig:m-pmag1pbc} we compare our numerical results with the theoretical prediction in Eq.~\eqref{m-mag1pbc}. 

In the $p<0$ case, as in the case of open boundary conditions, we have that, 
given four points on the circumference, the optimal solution is always the 
crossing one. Let us consider indeed
\begin{center}
\begin{tikzpicture}[thick]
\begin{scope}
\newdimen\R
\R=0.5cm
\draw (0,0) circle (\R);
\node[draw,circle,inner sep=1.5pt,fill=white] (x1) at (45:\R) {};
\node[draw,circle,inner sep=1.5pt,fill=white] (x2) at (135:\R) {};
\node[draw,circle,inner sep=1.5pt,fill=white] (x3) at (225:\R) {};
\node[draw,circle,inner sep=1.5pt,fill=white] (x4) at (315:\R) {};
\node[circle,inner sep=0pt,fill=white] (a) at (90:\R) {\tiny $a$};
\node[circle,inner sep=0pt,fill=white] (b) at (180:\R) {\tiny $b$};
\node[circle,inner sep=0pt,fill=white] (c) at (270:\R) {\tiny $c$};
\node[circle,inner sep=0pt,fill=white] (d) at (0:\R) {\tiny $d$};
\draw[line width=1pt,gray] (x1) to (x3);
\draw[line width=1pt,gray] (x2) to (x4);
\end{scope}
\begin{scope}[shift={(-1.5,0)}]
\newdimen\R
\R=0.5cm
\draw (0,0) circle (\R);
\node[draw,circle,inner sep=1.5pt,fill=white] (x1) at (45:\R) {};
\node[draw,circle,inner sep=1.5pt,fill=white] (x2) at (135:\R) {};
\node[draw,circle,inner sep=1.5pt,fill=white] (x3) at (225:\R) {};
\node[draw,circle,inner sep=1.5pt,fill=white] (x4) at (315:\R) {};
\node[circle,inner sep=0pt,fill=white] (a) at (90:\R) {\tiny $a$};
\node[circle,inner sep=0pt,fill=white] (b) at (180:\R) {\tiny $b$};
\node[circle,inner sep=0pt,fill=white] (c) at (270:\R) {\tiny $c$};
\node[circle,inner sep=0pt,fill=white] (d) at (0:\R) {\tiny $d$};
\draw[line width=1pt,gray] (x1) to[out=225,in=315] (x2);
\draw[line width=1pt,gray] (x3) to[out=45,in=135] (x4);
\end{scope}
\draw [-stealth] (-0.9,0) to (-0.6,0);
\begin{scope}[shift={(1.5,0)}]
\newdimen\R
\R=0.5cm
\draw (0,0) circle (\R);
\node[draw,circle,inner sep=1.5pt,fill=white] (x1) at (45:\R) {};
\node[draw,circle,inner sep=1.5pt,fill=white] (x2) at (135:\R) {};
\node[draw,circle,inner sep=1.5pt,fill=white] (x3) at (225:\R) {};
\node[draw,circle,inner sep=1.5pt,fill=white] (x4) at (315:\R) {};
\node[circle,inner sep=0pt,fill=white] (a) at (90:\R) {\tiny $a$};
\node[circle,inner sep=0pt,fill=white] (b) at (180:\R) {\tiny $b$};
\node[circle,inner sep=0pt,fill=white] (c) at (270:\R) {\tiny $c$};
\node[circle,inner sep=0pt,fill=white] (d) at (0:\R) {\tiny $d$};
\draw[line width=1pt,gray] (x1) to[out=225,in=135] (x4);
\draw[line width=1pt,gray] (x3) to[out=45,in=315] (x2);
\end{scope}
\draw [-stealth] (0.9,0) to (0.6,0);
\end{tikzpicture}
\end{center}
and let us assume, without loss of generality, that $a+b\leq \sfrac{1}{2}$ and $b+c\leq\sfrac{1}{2}$. We have that $a^p+c^p\geq (a+b)^p+(b+c)^p$. With reference to the figure above, if $d\geq\sfrac{1}{2}$, then we also have $b^p+(a+b+c)^p>(a+b)^p+(b+c)^p$, where we have used the fact that $f(x,a)=x^p-(x+a)^p$ is a decreasing function for $x>0$ and $a>0$. If $d=1-a-b-c\leq\sfrac{1}{2}$, then $b\leq d$ and therefore we have $b^p+d^p\geq 2d^p\geq (a+b)^p+(b+c)^p$. This fact implies that the minimum cost matching is obtained coupling the $i$th point to the $i+N\Mod{2N}$ point on the circumference, where the points are supposed to be ordered according to a reference orientation on the circumference. For example, we will have that
\begin{center}
\begin{tikzpicture}[thick]
\begin{scope}
\newdimen\R
\R=0.5cm
\draw (0,0) circle (\R);
\node[draw,circle,inner sep=1.5pt,fill=white] (x1) at (30:\R) {};
\node[draw,circle,inner sep=1.5pt,fill=white] (x2) at (140:\R) {};
\node[draw,circle,inner sep=1.5pt,fill=white] (x3) at (170:\R) {};
\node[draw,circle,inner sep=1.5pt,fill=white] (x4) at (200:\R) {};
\node[draw,circle,inner sep=1.5pt,fill=white] (x5) at (290:\R) {};
\node[draw,circle,inner sep=1.5pt,fill=white] (x6) at (340:\R) {};
\end{scope}
\begin{scope}[shift={(1.5,0)}]
\newdimen\R
\R=0.5cm
\draw (0,0) circle (\R);
\node[draw,circle,inner sep=1.5pt,fill=white] (x1) at (30:\R) {};
\node[draw,circle,inner sep=1.5pt,fill=white] (x2) at (140:\R) {};
\node[draw,circle,inner sep=1.5pt,fill=white] (x3) at (170:\R) {};
\node[draw,circle,inner sep=1.5pt,fill=white] (x4) at (200:\R) {};
\node[draw,circle,inner sep=1.5pt,fill=white] (x5) at (290:\R) {};
\node[draw,circle,inner sep=1.5pt,fill=white] (x6) at (340:\R) {};
\draw[line width=1pt,gray] (x1) to[out=210,in=20] (x4);
\draw[line width=1pt,gray] (x5) to[out=110,in=320] (x2);
\draw[line width=1pt,gray] (x3) to[out=350,in=160] (x6);
\end{scope}
\draw [-stealth] (0.6,0) to (0.9,0);
\end{tikzpicture}
\end{center}
To find the average optimal cost, observe that, fixing the origin of our reference system in the point $i$, the distance of the point $i+N\Mod{N}$ from the $i$th point on the circumference is distributed as
\begin{multline}
 \Pr[\varphi_i\in\dd\varphi]=N\theta(\varphi)\theta\left(\frac{1}{2}-\varphi\right)\dd\varphi\times\\
 \times\left[\frac{B(N;2N,\varphi)}{\varphi}+\frac{B(N;2N,1-\varphi)}{1-\varphi}\right].
\end{multline}
As in the case of periodic boundary conditions, the previous distribution does not depend on $i$. We obtain, for $N>-p$, the average optimal cost straightforwardly as
\begin{multline}\label{m-min0pbc} \mc{p}=N\binom{2N}{N}\beta_{\sfrac{1}{2}}\left(N+p,N\right)\\
 =\frac{1}{2^p} \left[ 1 - \frac{p}{\sqrt{\pi N}} + \frac{p(p-1)}{4 N} + o\left(\frac{1}{N}\right)\right],
\end{multline}
where we have introduced the incomplete Beta function
\begin{equation}
 \beta_s(a,b)\coloneqq \int_0^s t^{a-1}(1-t)^{b-1}\dd t.
\end{equation}
In Fig.~\ref{fig:m-pmin0pbc} we show that the results of our numerical simulations are in agreement with Eq.~\eqref{m-min0pbc}. 

\section{Conclusions}\label{sec:conclusioni}
In the present paper we discussed the Euclidean matching problem and the Euclidean assignment problem on a set of $2N$ points both on the line and on the circumference. 

We first stated some fundamental properties of the Euclidean assignment problem on the line for a large class of cost functions $c(z)$, which we called $\mathcal{C}$-functions, and for strictly increasing cost functions. We proved that, for these classes of cost functions, the optimal matching $x_i\to y_{\pi(i)}$ between the set of points $0\leq x_1<x_2<\dots<x_N\leq 1$ and the set of points $0\leq y_1<y_2<\dots<y_N\leq 1$ can be expressed as a permutation in the form $\pi(i)=i+k\pmod{N}$ for some $k$, in the case of strictly increasing cost functions the optimal permutation being the identical permutation, $\pi(i)=i$. We considered then the assignment problem both on the line and on the circumference in presence of disorder, assuming the points  to be uniformly and randomly generated on the considered domain. We chose the cost function $c(z)=z^p$ with $p\in\mathds R\setminus[0,1]$, which is a $\mathcal C$-function for $p<0$ and a strictly increasing function for $p>1$. The analytical investigation allowed us to relate the optimal solution, in all the considered cases, to a well-known Gaussian stochastic process, namely the Brownian bridge process, in the $N\to+\infty$ limit. Then, we analytically derived the expression for the average optimal cost and its finite-size corrections for the considered range of values of $p$, and we gave an explicit expression of the correlation functions for the optimal solutions.
In particular, we computed
\begin{equation}
\varepsilon_N^{(p)} = 
\begin{cases}
\frac{\Gamma\left(1+\sfrac{p}{2}\right)}{p+1}\!\left[1\!-\!\frac{1}{N}\frac{p(p+2)}{8}\!+\! o\!\left(\frac{1}{N}\right)\right]\!\frac{1}{N^{\sfrac{p}{2}}}& \hbox{for } p>1,\\
\frac{1}{2^p}\left[1+\frac{1}{N}\frac{p(p-2)(p-4)}{3(p-3)}+o\left(\frac{1}{N}\right)\right] & \hbox{for } p<0, 
\end{cases}
\end{equation}
and the equivalent results on the unit circumference,
\begin{equation}
\varepsilon_N^{(p)} = 
\begin{cases}
\left[ C_p  + \frac{D_p}{{N}} + o\left(\frac{1}{N}\right) \right]\frac{1}{N^{\sfrac{p}{2}}} & \hbox{for } p>1,\\
\frac{1}{2^p}\left[1 -\frac{2p\lambda_1}{\sqrt{N}} +\frac{2p(p-1)\lambda_2}{N}+ o\left(\frac{1}{N}\right)\right] & \hbox{for } p<0,
\end{cases}
\end{equation}
where the constants $\lambda_1$ and $\lambda_2$ were defined in Eq.~\eqref{lambda1} and Eq.~\eqref{lambda2}, respectively. Unfortunately, in the $p>1$ case, only $C_2$ is known analytically.

We analyzed in a similar way the Euclidean matching problem.
In particular, for the average cost on the unit interval we computed the constants appearing in the expansions
\begin{equation}
\varepsilon_N^{(p)} = 
\begin{cases}
\frac{\Gamma(p+1)}{2^{p}}\left[1-\frac{p (p+1)}{4 N}+ o\left(\frac{1}{N^2}\right)\right]\frac{1}{N^p} & \hbox{for } p>1, \\
\frac{1}{2^p}\left[1+\frac{p(p-3)}{4N} + o\left(\frac{1}{N^2}\right)\right] & \hbox{for } p<0, 
\end{cases}
\end{equation}
and those for the problem on the unit circumference, where
\begin{equation}
\varepsilon_N^{(p)} = 
\begin{cases}
\frac{\Gamma(p+1)}{2^p}\left[1+\sqrt{\left(\frac{\Gamma(2p+1)}{\Gamma^2(p+1)}-1\right)\frac{1}{\pi N}}+ o\left(\frac{1}{\sqrt N}\right)\right]\frac{1}{N^p}\\\hfill\hbox{for } p>1,\\
\frac{1}{2^p}\left[1-\frac{p}{\sqrt{\pi N}}+\frac{p(p-1)}{2^{p+2}N}+ o\left(\frac{1}{N}\right)\right]\\\hfill \hbox{for } p<0.
\end{cases}
\end{equation}
The first remark is the different leading power of $N$ appearing here, that is $N^{-p}$, at variance with the assignment case where it was $N^{-\sfrac{p}{2}}$.
Second, we observe that, both in the case of the assignment problem with $p<0$ and in the case of the matching problem, the finite-size corrections to the average optimal cost change their scaling properties when open boundary conditions are replaced by periodic boundary conditions, i.e., when we consider the problem on the circumference instead of the interval. In particular, 
in the case of open boundary conditions the finite-size corrections scale as $O\left(\sfrac{1}{N}\right)$, whereas in the case of periodic boundary conditions, they scale as $O\left(\sfrac{1}{\sqrt N}\right)$. 
%
This fact can be observed both in Fig.~\ref{fig:assignment} and in Fig.~\ref{fig:matching}.

\section{Acknowledgments} The authors thank Carlo Lucibello, Giorgio Parisi, Filippo Santambrogio and Andrea Sportiello for useful discussions. The work of G.S.~was supported by the Simons Foundation (Grant No.~454949).
\appendix
\section{On the convexity property of $\mathcal C$-functions}\label{app:convexity}
In this appendix, we show that Eq.~\eqref{prima} is equivalent to convexity if the function $f$ is continuous on the interval $(0,1)$. Introducing
\begin{equation}
R(z_1,z_2)\coloneqq\frac{f(z_1)-f(z_2)}{z_1-z_2},
\end{equation}
Eq.~\eqref{prima} can be written as
\begin{equation}
 R(z_1,z_2)\leq R(z_1+ \eta,z_2+\eta).
\end{equation}
Observe that $R(z_1,z_2)$ is a symmetric function of its arguments, and, moreover, for $n\in\mathds N$, we can write
\begin{subequations}\label{nondec}
\begin{multline}
R(z_1,z_2)=\\
=\frac{1}{n}\sum_{k=0}^{n-1}R\left(z_1+k\frac{z_2-z_1}{n},z_2+(k+1)\frac{z_2-z_1}{n}\right)\\
 \leq R\left(z_1,z_2+\frac{z_2-z_1}{n}\right),
\end{multline}
and 
\begin{multline}
R\left(z_1,z_2+n(z_2-z_1)\right)=\\
=\frac{1}{n}\sum_{k=0}^{n-1}R\left(z_1+k(z_2-z_1),z_2+(k+1)(z_2-z_1)\right)\\
 \geq R\left(z_1,z_2\right).
\end{multline}
\end{subequations}
Eqs.~\eqref{nondec} are equivalent to say that the function $R(z_1,z_2)$ is 
monotonically non decreasing respect to each one of its arguments taking the 
other one fixed, provided that the ratio of the considered intervals is 
rational. If the function is continuous, we can extend this property to an 
arbitrary couple of intervals, and therefore, for $\eta>0$, we can simply state 
the stronger chain of inequalities 
\begin{equation}
R(z_1,z_2) \leq R(z_1, z_2 + \eta)\leq R(z_1+\eta, z_2 + \eta),
\end{equation}
that is $R$ is monotonically non decreasing respect to each of its arguments taking the other one fixed. This property is equivalent to convexity. Indeed, if we consider $\xi=tx_1+(1-t)x_2$ with $t\in[0,1]$ and $x_1<x_2$,
\begin{subequations}
 \begin{align}
  R(\xi,x_1)&\leq R(\xi,x_2),\\
\frac{f(\xi)-f(x_1)}{(1-t)(x_2-x_1)}&\leq \frac{f(x_2)-f(\xi)}{t(x_2-x_1)},\\
f(\xi)&\leq t f(x_1)+(1-t)f(x_2).
 \end{align}
\end{subequations}

\section{Derivation of Eq.~\eqref{distphi}}\label{app:distphi}
In this appendix we will sketch the derivation of Eq.~\eqref{distphi}. We have to evaluate
\begin{multline}
\Pr[\varphi_k\in\dd\varphi]
=\dd\varphi\binom{N}{k}^2 k^2\iint_0^1\! \delta\left(\varphi-y+x\right)\times\\(xy)^{k-1}[(1-x)(1-y)]^{N-k}\dd x\dd y
\end{multline}
for $N\gg 1$. Let us write now, for $N\gg 1$,
\begin{equation}
 k=Ns+\frac{1}{2},\quad s\in(0,1).
\end{equation}
The integral above can be written as
\begin{multline}
 \frac{\Gamma^2(N+1)}{\Gamma^2(Ns\!+\!\sfrac{1}{2})\Gamma^2(N\!+\!\sfrac{1}{2}\!-\!Ns)}\iint_0^1\!\frac{\delta\left(\varphi\!-\!y\!+\!x\right)}{\sqrt{xy(1-x)(1-y)}}\\
 \times \e^{Ns\ln(xy)+N(1-s)\ln[(1-x)(1-y)]}\dd x\dd y.
\end{multline}
We will evaluate the integral above using the saddle point method. In particular, the saddle point $(x_\text{sp},y_\text{sp})$ is obtained for
\begin{equation}
  x_\text{sp}=y_\text{sp}=s.
 \end{equation}
Observe now that, at fixed $s$, for $N\gg 1$,
\begin{multline}
 \frac{\Gamma(N+1)}{\Gamma(Ns\!+\!\sfrac{1}{2})\Gamma(N\!+\!\sfrac{1}{2}\!-\!Ns)}=\\
 =\sqrt{N}\frac{\e^{-N\left[s\ln s+(1-s)\ln(1-s)\right]}}{\sqrt{2\pi}}\\
 \times\left[1\!+\!\frac{1+2s(1-s)}{24 N s(1-s)}\!+\!o\left(\frac{1}{N}\right)\right],
\end{multline}
where we have used the Stirling expansion for $N\gg 1$
\begin{equation}
 N!=\sqrt{2\pi N}\left(\frac{N}{\e}\right)^N\left[1+\frac{1}{12N}+o\left(\frac{1}{N}\right)\right].
\end{equation}
We have, therefore,
\begin{widetext}
\begin{multline}
 \frac{\Gamma^2(N+1)}{\Gamma^2(Ns\!+\!\sfrac{1}{2})\Gamma^2(N\!+\!\sfrac{1}{2}\!-\!Ns)}\iint_0^1\!\frac{\delta\left(\varphi-y+x\right)}{\sqrt{xy(1-x)(1-y)}}\e^{Ns\ln(xy)+N(1-s)\ln[(1-x)(1-y)]}\dd x\dd y\\
 =\frac{N}{2\pi}\left[1+\frac{1+2s(1-s)}{24 N s(1-s)}+\!o\left(\frac{1}{N}\right)\right]^2\iint_0^1\!\frac{\delta\left(\varphi-y+x\right)}{\sqrt{xy(1-x)(1-y)}}\\
 \exp\left[-N\frac{(x\!-\!s)^2\!+\!(y\!-\!s)^2}{2s(1-s)}+N(1-2s)\frac{(x\!-\!s)^3\!+\!(y\!-\!s)^3}{3s^2(1-s)^2}-N\left(1-3s(1-s)\right)\frac{(x\!-\!s)^4\!+\!(y\!-\!s)^4}{4s^3(1-s)^3}+\dots\right]\dd x\dd y.
\end{multline}
The previous expression suggests the introduction of the set of variables
\begin{subequations}
\begin{align}
 \xi\coloneqq \sqrt N(x-s),\\
 \eta\coloneqq \sqrt N(y-s),
\end{align}
\end{subequations}
in such a way that the integral becomes
\begin{multline}
\frac{\Gamma^2(N+1)}{\Gamma^2(Ns\!+\!\sfrac{1}{2})\Gamma^2(N\!+\!\sfrac{1}{2}\!-\!Ns)}\iint_0^1\!\frac{\delta\left(\varphi-y+x\right)}{\sqrt{xy(1-x)(1-y)}}\e^{Ns\ln(xy)+N(1-s)\ln[(1-x)(1-y)]}\dd x\dd y\\
 =\frac{1}{2\pi s(1-s)}\left[1+\frac{1+2s(1-s)}{24 N s(1-s)}+\!o\left(\frac{1}{N}\right)\right]^2\iint_{-\infty}^{+\infty} \frac{\delta\left(\varphi-\frac{\eta}{\sqrt N}+\frac{\xi}{\sqrt N}\right)}{\sqrt{\left(1-\frac{\xi}{(1-s)\sqrt{N}}\right)\left(1-\frac{\eta}{(1-s)\sqrt{N}}\right)\left(1+\frac{\xi}{s\sqrt{N}}\right)\left(1+\frac{\eta}{s\sqrt{N}}\right)}}\\
 \times\exp\left[-\frac{\xi^2+\eta^2}{2s(1-s)}+\frac{1-2s}{3s^2(1-s)^2}\frac{\xi^3+\eta^3}{\sqrt N}-\frac{1-3s(1-s)}{4(1-s)^3s^3}\frac{\xi^4+\eta^4}{N}+o\left(\frac{1}{N}\right)\right]\dd \xi\dd\eta.
\end{multline}
\end{widetext}
Introducing $\phi=\sqrt N\varphi$, the distribution in Eq.~\eqref{distphi} is obtained through a series expansion for $N\gg 1$ and performing the Gaussian integrals.

\section{On the minimum of asymptotically uncorrelated exchangeable variables}\label{App:minimum}
Let us consider a vector $\boldsymbol{\Phi}$ of $2N$ continuous variables, $\boldsymbol{\Phi}=(\Phi_1,\dots,{\Phi}_{2N})$, and let us assume that their joint probability distribution density is given by $p(\boldsymbol\Phi)$. We assume that the $2N$ random variables are exchangeable, i.e., such that $p(\Phi_1,\dots,\Phi_{2N})=p(\Phi_{\pi(1)},\dots,\Phi_{\pi(2N)})$ for any permutation $\pi\in\mathcal S_{2N}$ \cite{Aldous1985}. In the following, we will denote the expectation respect to the probability density $p(\boldsymbol\Phi)$ by $\mathbb{E}(\bullet)$. Exchangeability implies
\begin{subequations}
 \begin{align}
  \mu&\coloneqq\mathbb E\left(\Phi_i\right),\\
  \sigma^2&\coloneqq\mathbb E\left[\left(\Phi_i-\mu\right)^2\right],
 \end{align}
which we suppose to remain finite for $N\to+\infty$. We also make the assumption that the $2N$ components of the vector $\boldsymbol{\Phi}$ are weakly correlated, i.e., the covariance is given by
\begin{equation}
 \frac{\rho}{N}\coloneqq\mathbb E\left[\left(\Phi_i-\mu\right)\left(\Phi_j-\mu\right)\right],
\end{equation}
\end{subequations}
in such a way that it vanishes as $O(\sfrac{1}{N})$ when 
$N\to+\infty$, asymptotically recovering independence. Given a subset 
$\{\Phi_{l(i)}\}_{i=1,\dots,K}$ of $1\leq K\leq 2N$ different components of 
$\boldsymbol\Phi$, it is easily seen that
\begin{equation}
 0\leq\mathbb E\left[\left(\sum_{i=1}^{K}\Phi_{l(i)}-K\mu\right)^2\right]=K\sigma^2+K\frac{K-1}{N}\rho.
\end{equation}
For $K=2N$, the relation above implies, for $N\to+\infty$, $\sigma^2+2\rho\geq 
0$, whereas from $K=N$ we obtain $\sigma^2+\rho>0$ in the same limit. Let us 
now partition the $2N$ components of $\boldsymbol\Phi$ in two subsets with the 
same cardinality, for example, the entries with even and odd labels, and consider
\begin{equation}\label{evenodd}
  \epsilon_{\text{o}}\coloneqq\frac{1}{N}\sum_{i=1}^N\Phi_{2i-1},\quad
  \epsilon_{\text{e}}\coloneqq\frac{1}{N}\sum_{i=1}^N\Phi_{2i}.
 \end{equation}
We want to evaluate the mean value of $\epsilon\coloneqq\min\{\epsilon_{\text{o}},\epsilon_{\text{e}}\}$ for $N\gg 1$. The joint distribution of $\epsilon_{\text{o}}$ and $\epsilon_{\text{e}}$ is
\begin{widetext}
\begin{multline}
 P(\epsilon_{\text{o}},\epsilon_{\text{e}})=\mathbb E\left[\delta\left(\epsilon_{\text{o}}-\frac{1}{N}\sum_{i=1}^N\Phi_{2i-1}\right)\delta\left(\epsilon_{\text{e}}-\frac{1}{N}\sum_{i=1}^N\Phi_{2i}\right)\right]\\=\iint_{-\infty}^{+\infty}\frac{\dd\lambda_{\text{e}}\dd\lambda_{\text{o}}}{4\pi^2}\mathbb E\left\{\exp\left[-i\lambda_{\text{e}}\left(\epsilon_{\text{e}}-\frac{1}{N}\sum_{k=1}^N\Phi_{2k}\right)-i\lambda_{\text{o}}\left(\epsilon_{\text{o}}-\frac{1}{N}\sum_{k=1}^N\Phi_{2k-1}\right)\right]\right\}\\
 =\iint_{-\infty}^{+\infty}\dd\lambda_{\text{e}}\dd\lambda_{\text{o}}\frac{\exp\left[-i\lambda_{\text{e}}(\epsilon_{\text{e}}-\mu)-i\lambda_{\text{o}}(\epsilon_{\text{o}}-\mu)-\frac{\lambda_{\text{e}}^2+\lambda_{\text{o}}^2}{2N}\sigma^2-\frac{(\lambda_{\text{e}}+\lambda_{\text{o}})^2}{2N}\rho\right]}{4\pi^2}
 \left[1+o\left(\frac{1}{N}\right)\right].
 \\=\frac{N}{2\pi\sigma\sqrt{\sigma^2+2\rho}}\exp\left[-\frac{N}{2\sigma^2\left(\sigma^2+2\rho\right)}\begin{pmatrix}\epsilon_{\text{e}}-\mu&\epsilon_{\text{o}}-\mu\end{pmatrix}\begin{pmatrix}\sigma^2+\rho&-\rho\\-\rho&\sigma^2+\rho\end{pmatrix}\begin{pmatrix}\epsilon_{\text{e}}-\mu\\\epsilon_{\text{o}}-\mu\end{pmatrix}\right]\left[1+o\left(\frac{1}{N}\right)\right],
\end{multline}
\end{widetext}
which is a bivariate Gaussian distribution function. The distribution $P(\epsilon)$ of the minimum $\epsilon\coloneqq \min\{\epsilon_e,\epsilon_o\}$ is therefore given by
\begin{equation}
 \begin{split}
  P(\epsilon)&=-\frac{\partial}{\partial\epsilon}\iint_{-\infty}^{+\infty}\theta(\epsilon_{\text{e}}-\epsilon)\theta(\epsilon_{\text{o}}-\epsilon)P(\epsilon_\text{e},\epsilon_\text{o})\dd\epsilon_\text{e}\dd\epsilon_\text{o}\\
 &= 2 \int_{\epsilon}^{\infty} P(\hat \epsilon, \epsilon)\dd\hat\epsilon,
 \end{split}
\end{equation}
which gives, in the limits of our approximations,
\begin{multline}
P(\epsilon)=\sqrt{\frac{N}{2\pi}}\frac{\e^{-\frac{N}{2}\frac{(\epsilon-\mu)^2}{\sigma^2+\rho}}}{\sqrt{\sigma^2+\rho}} \\
\times\left[ 1 - \mathrm{erf}\left(\sqrt{\frac{N}{2}}\frac{(\epsilon-\mu)\sigma}{\sqrt{(\sigma^2 + 2 \rho)(\sigma^2+ \rho)}}\right)\right],
\end{multline}
with $\mathrm{erf}(x)\coloneqq\sfrac{2}{\sqrt\pi}\int_0^{x}\exp(-z^2)\dd z$. In this form we see that the function $P(\epsilon)$ depends on $x=\sqrt{N} (\epsilon - \mu) $ and the distribution is a product of an even function and an odd function of $x$. In the spirit of the approximation, the domain $\mathcal D$ of $x$ is substituted, for $N\gg 1$, with the entire real line, up to exponentially small corrections, being the probability distribution concentrated around $\mu$. We immediately obtain
\begin{subequations}
\begin{align}
\int_{\mathcal D} P(\epsilon)\dd\epsilon&\simeq  \int_{-\infty}^\infty\frac{\e^{-\frac{1}{2}\frac{x^2}{\sigma^2+\rho}}}{\sqrt{\sigma^2+\rho}}\frac{\dd x}{\sqrt{2\pi}}=1.\\
N \int_{\mathcal D}(\epsilon-\mu)^2P(\epsilon)\dd\epsilon &\simeq\int_{-\infty}^\infty \, \frac{\e^{-\frac{1}{2}\frac{x^2}{\sigma^2+\rho}}}{\sqrt{\sigma^2+\rho}}\frac{\dd x}{\sqrt{2\pi}}=\sigma^2 + \rho.
\end{align}
More interestingly,
\begin{multline}
\sqrt{N} \int_{\mathcal D} (\epsilon-\mu)P(\epsilon) \dd\epsilon\simeq\\
\simeq- \int_{-\infty}^\infty x\frac{\e^{-\frac{1}{2}\frac{x^2}{\sigma^2+\rho}}}{\sqrt{\sigma^2+\rho}}\mathrm{erf}\left(\sqrt{\frac{1}{2}}\frac{x\, \sigma}{\sqrt{(\sigma^2 + 2 \rho)(\sigma^2+ \rho)}}\right)\frac{\dd x}{\sqrt{2\pi}}   \\ =- \frac{\sigma}{\sqrt{\pi}},\label{minimomedio}
\end{multline}
\end{subequations}
a result showing that, up to higher order terms, there is no influence of the weak correlation $\rho$ on the expectation value of $\epsilon$. This can be seen in a different way introducing the variables
\begin{equation}
 X_-\coloneqq\sqrt{N}\frac{\epsilon_\text{o}-\epsilon_\text{e}}{\sqrt{2}},\quad X_+\coloneqq \sqrt{N}\frac{\epsilon_\text{o}+\epsilon_\text{e}-2\mu}{\sqrt{2}}
\end{equation}
in the distribution $P(\epsilon_\text{e},\epsilon_\text{o})$, which gives the new distribution
\begin{equation}
 P_X(X_-,X_+)=\frac{ \exp \left[ -  \frac{1}{2(\sigma^2 + 2 \rho)} X_+^2 - \frac{1}{2\sigma^2}X_-^2 \right]}{2 \pi \sqrt{ \sigma^2 ( \sigma^2 + 2 \rho)}}.
\end{equation}
Using the fact that, due to exchangeability, $P(\epsilon_\text{o},\epsilon_\text{e})=P(\epsilon_\text{e},\epsilon_\text{o})$, we can replace
\begin{multline}
 2(\epsilon_\text{e}-\mu)\theta( \epsilon_\text{o}- \epsilon_\text{e})\rightarrow (\epsilon_\text{e}-\mu)\theta( \epsilon_\text{o}- \epsilon_\text{e})+(\epsilon_\text{o}-\mu)\theta( \epsilon_\text{e}- \epsilon_\text{o})\\
 =\epsilon_\text{o}-\mu-\left(\epsilon_\text{o}-\epsilon_\text{e}\right)\theta\left(\epsilon_\text{o}-\epsilon_\text{e}\right)\\
 \rightarrow\sqrt{2}\frac{X_+-X_-\theta(X_-)}{\sqrt N}.
\end{multline}
and therefore
\begin{multline}
\sqrt{N}\int_{-\infty}^{+\infty} (\epsilon-\mu)P(\epsilon)\dd\epsilon  =  \\
= 2 \sqrt{N} \iint_{-\infty}^{+\infty} (\epsilon_\text{e}-\mu) \, \theta( \epsilon_\text{o}- \epsilon_\text{e}) P(\epsilon_\text{e}, \epsilon_\text{o})\dd\epsilon_\text{e}\dd\epsilon_\text{o}  \\
= \sqrt{2}\iint_{-\infty}^{+\infty} \left[X_+-X_-\theta(X_-)\right]P_X(X_-,X_+)\dd X_+\dd X_-\\
=-\frac{1}{\sqrt{\pi}\sigma}\int_{0}^{+\infty}X_- \exp \left[-\frac{1}{2\sigma^2}X_-^2 \right] \dd X_-=-\frac{\sigma}{\sqrt{\pi}}.
\end{multline}

\bibliography{biblio.bib}
\end{document}